%
%
%

%

\documentclass[fleqn,usenatbib]{mnras}

\usepackage{newtxtext,newtxmath}

\usepackage[T1]{fontenc}

\DeclareRobustCommand{\VAN}[3]{#2}
\let\VANthebibliography\thebibliography
\def\thebibliography{\DeclareRobustCommand{\VAN}[3]{##3}\VANthebibliography}


\usepackage{graphicx}	
\usepackage{amsmath}	
\usepackage{array}
\usepackage{cleveref}
\usepackage{ulem}
\usepackage{booktabs} 
\usepackage{amsfonts}
\usepackage{bm}
\usepackage{nicefrac}
\usepackage[hypcap=false]{caption}
\usepackage{subcaption}
\usepackage{epsfig}
\usepackage{wrapfig}
\usepackage{lineno}
\newcommand{\copyrightnotice}{\enlargethispage{24pt}
\let\thefootnote\relax\footnote{\copyright\ 2015 Christopher Heil}}

\newcommand{\svee}{{\hbox{\raise.4ex \hbox{${\scriptscriptstyle{\vee}}$}}}}
\newcommand{\swedge}{{\hbox{\raise.4ex \hbox{${\scriptscriptstyle{\wedge}}$}}}}

\newcommand{\Ac}{{\mathcal{A}}}

\newcommand{\CHI}{\hbox{\raise.4ex \hbox{$\chi$}}}

\newcommand{\scap}{\hbox{\raise.25ex
\hbox{${\operatornamewithlimits{\scriptstyle\bigcap}}$}}}
\newcommand{\scup}{\hbox{\raise.25ex
\hbox{${\operatornamewithlimits{\scriptstyle\bigcup}}$}}}

\newcommand{\Dc}{{\mathcal{D}}}

\newcommand{\deltacheck}
  {\overset{\lower.4ex \hbox{${\scriptscriptstyle{\hskip 2 pt\vee}}$}} \delta}

\newcommand{\Fcheck}
    {\overset{\lower.4ex \hbox{${\scriptscriptstyle{\hskip 2 pt\vee}}$}} F}

\newcommand{\fwedgehat}
    {\overset{\lower.6ex \hbox{${\scriptscriptstyle{\hskip 3 pt\wedge}}$}} f}
\newcommand{\fveecheck}
    {\overset{\lower.4ex \hbox{${\scriptscriptstyle{\hskip 2 pt\vee}}$}} f}
\newcommand{\fkcheck}
   {\overset{\lower.4ex \hbox{${\scriptscriptstyle{\hskip 1 pt\vee}}$}} {f_k}}

\newcommand{\raiseprime}{\hbox{\raise.3ex \hbox{${\scriptstyle{\prime}}$}}}

\newcommand{\Gc}{{\mathcal{G}}}
\newcommand{\Gcheck}
    {\overset{\lower.4ex \hbox{${\scriptscriptstyle{\hskip 2 pt\vee}}$}} G}

\newcommand{\gveecheck}
    {\overset{\lower.4ex \hbox{${\scriptscriptstyle{\hskip 2 pt\vee}}$}} g}

\newcommand{\hcheck}
    {\overset{\lower.4ex \hbox{${\scriptscriptstyle{\hskip 2 pt\vee}}$}} h}

\newcommand{\Kcheck}
  {\overset{\lower.4ex \hbox{${\scriptscriptstyle{\hskip 2 pt\vee}}$}} K}

\newcommand{\mucheck}{\overset{\lower.4ex \hbox{${\scriptscriptstyle{\hskip 2 pt\vee}}$}} \mu}

\newcommand{\norm}[1]{\left\|#1\right\|}

\newcommand{\Oc}{{\mathcal{O}}}

\newcommand{\varphicheck}
    {\overset{\lower.4ex \hbox{${\scriptscriptstyle{\hskip 1 pt\vee}}$}}
              \varphi}

\newcommand{\R}{\mathbb{R}}

\newcommand{\Rbar}
  {{\overset{\hskip -0.9 pt \lower\ 1.5pt \hbox{{\rule{6.7pt}{0.45pt}}}} \R}}
\newcommand{\subRbar}
   {{\overset{\hskip -0.8 pt \lower\ 1.5pt \hbox{{\rule{4.5pt}{0.5pt}}}} \R}}

\newcommand{\Sc}{{\mathcal{S}}}

\newcommand{\SNa}{S_N^{\hskip 0.5 pt
  \hbox{\raise.3ex \hbox{\small\textup{a}}}}}
\newcommand{\SNaa}[1]{S_#1^{\hskip 0.5 pt
  \hbox{\raise.3ex \hbox{\small\textup{a}}}}}
\newcommand{\SNo}{S_N^{\hskip 0.5 pt
  \hbox{\raise.3ex \hbox{\small\textup{o}}}}}
\newcommand{\SNt}{S_N^{\hskip 0.5 pt
  \hbox{\raise.3ex \hbox{\small\textup{t}}}}}

\newcommand{\T}{{\mathbb{T}}}

\newcommand{\zeroveecheck}
    {\overset{\lower.4ex \hbox{${\scriptscriptstyle{\hskip 0.5 pt\vee}}$}} 0}

\hyphenation{Bun-ya-kow-ski}
\hyphenation{spatial}

\crefname{figure}{Figure}{Figures}
\crefname{appendix}{Appendix}{Appendices}
\crefname{align}{Eq.}{Eqs.}
\crefname{equation}{Eq.}{Eqs.}
\crefname{section}{section}{sections}
\crefname{subsection}{section}{sections}
\crefname{subsubsection}{section}{sections}
\crefname{tabular}{Table.}{Tables.}
\newcommand{\coo}{\ensuremath{\mathrm{CO_2}}}

\newtheorem{theorem}{Theorem}[section]






\title[CBP Obliquity Variations]{Low Spin-Axis Variations of Circumbinary Planets}


\author[R. Chen et al.]{Renyi Chen,$^{1}$ Gongjie Li,$^{2}$ and Molei Tao$^{1}$ 
\\
$^{1}$School of Mathematics, Georgia Institute of Technology, Atlanta, GA 30332, USA \\
$^{2}$Center for Relativistic Astrophysics, School of Physics, Georgia Institute of Technology, Atlanta, GA 30332, USA
}


\date{Accepted XXX. Received YYY; in original form ZZZ}

\pubyear{2015}

\begin{document}
\label{firstpage}
\pagerange{\pageref{firstpage}--\pageref{lastpage}}
\maketitle

\begin{abstract}
Having a massive moon has been considered as a primary mechanism for stabilized planetary obliquity, an example of which being our Earth. This is, however, not always consistent with the exoplanetary cases. This article details the discovery of an alternative mechanism, namely that planets orbiting around binary stars tend to have low spin-axis variations. This is because the large quadrupole potential of the stellar binary could speed up the planetary orbital precession, and detune the system out of secular spin-orbit resonances. Consequently, habitable zone planets around the stellar binaries in low inclination orbits hold higher potential for regular seasonal changes comparing to their single star analogues. 
\end{abstract}

\begin{keywords}
planetary system dynamics -- circumbinary planets -- spin-axis dynamics -- secular theory -- habitability
\end{keywords}



\section{Introduction}

Orientation of planetary spin axis plays an important role in shaping the climate of a planet. For Earth, the spin-orbit misalignment, known as obliquity, varies between $21.4^\circ$ and $24.4^\circ$, and this together with the variations in the orbit's eccentricity (oscillating up to $\sim 0.06$) is the main cause of the Milankovitch cycles of Earth climate variations, significant consequences of which include glacial cycles \citep[e.g.,][]{Milankovitch41, Hays76, Imbrie80, Raymo97}. For Mars, the spin axis exhibits wild variations that can reach $\sim 60^\circ$  \citep{Ward73, laskar1993chaotic, Touma93}, which has drastic effects: with a high obliquity of $\sim 45^\circ$, Mars' atmosphere could have partially precipitated due to \coo condensation; this is consistent with the glacier-like land form discovered in the tropics and mid-latitude of Mars  \citep[e.g.,][]{Head05, Forget06}.

Why does Earth have low obliquity variations, different from Mars? It was found that the Moon stabilizes Earth obliquity, and in the absence of the Moon, Earth spin-axis would be chaotically varying with obliquity ranging between $\sim 0 -  50^\circ$ at $\lesssim$Myr timescales \citep{Laskar93a, Li14}. However, this criterion is not necessary for planets in general, such as for the discovered exoplanets. Beyond the Solar System, the rapidly growing number of detected exoplanets calls for a better understanding of the uniqueness of life and searches for habitable worlds in the universe. Obliquity of exoplanets becomes one essential feature related to habitability, as it directly determines the insolation distribution and affects the snowball transitions \citep{Quarles21}. This, for instance, motivated the search for exomoons, as they are thought to guarantee stable obliquities \citep{agol2015center, teachey2018evidence}. Yet, exoplanets may not need a moon to stabilize their obliquity, since they could have very different orbital architectural properties that allow weak coupling with their planetary companions (e.g., Kepler 186-f and Kepler 62-f \citep{Shan18, Quarles20}). Moreover, a massive moon could destabilize obliquities, such as an Earth-like analogue in Alpha Cen \citep{Quarles19}.

What other mechanism(s) can stabilize obliquity then? Here we show that near coplanar circumbinary planets have much more stable obliquities than their single star counterparts. Although different from the common impression based on the Solar System, planets around stellar binaries are not scarce in the universe -- roughly half of Solar-type stars are in binaries \citep{Duquennoy91,Raghavan10}, and planets around binaries with periods less than $\sim 300$days are as prevalent as those around single stars  \citep{Armstrong14, li2016uncovering}. Such circumbinary planets show very different dynamical properties from planets around single star. For instance, it was known that regions of dynamical stability become smaller than those around single stars, due to strong perturbations from the stellar companions \citep{Holman99, quarles2020orbital}. More new dynamical properties still await discovery; for instance, a main technical contribution of this paper is the quantification of intriguing behaviors of spin-axis dynamics of circumbinary planets. This leads to our demonstration of stabilized obliquity.


To explain why binary host stars can stabilize the obliquity of a circumbinary planet, let's start with the physical root of obliquity variations. In general, large obliquity variations manifest due to resonances between gravitational torque from the host star and the orbital perturbations from a planet's companions. Specifically, if there was just one star, the planetary spin axis is torqued by the only star. This would lead to spin-axis precession in the same way as the precession of a top under Earth's gravity. If the spin axis precession frequency coincides with the orbital oscillation frequency, the spin axis can vary with large amplitude due to spin-orbit resonance. For a single star system, a planet's orbit can be perturbed by its planetary companions, resulting in secular spin-orbit resonances and large obliquity variations. Having two stars in the center, however, completely changes the physics. One may wonder if having a second star could correspond to a perturbation similar to that due to a planetary companion, hence creating secular spin-orbit resonances. However, the truth is rather the contrary, as the binary stabilizes rather than destabilizes the planetary obliquity. This is because (i) there are torques from both stars, and (ii) the orbital precession is much larger due to the stellar binary quadrupole potential. Therefore, the obliquity variation can be very different from the well understood single-star case, and the traditional framework for analysis cannot be directly applied. A new framework is constructed in this article and it leads to the discovery of stabilized obliquity. We will first begin with a heuristic demonstration, followed by rigorous analyses based on the development of an analytical secular theory, and then validated by state-of-the-art numerical simulations.

\section{Heuristic Calculation}

\begin{figure}
\centering
\includegraphics[width=0.8\linewidth]{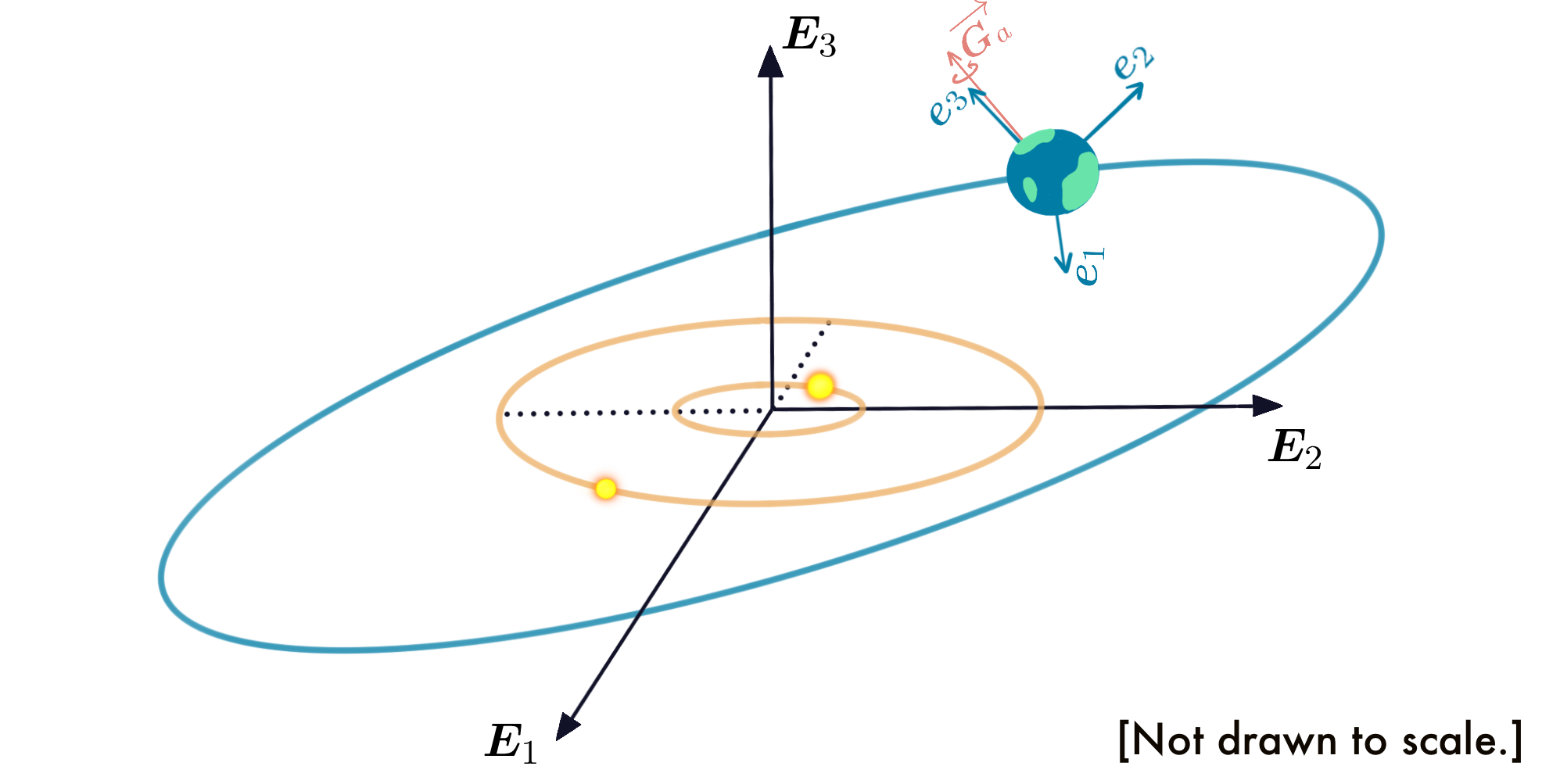}\\
\includegraphics[width=0.8\linewidth]{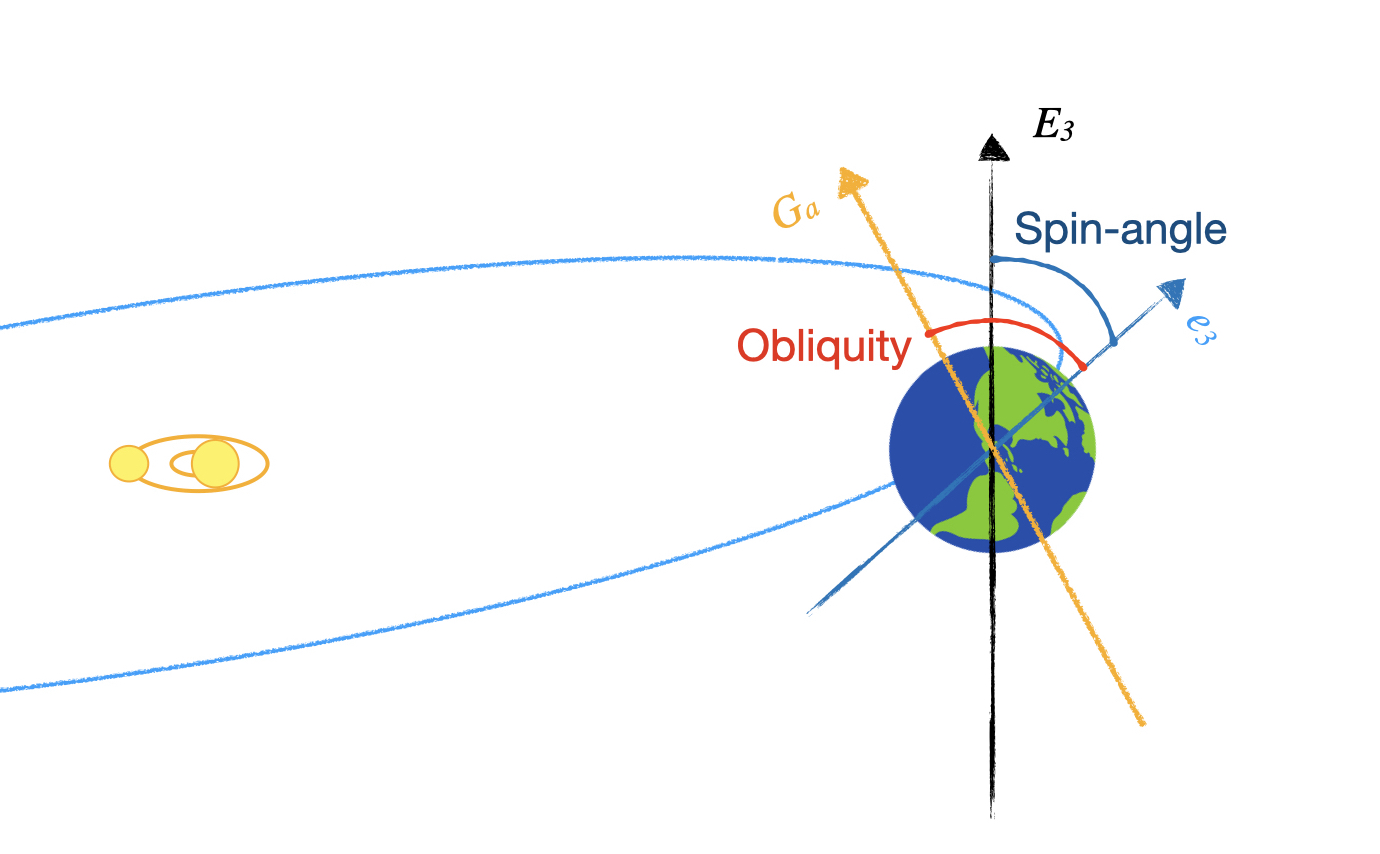}
\caption{\textbf{Upper panel: }The demonstration of a planet orbiting around the stellar binary. \textbf{Lower panel:} spin-axis orientation. The spin-angle measures the tilt of the spin-axis relative to the fixed inertial frame, and the obliquity (spin-orbit misalignment) measures the tilt of the spin-axis relative to the planetary orbital normal direction}. \label{fig:ref_frame}
\end{figure}

We show the set up of the problem in Fig. \ref{fig:ref_frame}, where we consider the spin dynamics of the planet orbiting around a stellar binary. The two yellow dots represent the stellar binary and the blue one represents the planet. The upper panel shows the inertial reference frame: $\{\bm{E_1, E_2, E_3}\}$, as well as the orientation of the spin axis $\{\bm{ e_1, e_2, e_3}\}$. The lower panel shows the parameters we use to characterize the tilt of the spin-axis: the spin-angle, which measures the tilt of the spin axis relative to the inertial frame and the obliquity (spin-orbit misalignment), which measures the tilt of the spin-axis relative to the orbital normal. Obliquity determines the stellar radiation onto the planet and affect the seasonal variations of the planet. Both the orbital variations and the spin-angle variations contribute to the obliquity variations. In this paper, we consider the variations of the spin-angle, as well as the consequent evolution of the obliquity due to spin-angle variations.

Planetary spin-axis could vary due to two mechanisms. First, the spin-axis could vary due to the strong spin-orbit coupling, as the spin-axis rapidly precesses around the orbital plane. This allows the spin-axis to follow the change of the orbit. The second mechanism is due to secular spin-orbit resonances, which occurs when the spin precession frequency matches that of the orbital nodal precession. Different from planets around a single star, we find that spin-axis variations of circumbinary planets are typically low. This is because the stellar binary in the center produces a much larger quadrupole moment, which leads to faster orbital precession comparing to the cases of a single star. This both disallow the planet spin-axis to precess and to follow the orientation change of the orbital normal, and detunes the system out of secular spin-orbit resonances. Thus, this leads to low variations in planetary spin-axis. In this section, we use a heuristic calculation to demonstrate the low spin-axis variations of circumbinary planets. 

The separation of the stellar binary components and their masses determine the quadrupole moment and the possibilities of secular spin-orbit resonances for Earth-like planets. Thus, assuming the planet is Earth-like and located at one Earth flux, we can estimate the period of the stellar binary, which could drive orbital precession to excite secular spin-orbit resonances. For a planet orbiting around a stellar binary, its orbit precesses around the orbits of the stellar binary, and the spin-axis precesses around the planetary orbital orientation. When the orbital precession frequency matches that of the spin-axis, spin-orbit resonance occurs which can drive large obliquity variations. 

The frequency of the orbital precession can be estimated as \citep{li2016uncovering}: 
\begin{align}
    \dot{h_d} = \frac{3}{4 \cos{(\delta i)}} n \Big(\frac{a_b}{a_p}\Big)^2 \frac{M_{*1} M_{*2}}{(M_{*1}+M_{*2})^2} ,
    \label{eqn:orbprec}
\end{align}
where $h_d$ represents the longitude of ascending node of the planetary orbit, and $\delta i$ is the mutual inclination between the orbit of the stellar binary and that of the planet, and $n$ is the orbital frequency, $a_b$ and $a_p$ are the semi-major axes of the stellar binary and the planet, and $M_{*1}$ and $M_{*2}$ are the masses of the stellar binary. This expression is accurate to the second order in $\frac{a_b}{a_p}$.

On the other hand, the planet spin-axis precession can be estimated too. In the case of single host star, this precession frequency due to the torque from a central star can be expressed as $\alpha \cos{\epsilon}$ \citep[e.g.,][]{laskar1993chaotic}, where
\begin{align}
    \alpha = \frac{3 G}{2\omega} \frac{M_{*}}{a_p^3} E_d ,
\end{align}
$\epsilon$ is the spin-orbit misalignment (obliquity), $\omega$ is the spin rate of the planet, $M_*$ is the mass of the central star and $E_d$ is the dynamical ellipticity of the planet. For Earth-like planets (with mass and interior structure similar to Earth), $E_d = E_{d,\oplus} (\omega/\omega_\oplus)^2$, where $E_{d, \oplus} = 3350\times10^{-6}$ and $\omega_\oplus$ are the dynamical ellipticity and rotation rate of the Earth separately.

To obtain a rough estimation of the spin-precession frequency in the case around a stellar binary, we substitute the mass of the central star with the total mass of the stellar binary, while assuming the average distance to each of the stars are roughly $\sim a_p$. Then, the spin-axis precession frequency becomes $3n^2/(2\omega)E_d$.

\begin{figure}
\centering
\includegraphics[width=0.8\linewidth]{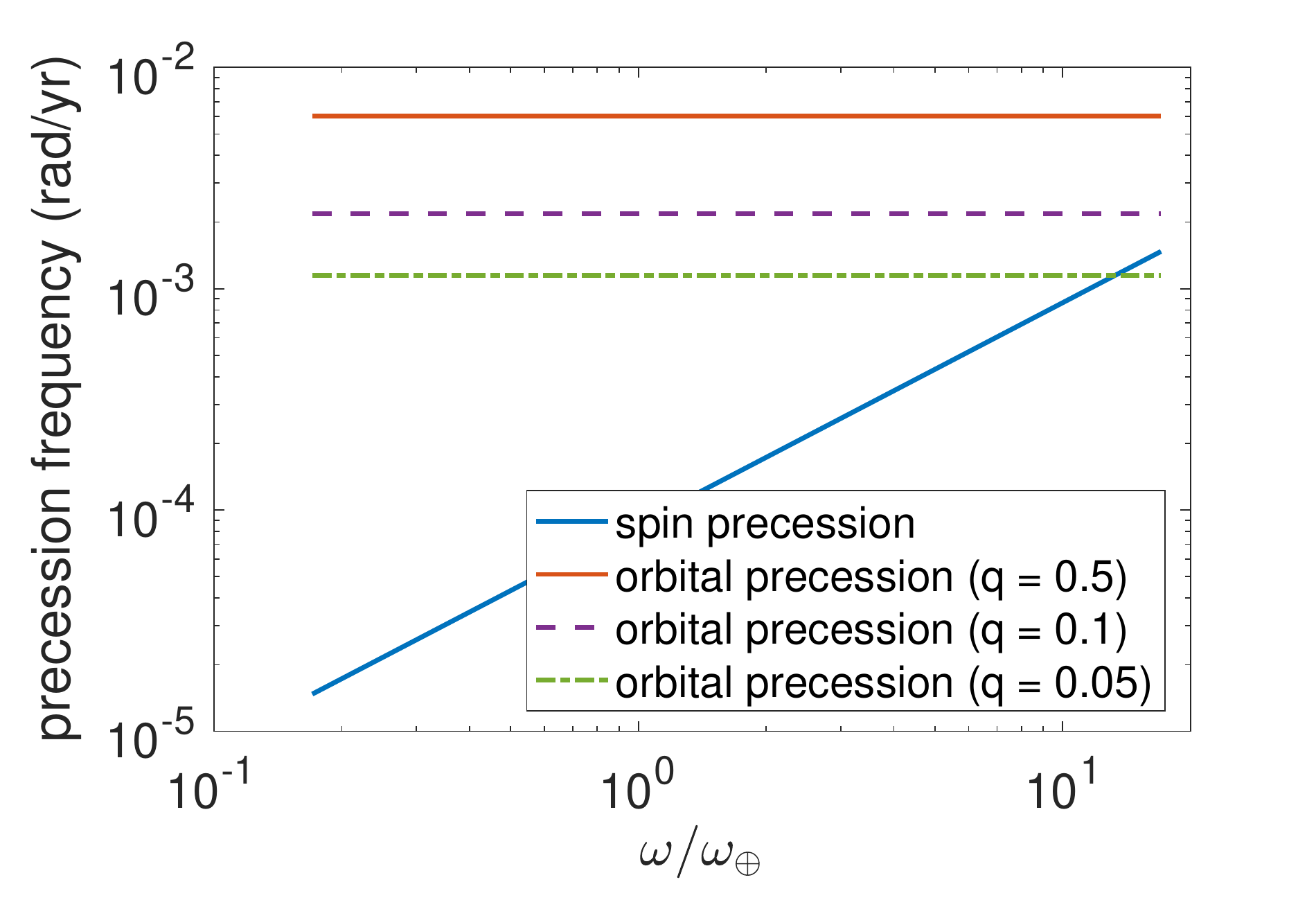}
\vspace{-4mm}
\caption{Spin-axis precession frequency and orbital nodal precession frequencies of an Earth-like planet around a stellar binary of total mass of 1 solar mass with a period of seven days. The x-axis represents the spinning rate of the planet. Spin-axis precession frequency is typically much lower than that of the orbit, unless for extreme mass ratio binaries with fast planet rotators close to the break-up spinning rate.
}\label{fig:freq}
\end{figure}

We plot in Figure \ref{fig:freq} the orbital and spin-axis precession frequencies for an Earth-like planet around a seven-day orbital period stellar binary. Specifically, we set the total mass of the stellar binary to be 1 solar mass, and the planetary mass to be one Earth mass. The separation between the planet and the stellar binary center of mass is set to be 1au. Different line types represent different stellar binary mass ratios ($q = m_1/(m_1 + m_2)$). The x-axis shows the rotation frequency of the planet, and the maximum frequency in the figure corresponds to the break-up spinning rate. It shows that the spin-axis precession rate is typically much lower than that of the orbit, except for extreme mass ratios ($q<0.05$) with a fast planet rotator close to the break-up spinning rate.

\begin{figure}
\centering
\includegraphics[width=0.8\linewidth]{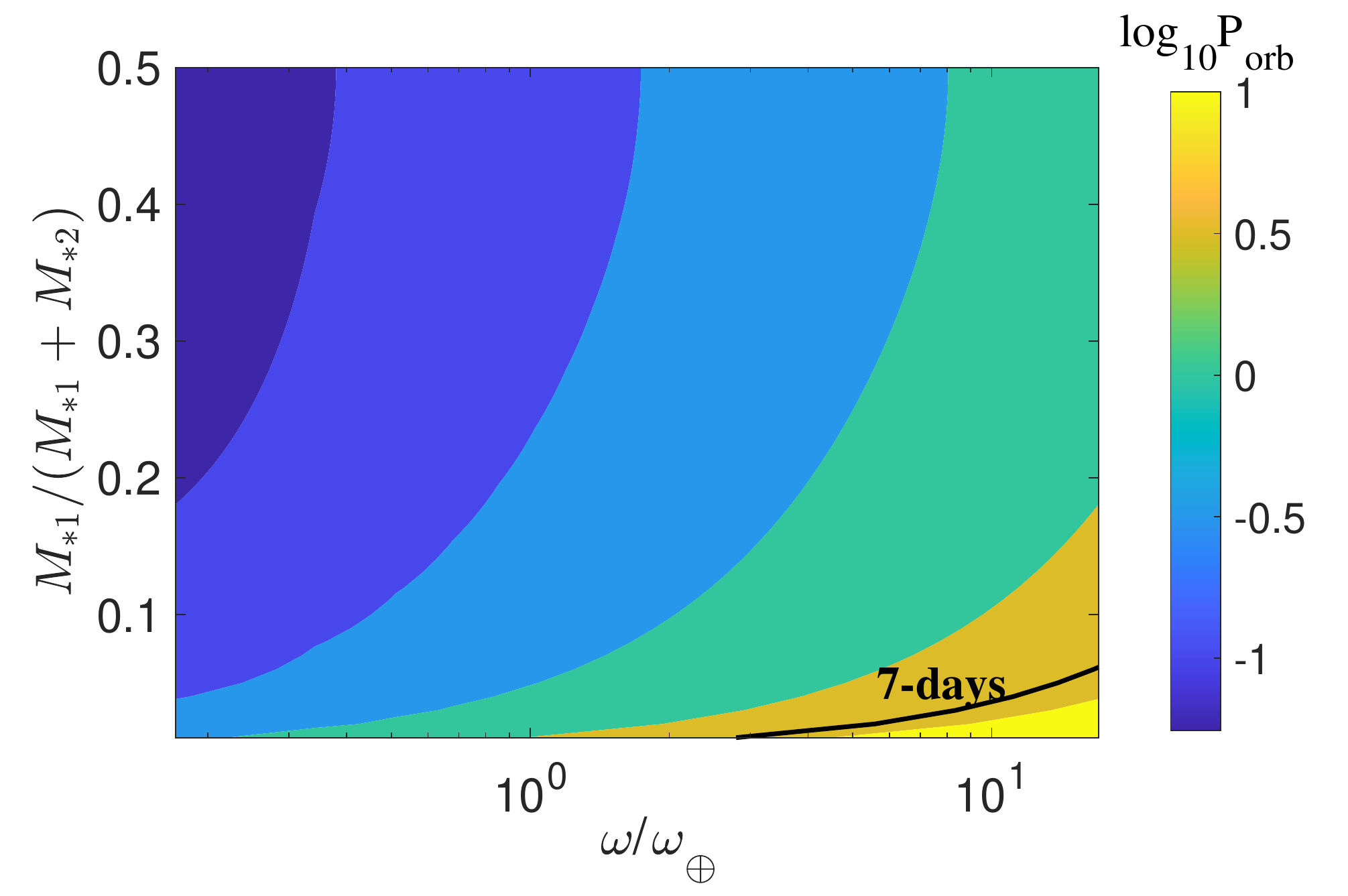}
\vspace{-4mm}
\caption{Stellar binary period ($P_{orb}$ in days) that leads to planet secular spin-orbit resonances. Secular spin-orbit resonances could only occur for closely separated stellar binaries ($P_{orb} \lesssim 1$ day), unless the mass ratio of the stellar binary is below $\lesssim 0.05$ for fast rotating planets with rotating period of $\lesssim 5$ hours.
}\label{fig:spin_orb}
\end{figure}

To explore a larger parameter space, Figure \ref{fig:spin_orb} plots the stellar binary period as a function of the planets rotation rate ($\omega/\omega_\oplus$) and the stellar binary mass ratio when the precession frequencies of the spin-axis and the orbit equal. We set the total mass of the stellar binary to be 1 solar mass. We assume the luminosity of the stars follows the mass-luminosity relation ($L\propto m^{3.5}$) for simplicity. The color represents the stellar binary period, which could generate planetary orbital precession large enough for secular spin-orbit resonances and large spin-axis variations. It illustrates that the stellar binaries that typically host circumbinary planets (with period beyond seven days) are too large to excite spin-axis variations due to both secular spin-orbit resonances and strong spin-orbit coupling, and thus we would expect circumbinary planets to have low spin-axis variations. 

The qualitative results are the same for different total masses of the stellar binaries. The maximum rotation rate we include in the plot is the breakup angular velocity ($\sqrt{G m_p/R_p^3}$). Figure \ref{fig:spin_orb} shows that secular spin-orbit resonances could only occur when stellar binary period is less than 1 day, unless the mass ratio of the stellar binary is very small ($q \le 0.05$) and when the planet rotates much faster (($\gtrsim 5 \omega_\oplus$). However, so far, no circumbinary planets have been detected that orbit stellar binary with a period less than seven days. This is likely due to the formation processes that leads to the orbital decay of the stellar binary to short periods via Kozai-Lidov oscillations \citep{Armstrong14,Miranda15,Martin15}. Thus, it is challenging to excite the obliquity of circumbinary planets via secular spin-orbit resonances.

\section{Single Planet around Binary Stars}
We now develop a rigorous secular theory to gain a deeper understanding of the circumbinary planet's spin dynamics. This will also quantitatively show why having a second star affects planetary spin in ways drastically different from having a planetary companion, even though both setups involve three bodies. 

\subsection{Secular Theory} 
In order to accurately characterize the spin dynamics, the planet is modeled as a rigid body instead of a point mass (as shown in Fig. \ref{fig:ref_frame}). Existing spin-axis secular theories in the literature focused on single star systems, where the mean anomaly increases linearly with time \citep{Farago09}. This is no longer the case for circumbinary planets, whose mean anomalies change at time-varying rates. 

To correctly characterize the long-term effective behavior, we adopt rigorous normal-form-based treatment for averaging over multiple nonlinear angles (i.e., multi-phase averaging), which physically corresponds to averaging over time instead of the mean anomaly and spin oscillations of a planet, in a Hamiltonian setup. For more details on the normal form technique, see for instance \citep{lochak2012multiphase,sanders2007averaging}; we also note two distinctions between this work and the rich field of restricted 3-body dynamics \citep{szebehely2012theory,llibre1990elliptic,koon2000heteroclinic,kumar2021high} to which these tools also apply are: (i) the latter considers point masses only, thus no spin; (ii) orbital resonance is absent or negligible in our considered physical parameter range. 

The original, unaveraged Hamiltonian in canonical variables (namely Delaunay variables for the orbit and Andoyer variables for the spin-axis) is denoted by 
\begin{align}\label{eq:H_org}
\begin{split}
      H \left( \mathcal A, \mathcal D, l_*, L_* \right) = T^{linear}\left( \mathcal D \right) + T^{rot}\left( \mathcal A \right) + V\left( \mathcal A, \mathcal D, l_*, L_* \right)
\end{split}
\end{align}
where $T^{linear}$ is the linear kinetic energy of the rigid body, $T^{rot}$ is the rotational kinetic energy of the rigid body in Andoyer variables (see Appendix C) and $V$ is the gravitational potential generated by the binary stars on the oblate planet, $l_*$ is the mean anomaly of the inner orbit, and $L_*$ is its conjugate momentum. 

Appropriately averaging over the stellar binary's orbital phase, planet's orbital phase and planet spin, we can obtain the secular dynamics of the planetary obliquity, governed by equation \ref{eq:approximated_dynamics} (See Appendix D for derivations). 

The secular dynamics of the planetary obliquity is governed by the following ODE system (see Appendix D):
\begin{align}
\left\{
    \begin{aligned}
        \dot{X} &= \sin(h) \left[ C_1 X \sqrt{1-X^2} + 4 C_2 \cos (h) (1-X^2) \right], \\
        \dot{h} &= \frac{C_1 \cos (h) \left( 1-2X^2 \right)}{\sqrt{1-X^2}} - 2 C_2 X \cos(2 h)  + C_3 + 2 C_4 X, \\
    \end{aligned}
\right.
\label{eq:approximated_dynamics}
\end{align}
where $X=H_a/G_a$ is the cosine value of the angle between spin's angular momentum and $\bm E_3$ (or the spin angle as illustrated in Figure \ref{fig:ref_frame}), and $h=h_d-h_a$ is the phase difference between the precessions of the planet's orbit and the planetary spin. This dynamics (equation \ref{eq:approximated_dynamics}) corresponds to the Hamiltonian 
\begin{align}
\begin{split}
    H(X, h) = & C_1 \sqrt{1-X^2} X \cos (h) + C_2 \left( 1-X^2 \right) \cos(2h) \\
              & +C_3 X+C_4 X^2. \\
\end{split}
 \label{eq:avg_H}
\end{align}

Similar to the single star case \citep{Laskar93a}, $X, h$ together give a secular approximation of the planetary spin dynamics. We note that planetary orbital dynamics is not affected by the spin-axis variations in the secular limit. This can be understood intuitively since the angular momentum of the planetary spin is much lower than that of the orbit. However, the spin-axis evolution is sensitive to the orbital oscillations. In particular, in the secular limit for planets around the stellar binary, planetary orbital shape (semi-major axis $a$ and eccentricity $e$) and inclination are constants over time, and the orbital orientation (argument of pericenter $g_d$ and $h_d$) is changing slowly due to the stellar binary quadrupole potential \citep{Farago09}. $C_{1-4}$ in our secular equations of motion (equation \ref{eq:approximated_dynamics}) depend on the constant parameters of the planetary orbit and the masses of the planet and the stars. Different from the single star case, the Hamiltonian is more complicated, and $C_3$ goes to zero for a single star system.
 
For circumbinary planets in physical parameter range, the value of $C_3$ is much bigger than those of $C_{1,2,4}$ (see e.g., Figure \ref{fig:c1c2c3c4vsip}) unless the planet's orbit is nearly orthogonal to the binary's. These values, according to our secular equations of motion, lead to $X$  being nearly a constant and $h$ linearly changing with time (note $h \in \T$ , not $\R$), which corresponds to precession of the spin axis along the binary orbital normal, and near constant spin angle. This is how the developed secular theory elucidates the low obliquity variation. 

\begin{figure}
    \includegraphics[width=0.8\linewidth]{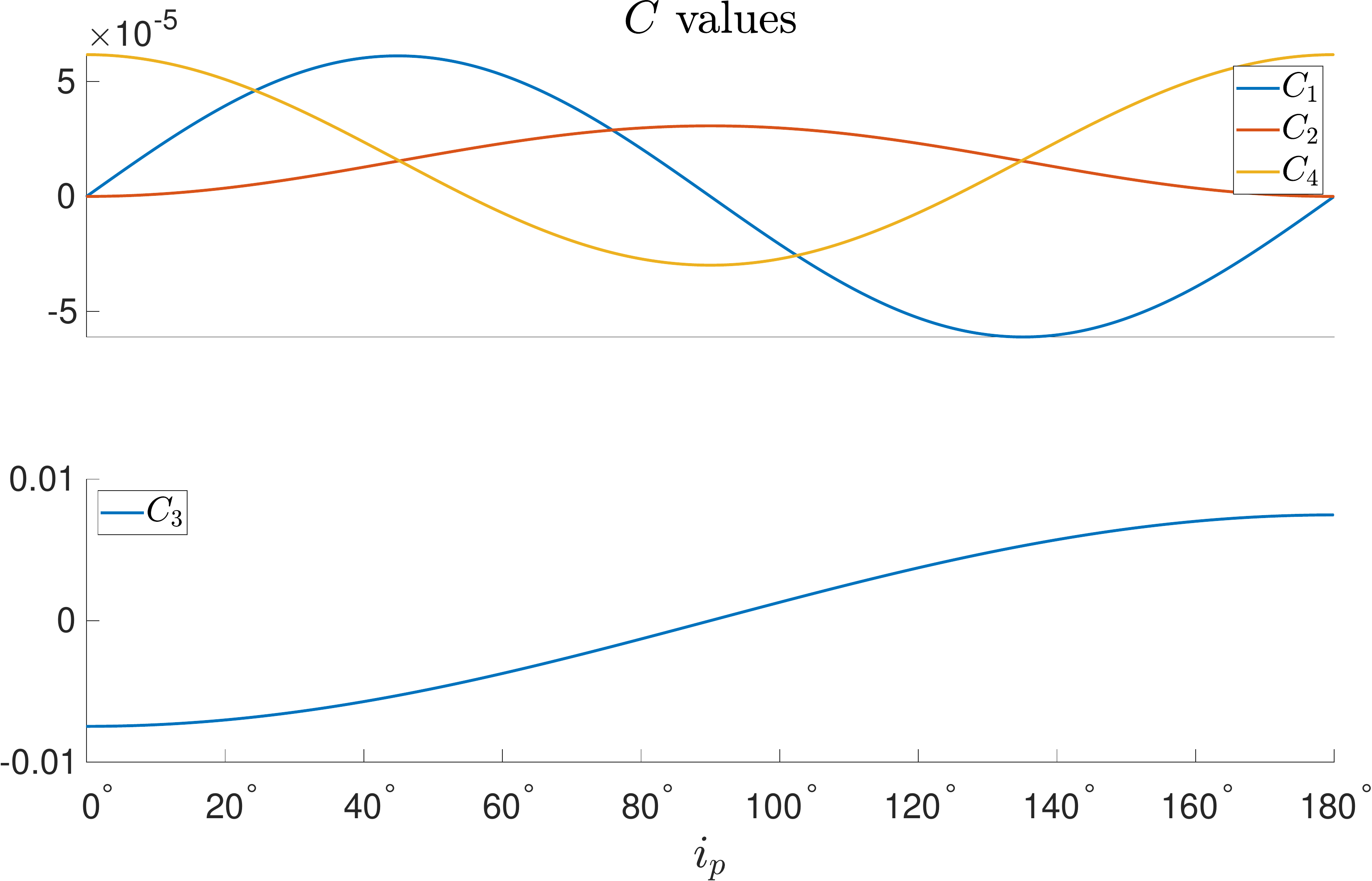}
    \centering
    \caption{How $C_1$,$C_2$,$C_3$,$C_4$ vary with the mutual inclination $i_p$ in \textit{Kepler-47} system.
    }\label{fig:c1c2c3c4vsip}
\end{figure}

The physics of the nearly orthogonal cases, on the other hand, are only more different from that in single star systems (note that it is not yet clear whether such nearly orthogonal cases are common or even existent, as observations are still insufficient; in this sense, this part could just be a theoretical prediction). More precisely, if $i_p$ increases while other parameters of the system remain fixed, the dynamical system given by equation \ref{eq:approximated_dynamics} will undergo two bifurcations sequentially and switch between topologically different dynamics. For Kepler-47 system as an example (see Figure \ref{fig:bifurcation_change_i}), the 1st bifurcation occurs near $i_p=89^\circ$, where the number of fixed points jumps from 2 to 4, and the 2nd bifurcation occurs near $i_p=89.99^\circ$, where the number of fixed points jumps again from 4 to 6. The first bifurcation exists in single star systems as well and is due to the $1:1$ spin-orbit resonance \citep{neron1997long,Saillenfest19}; however, the second bifurcation, and all the dynamical features after $i_p$ exceeds this critical value, are unique to the circumbinary system. 

\begin{figure}
    \centering
    \includegraphics[width=0.8\linewidth]{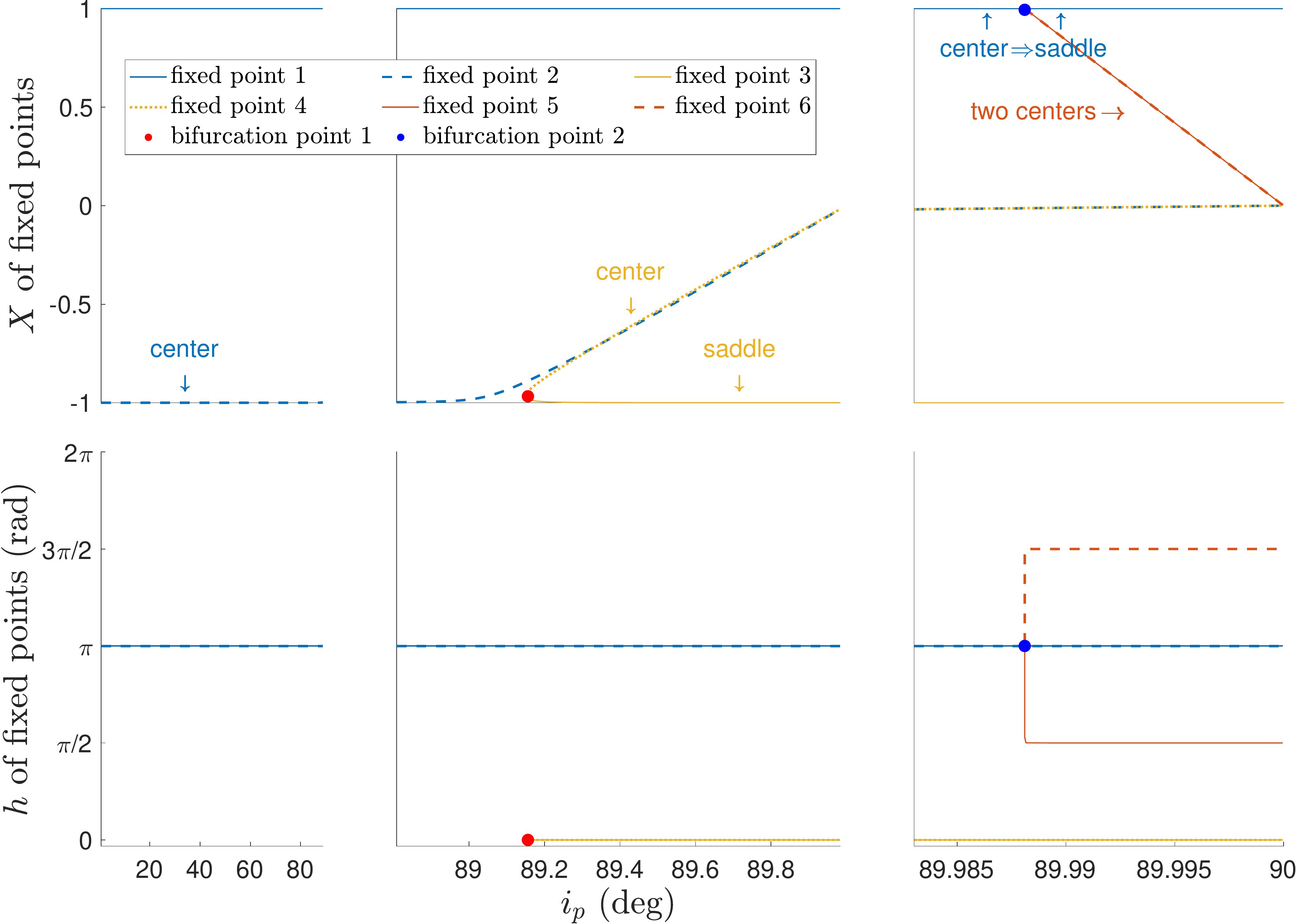}
    \caption{Bifurcation diagram with the varying parameter being $i_p$.
        Three plots with different scalings of $i_p$ axes are concatenated together.
        Each family of fixed points is denoted by a color. Dots indicate bifurcation locations.
    \textbf{Bifurcation point 1} is a Hamiltonian saddle-node bifurcation; \textbf{Bifurcation point 2} is a Hamiltonian pitch-fork bifurcation.
    }\label{fig:bifurcation_change_i}
\end{figure}

In particular, the phenomenon unique to circumbinary planets is the creation of a second saddle point out of a previously stable fixed point, when the mutual inclination between the inner and outer orbits reaches nearly $90^\circ$. Accompanying this change of stability is the emergence of two additional centers, which correspond to configurations for which torques from the two stars cancel out. This phenomenon is analogous to the well-known (supercritical) pitchfork bifurcation, in which a stable fixed point loses its stability and two more stable fixed points emerge nearby, however in a Hamiltonian setting (see \cite{lyu2021hamiltonian} for a similar but more local case in chemical reactions). Physically, this means a previously stable spin-axis configuration changing into an unstable one, accompanied by two libration-like stable regions emerging nearby, encircled by larger scale transports outside. For illustration, the phase portraits corresponding to fixed $i_p$ values respectively in the three different regimes in Figure \ref{fig:bifurcation_change_i} are plotted in Figure \ref{fig:phase_portrait}. To better observe the spin dynamics in 3D space, these portraits are wrapped from 2D phase portraits onto the sphere (Figure \ref{fig:phase_portrait}), demonstrating the trajectories of the spin direction (with respect to the planetary orbit) as the level sets in the sphere. 

\begin{figure}
    \centering
    \includegraphics[width=0.8\linewidth]{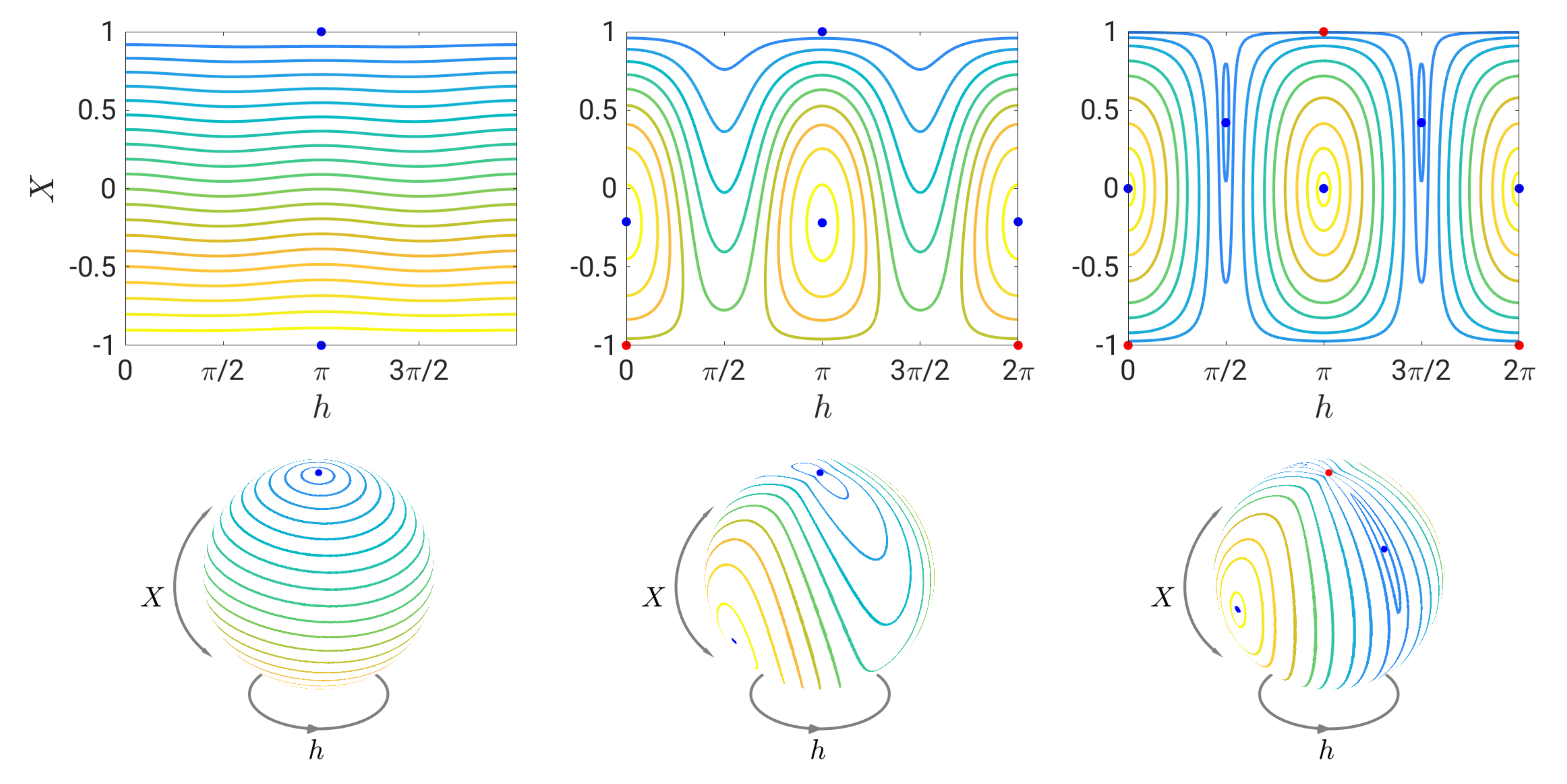}
    \caption{Phase diagrams of the $(X,h)$ dynamics in equation \ref{eq:approximated_dynamics} with the varying parameter $i_p$ (in Figure \ref{fig:bifurcation_change_i}) fixed as $i_p=80^\circ,\,89.8^\circ,\,89.995^\circ$ from the 1st column to the 3rd column  respectively.
    The phase portraits in the second row are the maps of corresponding phase portraits above in spheres with $X$ the latitude coordinate and $h$ the longitude coordinate. Dots indicate fixed pts, blue being nonlinear centers and red being saddles.}
    \label{fig:phase_portrait}
\end{figure}

In summary, our main discovery, namely the reduced spin-angle variation due to the existence of two host stars, is warranted by the quantitative behavior of the secular dynamics (equation \ref{eq:approximated_dynamics}) for inclinations away from $90^\circ$, and this mechanism persists till rather large inclination values (e.g., $\le 88^\circ$). The behavior near $90^\circ$, on the other hand, is not only quantitatively but in fact topologically different from the single star cases, but at this moment it is just a theoretical prediction of our secular theory. 

\subsection{Case study of the Kepler-47 system based on both the secular theory and full rigid-body simulations} 
Now we consider an illustrative example to quantify the spin-axis variations using both the secular theory and the state of art rigid-body simulation package. For relevance to habitability, we investigate in Earth-like planets in near circular orbits. We use the stellar binary Kepler-47AB as the illustrative example ($a_*=0.0836AU$, $M_{*_1} = 1.043 M_\odot$, $M_{*_2} = 0.362 M_\odot$)~\citep{orosz2012kepler}, but only one planet in the system is included for now (see later section for the multi-planetary case). The planet mass, semi-major axis/rotation period, eccentricity, and oblateness are set to be the same as that of the Earth; however, a wide range of inclinations is considered. 

We use both the secular theory and rigid-body simulations to investigate the problem. For the rigid-body simulations, we use the recently developed \texttt{GRIT} package \citep{chen2021grit}. The simulation is accurate and efficient as it is based on high-order symplectic Lie-group integrator, which can be viewed as a modern variation of the seminal Lie-Poisson integrator proposed in \citep{Touma94}. The consideration of rigid bodies instead of point masses allows high-fidelity numerical simulations of the spin-orbit coupling of circumbinary planetary systems, which serve two purposes: (i) to validate our secular theory, and (ii) to demonstrate, by sweeping over the parameter space, that stabilization of the long-term obliquity dynamics is a rather general phenomenon for low inclination planets relative to the stellar binary (note that higher planetary inclination can lead to large obliquity variations, due to precession of the planetary orbit around the orbital normal of the stellar binary).

The upper panel of Figure \ref{fig:spin_angle_vs_t} illustrates the dynamical variation of the spin-axis, measured by the tilt of the spin-axis relative to the invariable plane (the plane normal to the total angular momentum of the system). The solid lines represent the rigid-body simulations, and the dashed lines represent our secular results. It shows that the secular approximation agrees very well with that of the rigid body simulations. More importantly, it can be seen that the spin-angle variations remain very small ($\le 1^\circ$) for all mutual inclinations. This is consistent with our theoretical prediction (see above `Secular Theory' subsection and additional `Secular Theory' subsection below) that the effect of a stellar `perturber' is different from that of planetary perturbers; for example, both Moon-less Earth and Mars are known to have large and chaotic spin-angle variations as a result of secular spin-orbit resonances due to the perturbation of the planetary companions \citep{Touma93,Laskar93a}. These analytical and numerical results also agree with the Heuristic Intuition: when orbiting around a stellar binary, the spin precession frequencies are much slower than the orbital precession frequency due to the large quadrupole potential of to the stellar binary in the center. Thus, the secular spin-orbit resonances are absent, and the spin-angle only has very small oscillations. 

\begin{figure}
\centering
\includegraphics[width=0.8\linewidth]{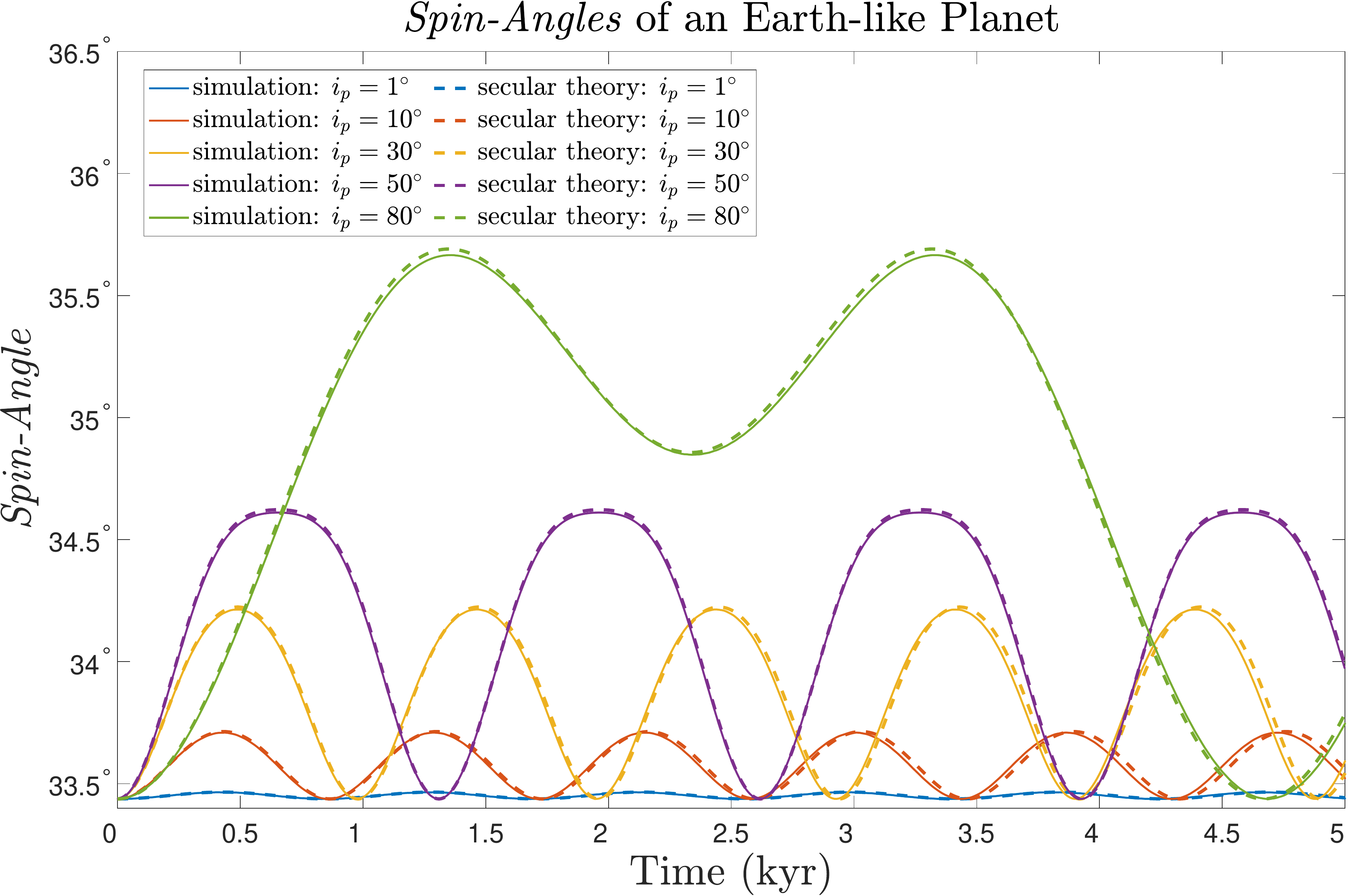}
\includegraphics[width=0.8\linewidth]{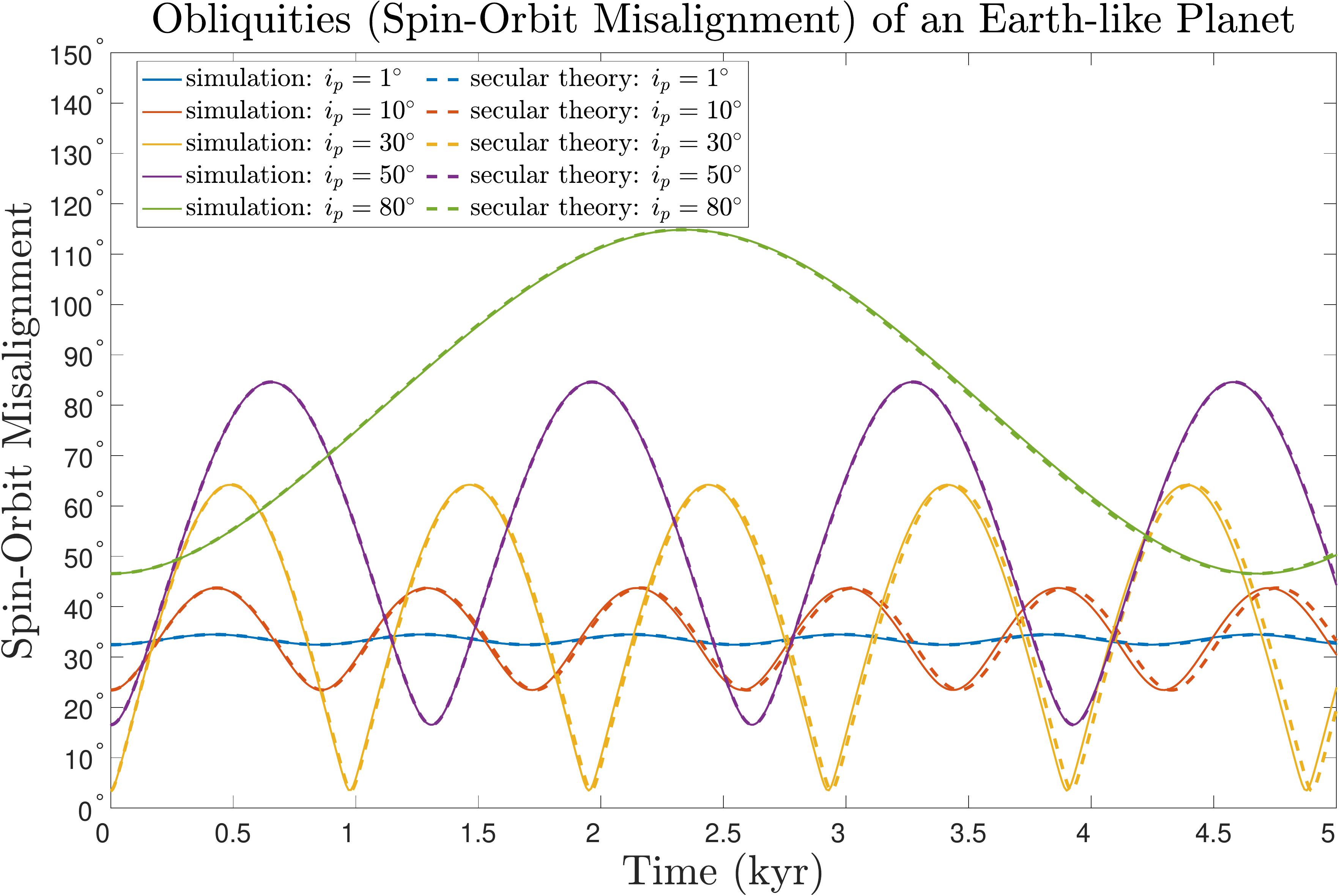}
\caption{\textit{Spin-Angle} (upper panel) and obliquity (lower panel) versus time for an Earth-like planet orbiting around a binary star. The binary stars' orbital parameters were set according to the stellar binary properties of \textit{Kepler-47} system, and the planet was chosen to be \textit{Earth-like}.
Multiple colors represent different mutual inclinations between the planet and the stellar binary,
and the solid / dashed lines represent results from rigid body simulation / secular theory. It shows that the spin angles have only small variations,
and the obliquity variations are mainly due to planetary orbital variations.}\label{fig:spin_angle_vs_t}
\end{figure}


On the other hand, obliquity is affected by both orbital variations and spin-axis variations. Specifically, the spin-axis angle is the angle between the planetary spin-axis (i.e. $\overrightarrow{{\bm G}_a}$) and the normal to the reference plane (i.e. $\bm E_3$) as shown in Figure \ref{fig:ref_frame}. The obliquity is the angle between the spin-axis and the normal to the planetary orbital plane. Variations in both orbital orientation and the spin-axis affect the obliquity. Its variation is shown in the lower panel of Figure \ref{fig:spin_angle_vs_t}. Specifically, obliquity variations are low when planets are nearly coplanar with the stellar binary. However, when planetary orbits are inclined with respect to the stellar binary, orbital precession leads to larger obliquity variations while the spin-angle variation remains small. 

\subsection{Survey in the parameter space of circumbinary planetary systems} The example in Figure \ref{fig:spin_angle_vs_t} shows that for a nearly coplanar circumbinary system analogous to Kepler-47, both the spin-angle and the planetary obliquity variations are small, due to the lack of secular spin-orbit resonances. Does this feature persist for a single planet orbiting around generic stellar binaries? Typical ranges of properties of stellar binaries hosting circumbinary planets are not yet well understood, due to the limited sample size of observed transiting systems (11 so far). However, existing knowledge includes that (i) the stellar binary hosts typically have orbital periods longer than seven days, which are larger than those of eclipsing stellar binaries ($\sim 3$ days) \citep{Armstrong14}; (ii) the mass ratio of the binaries is consistent with that of the stellar binaries in the field (roughly uniform) \citep{Martin19}. In order to make a robust claim, we consider a broad range of binary configurations. We sample through the parameter space of circumbinary planetary systems to quantify the ubiquity of stable obliquity dynamics for low inclination planets.  


We enumerate stellar binary configurations by varying $M_{*_2}/M_{*_1} \in [0.01,1]$ and $a_* \in [0.05,0.3]$. We set the sum of the stellar binaries to be one solar mass for an intuitive comparison with Solar system. The lower limit of the mass ratio is set so that the binary components are both with stellar masses. The minimum semi-major axis corresponds to orbital periods of $\sim$4 days, and the maximum semi-major axis corresponds to orbital period of $\sim$60 days. The maximum semi-major axis is set so that a planet at $1$AU remains stable \citep{Holman99}. Then, we calculate the amplitude of the spin-angle variations of an Earth-like planet with $1^\circ$ inclination (from the orbital plane of the stellar binary) using our secular theory to illustrate the effects quantitatively.

Results are summarized in Figure \ref{fig:survey1}, in the plane of stellar binary semi-major axis and mass ratio. The spin angles mostly have variations less than $1^\circ$  for stellar mass ratio larger than $0.01$ and stellar separation larger than $0.05$ AU (corresponding to a binary period $\ge 4$ days). Note that we include brown dwarfs in this calculation to survey a bigger parameter space for the binaries, in order to assure the robustness of our results. These results suggest that the latitudinal distribution of the stellar radiation (insolation) on the planet has small variations for a wide range of Earth-like planets around stellar binaries in the near co-planar configurations. This phenomenon can be understood intuitively, because higher stellar binary mass ratio and separation both lead to a larger quadrupole momentum at the center of the planetary orbit. This drives fast orbital precession and avoids secular spin-orbit resonances and variations in the planetary spin-axis. Because circumbinary planets are more likely formed in a coplanar configuration around wide orbit stellar binaries above $\sim 7$ days \citep{Miranda15,Martin15}, it is unlikely to have large spin-axis variations for near co-planar Earth-like planets around stellar binaries. 

\begin{figure}
\centering
\includegraphics[width=0.9\linewidth]{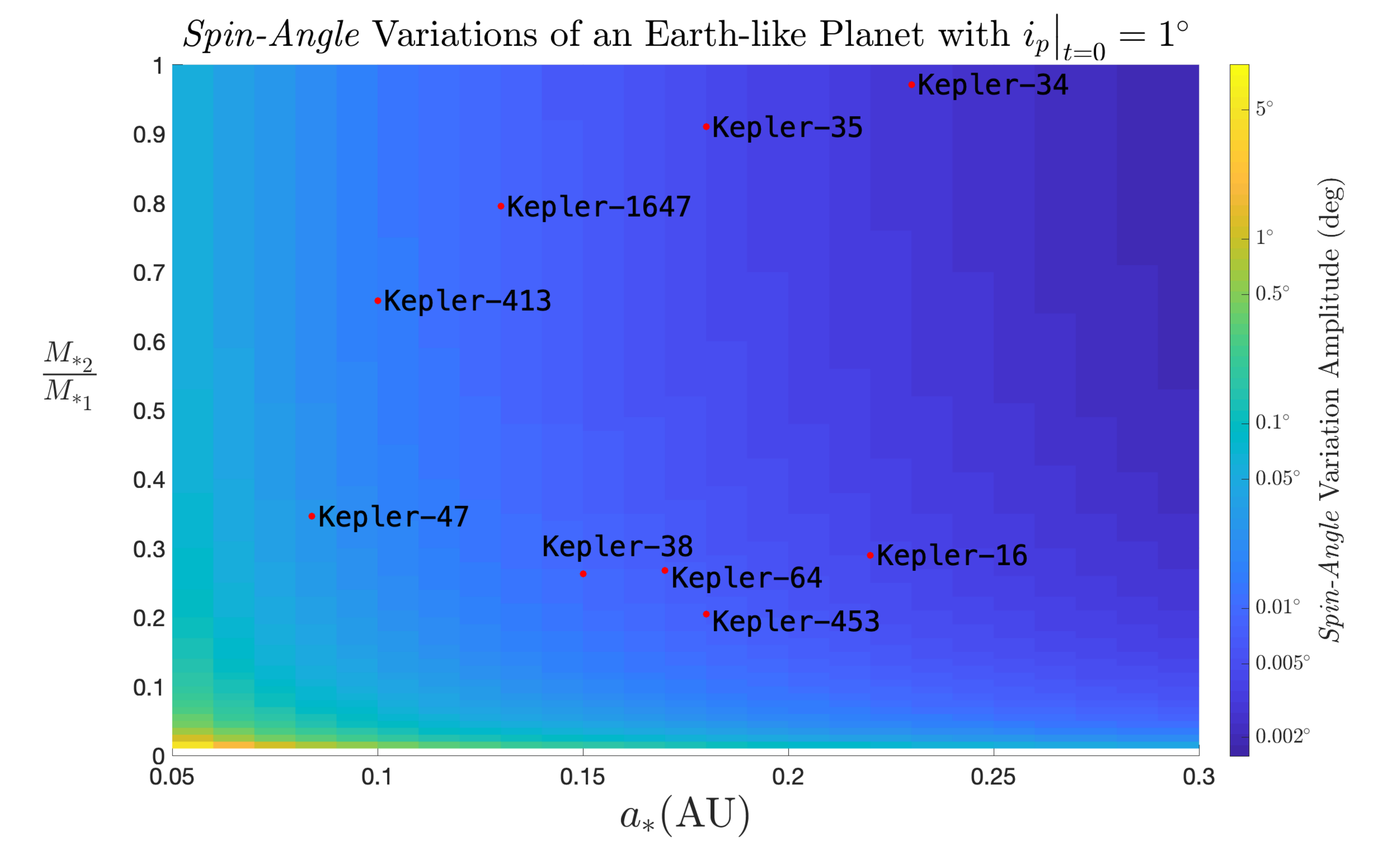}
\vspace{-4mm}
\caption{Spin-Angle variations of circumbinary planets around different types of stellar binaries. The $x$-axis is the semi-major axis of the orbit of the stellar binary, and the $y$-axis is the mass ratio of two stars.
The color represents the largest \textit{spin-angle} variation for an \textit{Earth-like} planet.
Red dots represent currently observed circumbinary planetary systems.
It illustrates that the amplitudes of the spin angle variations are very small ($\lesssim$ the variation of the Earth) for a wide range of binary systems, except when the binary separation and mass ratio are both low and the quadrupole moment of the binary is small, corresponding to the lower left corner of the figure.
}\label{fig:survey1}
\end{figure}

\section{Circumbinary Systems with Multiple Planets}

Similar to single star systems, it is common to have multiple planets in a circumbinary system. An example is the observed Kepler-47 system with at least 3 planets. Interactions between planets can give extra perturbations to their orbital dynamics and increase the likelihood of secular spin-orbit resonances and large amplitude obliquity variations. 
\vspace{-2mm}
\subsection{A case study of a modified Solar System with binary stars} How variable would the spin-axis be, if Earth orbited around a stellar binary with its Solar System planetary companions? To focus on the stabilization effect of the binary, we consider the moon-less Earth with its seven companion planets in the solar system. The stellar binary is assumed to be in the elliptic plane. We arbitrarily set the stellar binary to be composed of a  $0.7M_\odot$ star and a $0.3M_\odot$ star, so the sum of the masses is the same as our own Sun. We set the semi-major axis of the binary to be $0.05$AU, so that the effect of the stellar binary is strong comparing with that of the planets but the separation is not wide enough to create instability in millions of years (this was numerically verified). 

Figure \ref{fig:Earthob} shows the obliquity variations of the Earth-like planet orbiting around a stellar binary (dashed lines) and around a single star, obtained from symplectic rigid-body simulations. We include a wide range of initial obliquities, since exoplanets could have a large range of obliquities due to scattering, collisions and disk turbulence \citep{Hong21, Li21, Jennings21}. Different colors correspond to different initial obliquities. The case of the moonless Earth agrees with the secular results and the rigid body simulations in the literature \citep{laskar1993chaotic,Lissauer12}, which show large amplitude obliquity oscillations when the obliquity is below $\sim 50^\circ$. This is due to the overlap of the secular spin-orbit resonances. Unlike the case of a single-star Moon-less Earth, the obliquity is nearly stable for all initial obliquities around the stellar binary, since the fast orbital nodal precess detunes the system out of the secular spin-orbit resonances. It shows the Earth's obliquity can be stabilized by the stellar binary even without the Moon. Note that we set the planetary orbits to be near coplanar with that of the stellar binary. Thus, the low spin-angle variations leads to low obliquity variations. With higher orbital inclinations, orbital nodal precession would lead to large obliquity variations of the planets while leaving the spin-angles fixed.

\begin{figure}
\centering
\includegraphics[width=0.9\linewidth]{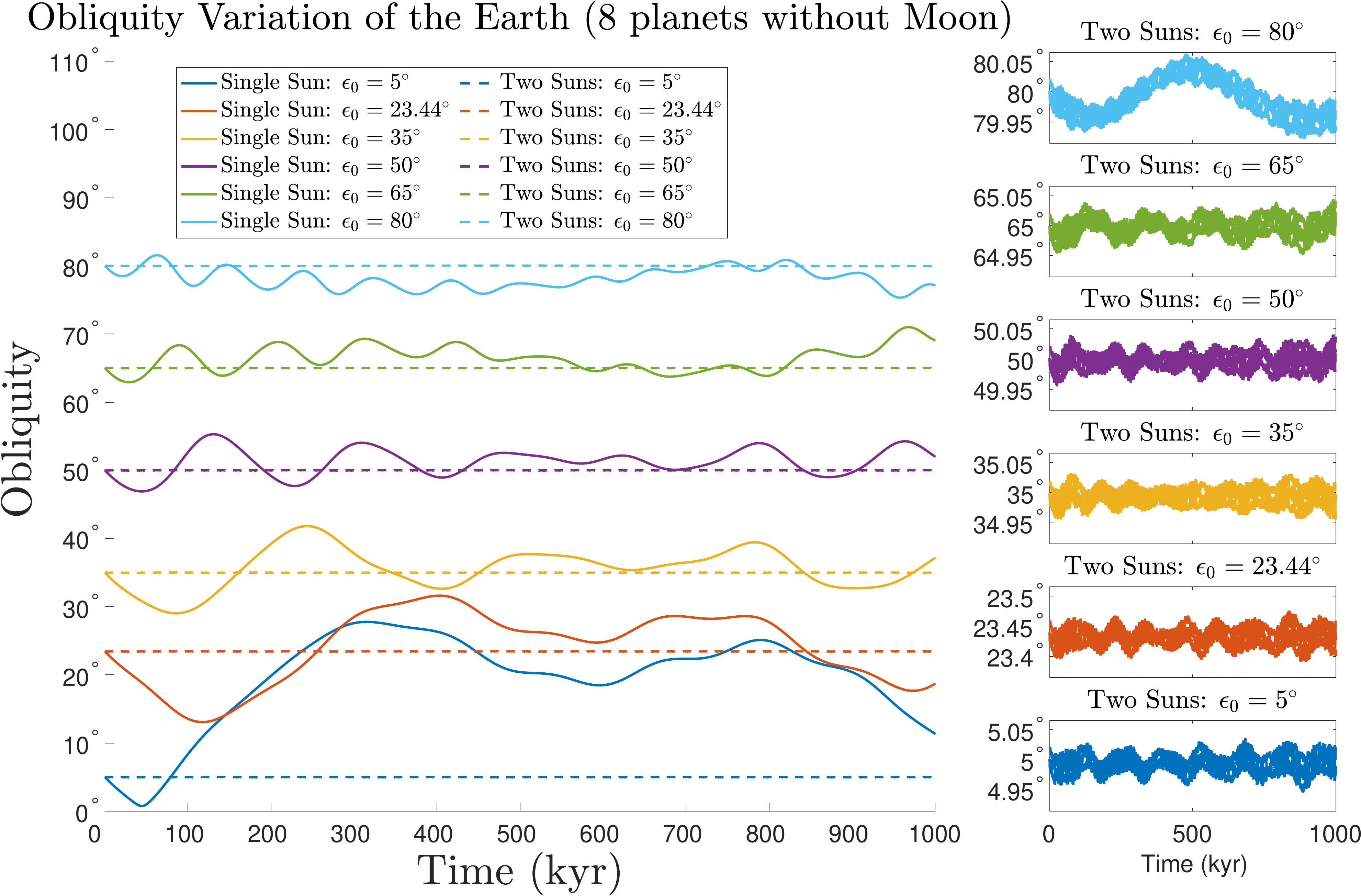}
\vspace{-3mm}
\caption{Obliquity variations of the Moon-less Earth, orbiting Sun (solid) and orbiting a stellar binary (dash), with companion planets. Similar to the results in \citep{laskar1993chaotic,Lissauer12}, the Moon-less Earth has large obliquity variations when obliquity is below $50^\circ$. On the other hand, if orbiting a binary, the Earth's obliquity would be stable even without the help of the Moon.}\label{fig:Earthob}
\end{figure}

\subsection{Secular theory: binary stars + planet spins + Lagrange-Laplace}
How robust is this result when one considers a larger parameter space? We first investigate analytically by combining the Laplace-Lagrange framework with our secular theory. As a result, we obtained an effective Hamiltonian that approximates multi-planetary circumbinary systems in low eccentricity and inclination. Note that we assume the eccentricity and inclination to be low for a more favorable condition for habitability. 

In single star systems (e.g., the Solar System), the oscillation frequency of a planet's orbital orientation can be estimated using the Laplace-Lagrange method \citep{murray1999solar}. Here, we combine Laplace-Lagrange method with our circumbinary secular theory to approximate the orbits of multiple circumbinary planets, so that whether obliquity variations are large can be predicted \citep{Laskar93a}. Under the common physical approximation that planetary perturbation on the stellar binary is negligible, we assume the stellar binary to have fixed Keplerian orbits. The Hamiltonian of the system is the following:  
\begin{align} \label{eq:H_orbit_cirumbinary}
\begin{split}
H(\bm{q},\bm{p})
= & \sum_{i=1}^n \frac{\bm{p}_i^T \bm{p}_i}{2 m_i} 
- \frac{G M_{*_1} m_i}{\norm{\bm{q}_i-\bm{q} _{*_1}}}
- \frac{G M_{*_2} m_i}{\norm{\bm{q}_i-\bm{q} _{*_2}}}
+ \mathfrak{R}_i^{planets},
\end{split}
\end{align}
where $q_i, p_i$ are respectively planet $i$'s position and momentum,
$q_{*_1}$ and $q_{*_2}$ are time-dependent locations of the stars, 
$M_{*_1}$ and $M_{*_2}$ are stellar masses,
and $\mathfrak{R}_i^{planets}$ is known as the disturbing function accounting for the gravity amongst the planets \citep[e.g.,][]{murray1999solar}. 

In order to obtain a nearly-integrable form amenable to analysis, we split the gravitational potential between the planets and the stars into two parts and rewrite the Hamiltonian as:
\begin{align} \label{eq:H_orbit_cirumbinary_2}
H(\bm{q},\bm{p})
= \sum_{i=1}^n H_i^{kepler} + \left[ \mathfrak{R}_i^{*_1} + \mathfrak{R}_i^{*_2} + \mathfrak{R}_i^{planets} \right].
\end{align}
Here $H_i^{kepler}$ is the Hamiltonian of a two-body system, composed of two stars merged at their center of mass and the $i^{th}$ planet, which alone would produce a Keplerian orbit. $\mathfrak{R}_i^{*_1}$ and $\mathfrak{R}_i^{*_2}$ are disturbing functions modeling (exact) corrections of the gravitation potential generated by the binary due to separations from their center of mass. $\mathfrak{R}_i^{planets}$ are again inter-planet potentials.

We then decompose the perturbative non-integrable part $\sum_{i=1}^n \left( \mathfrak{R}_i^{*_1} + \mathfrak{R}_i^{*_2} + \mathfrak{R}_i^{planets} \right)$
into two groups:
the group due to having two stars $\sum_{i=1}^n \left( \mathfrak{R}_i^{*_1} + \mathfrak{R}_i^{*_2} \right)$, and the group due to having companion planets: $\sum_{i=1}^n \mathfrak{R}_i^{planets}$. 

Under the influence of the stellar binary (the first group), the planetary orbit would precess with a near constant angular frequency ($D_i$), in the low eccentricity and low inclination limit (see~\citep{Schneider94}, and Appendix), where the orbital inclination $I_i$ is fixed and the longitude of ascending node $\Omega_i$ decreases at constant rate $D_i$. Then, the dynamical evolution under the disturbing potential of $\sum_{i=1}^n \left( \mathfrak{R}_i^{*_1} + \mathfrak{R}_i^{*_2} \right)$ follows the expression below:
\begin{align}
    \left\{
        \begin{aligned}
            \frac{d h_i}{d t} & = D_i k_i, \\
            \frac{d h_j}{d t} & = 0, \quad j \neq i. \\
        \end{aligned}
    \right.
    \text{ and }
    \left\{
        \begin{aligned}
            \frac{d k_i}{d t} & = -D_i h_i, \\
            \frac{d k_j}{d t} & = 0, \quad j \neq i. \\
        \end{aligned}
    \right.
            \label{eq:dh_i}
\end{align}
where $h_i=I_i \cos\left( \Omega_i \right),\, k_i=I_i \sin\left( \Omega_i \right)$.

On the other hand, the approximated dynamics due to planetary companions (the latter group) can be expressed via Laplace-Lagrange theory as described in~\citep{murray1999solar}:
\begin{equation}
    \frac{d \bm{h}}{dt} = \bm{A} \cdot \bm{k}, \quad
    \frac{d \bm{k}}{dt} = -\bm{A} \cdot \bm{h}
\label{eq:laplace_lagrange}
\end{equation}
where $\bm{A} =
\begin{bmatrix}
    a_{11} & \cdots & a_{1n} \\
    \vdots & \ddots & \vdots \\
    a_{n1} & \cdots & a_{nn} \\
\end{bmatrix} \in \R^{n \times n}$ is a constant matrix that depends on planetary semi-major axes and masses of planets, satisfying $\sum_j a_{ij}=0$. 

For a first-order perturbative approximation, it can be computed that the effective contributions of these two nonresonant fast processes (quadrupole contribution of the binary and planet-planet interaction) are additive, and we have the following approximated dynamics
\begin{equation} \label{eq:LL_circumbinary}
    \frac{d \bm{h}}{dt} = \bm{B} \cdot \bm{k}, \quad
    \frac{d \bm{k}}{dt} = -\bm{B} \cdot \bm{h}
\end{equation}
where $\bm{B}=\bm{A}+\bm{D}$ and $\bm{D} = \begin{bmatrix}
    D_1 & \cdots & 0  \\
    \vdots & \ddots & \vdots  \\
    0 & \cdots & D_n
\end{bmatrix},$
and this system is a linear system with coefficient matrix
$\begin{bmatrix} \bm{0} & \bm{B} \\ -\bm{B} & \bm{0} \end{bmatrix}$.

For planets in the habitable region close to the stellar binary, the effect of the stellar binary dominates that of the planetary companions (as illustrated numerically in the survey of the parameter space in the Results section). Therefore, in most of the cases $A \ll D$ and this will thus be assumed  (the validity of this assumption will be illustrated in the next section). Under this condition, 
$\bm{B}$ is diagonalizable with real eigenvalues $\lambda_1, \lambda_2, \ldots, \lambda_n$ satisfying 
$\lambda_i \approx D_i+A_{i}$, where $A_i$ is the $i^{th}$ diagonal element of $A$ (proof in Appendix B). Denoting the corresponding eigenvectors by $\bm{v}_1, \bm{v}_2, \ldots, \bm{v}_n$, solutions to equation \ref{eq:LL_circumbinary} are in the form
\begin{equation}
\left\{
\begin{aligned}
    \bm{h} &= \sum_j C_j^{(1)} \sin\left(\lambda_j t\right) \bm{v}_j + C_j^{(2)} \cos\left(\lambda_j t\right) \bm{v}_j,  \\
    \bm{k} &= \sum_j -C_j^{(2)} \sin\left(\lambda_j t\right) \bm{v}_j + C_j^{(1)} \cos\left(\lambda_j t\right) \bm{v}_j. \\
\end{aligned}
\right.
\end{equation}

Thus, similar to the single planet case, the inclination oscillation frequencies are dominated by the stellar potential and are typically much larger than that of the spin-axis precession frequencies. Therefore, secular spin-orbit resonances are avoided, and obliquity variations are typically still low for Earth-like circumbinary planets in the habitable zone; it is almost like a circumbinary system with one planet, even though planetary companions are actually present. 

\subsection{Parameter space survey based on the analytical theory}
To show that the mild obliquity variation of circumbinary multi-planetary systems is a robust phenomenon, we conduct an additional systematic study, this time of a binary star system with two planets. We set one of the planets to be an Earth-like planet and vary the mass and location of the other planet. Denote by  $p_1$  the Earth-like planet and  $p_2$  the other planet. For dynamically cold systems, we assume that the eccentricity and inclination of the planetary companion are low. 

To illustrate the difference in magnitude between the orbital precession due to the central stellar binary and the planet-planet interaction, we show in Figure \ref{fig:orbprec} an rough estimation of the precession rates of an Earth-like planet perturbed by a Jupiter mass companion at different semi-major axes. We locate the Earth-like plant at 1au from the central stellar binary, and we set the total mass of the binary to be one solar mass. The nodal precession rate due to the binary can be found in eqn \ref{eqn:orbprec}. The nodal precession rate due to the companion can be obtained following the Laplace-Lagrange secular theory \citep{murray1999solar} as follows: 

\begin{align}
    \dot{h}_{d, pp} = -\frac{n}{4}b^{(1)}_{3/2}(\alpha) \alpha \bar{\alpha}\frac{M'}{M_*}
\end{align}
where $\alpha = a/a'$, $\bar{\alpha} = \alpha$ if the perturber planet is external to the Earth-like planet, and it is unity otherwise. $b^{(1)}_{3/2}$ is a Laplace coefficient defined by:
\begin{align}
    b^{(1)}_{3/2}(\alpha) = \frac{1}{\pi} \int^{2\pi}_{0} \frac{\cos{\psi}}{(1-2\alpha \cos{\psi}+\alpha^2)^{3/2}}d\psi
\end{align}

\begin{figure}
\centering
\includegraphics[width=0.8\linewidth]{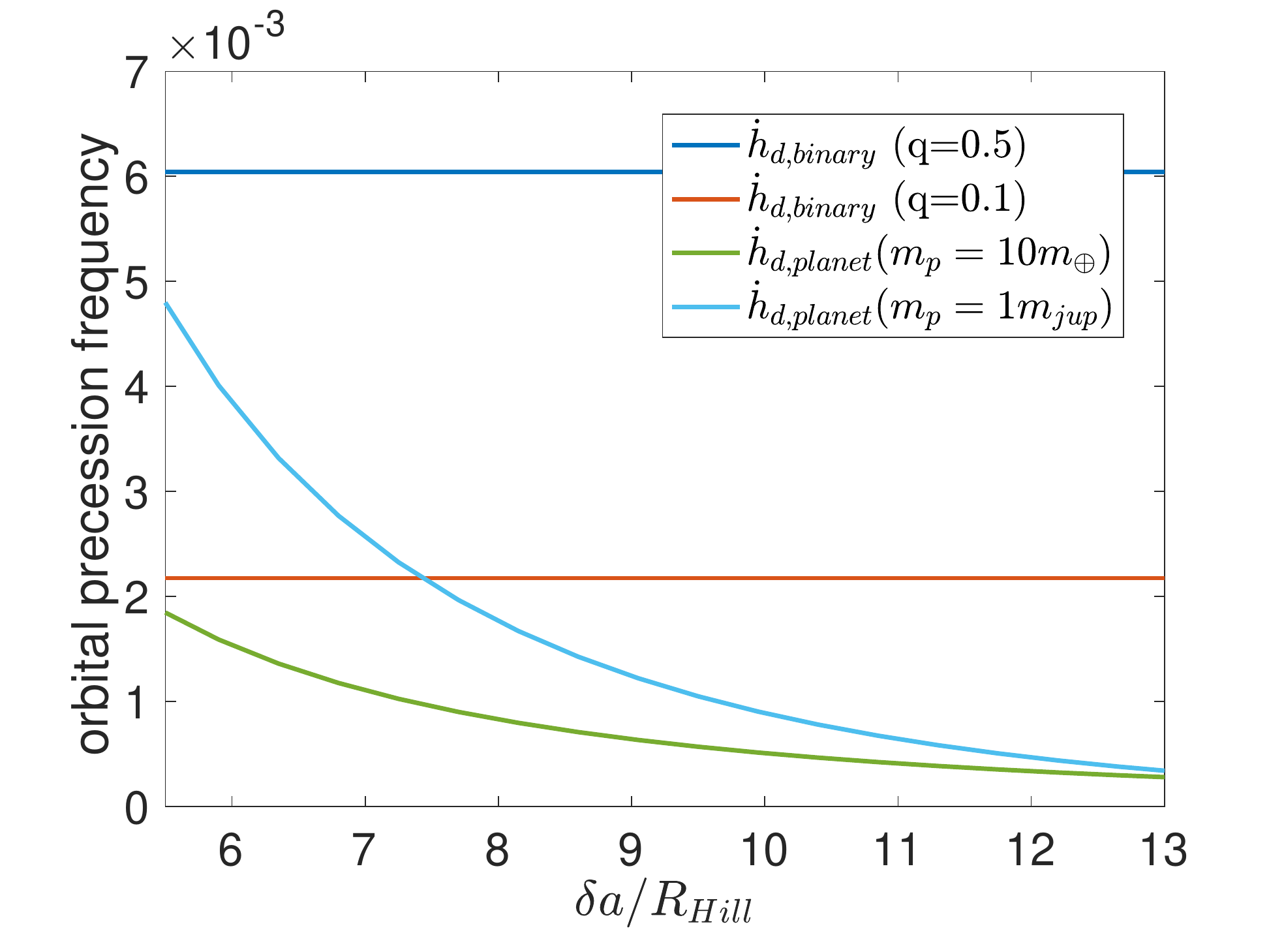}
\vspace{-3mm}
\caption{Orbital precession due to the central stellar binary and due to planetary companion. }
\label{fig:orbprec}
\end{figure}

We place the planetary companion $\delta a$ outside the orbit of the Earth-like planet in unit of mutual Hill radius of the planets ($R_{Hill} = (m_{p1}+m_{p2})/(3(M_{*1} +M_{*2})*(a_{p1}+a_{p2})/2$). It has been shown that planets around binaries generally have separations larger than $\sim 7 a_{Hill}$ to be long term ($10^8$ binary period) stable, and $\beta$ needs to be larger for longer stability timescales \citep{smullen2016planet}. Figure \ref{fig:orbprec} shows the orbital precession rates as a function of the planetary separation. The green line corresponds to the orbital precession due to a ten Earth mass companion, and the cyan line corresponds to that due to a Jupiter mass companion. The blue and red lines correspond to the precession due to the stellar binary with mass ratios of $q = 0.5$ and $q = 0.1$ separately. It shows that the precession due to the planetary perturber could dominate over that of the stellar binary only when the perturber is massive ($m_{p2}> m_{jup}$) and at close separation $\delta a <7.5 R_{Hill}$ mutual Hill radii, and when the stellar binary is having extreme mass ratio ($q<0.1$).

To explore a larger parameter space, we not only sample the parameter space of the stellar binary over possible $a_*$ and  $M_{*_2}/M_{*_1}$  values, but also scan through a wide range of semi-major axes for the planetary companion and consider planetary masses ($p_2$) in the range of $m_{p_2}\in [10^{-3},10^{-6}] M_\odot$. We calculate the inclination oscillation modal frequencies based on the secular approach described in Materials and Methods, and record the interval  ($\max\left(a_{p_2}\right) - \min\left(a_{p_2} \right)$) in which the modal frequencies are close to the spin-axis precession frequencies (for an overestimation, we define `close' as $\sim 0.8-1.2$ times the spin-axis precession frequency). The interval is considered to be an overestimation of $a_{p_2}$ values that may lead to secular spin-orbit resonances --- for this, we note that the exact proximity of the spin-axis precession frequency that could lead to the secular spin-orbit resonances depends on the width of the resonances, which is affected by the configuration of the circumbinary system (e.g., obliquity of $p_1$, the amplitude of the inclination oscillation, as well as the separation of the stellar binary) \citep{Ward73,Shan18}. In general, the larger the interval length, the higher chance the Earth-like planet would experience large spin-angle variations. 

We record in Figure \ref{fig:survey2} the maximum interval of semi-major axis values of the planetary companion that could lead to spin-axis resonances. The likelihood of larger obliquity variation depends on the mass of  $p_2$  and the separation of the stellar binary: when the planets are more massive, they can better compete with the perturbation from the star and allow a higher chance for spin-angle variations. However, including maximum planetary mass up to a Jupiter mass, the planets still need to be very close to each other ($\lesssim 0.1-0.3$AU) to allow large spin-angle variations. 

In short, large spin-angle variations may only occur with high probability in the circumstance of a heavy planet ($m_{p_2} > 10^{-3} M_\odot$) being very close to the Earth-like planet. We note that this is shorter than the typical separations between planets ($\sim 7 R_{hill}$ ) in order to be dynamically stable around stellar binaries \citep{smullen2016planet}. Specifically, $7 R_{hill}$  corresponds to $\sim 0.1-0.4$AU for Earth-like planet around Sun-like stars with Earth-like to Jupiter-like companions, and these minimum separations are larger if the planetary masses are higher. Thus, the yellow region in Figure \ref{fig:survey2} is unlikely to be physical because it won't allow orbital stability.  

\begin{figure}
\centering
\includegraphics[width=0.8\linewidth]{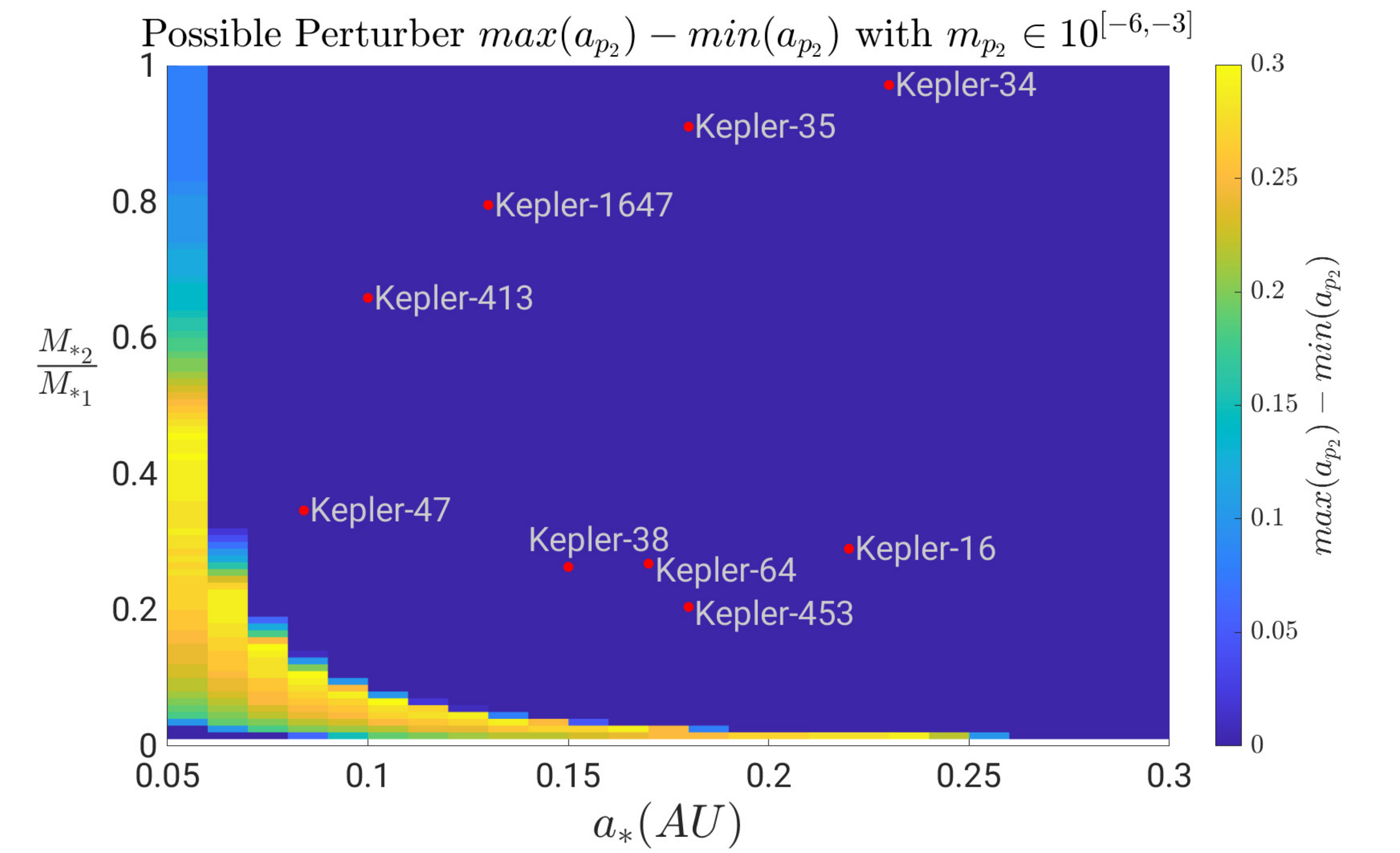}
\vspace{-3mm}
\caption{Region of the companion planet to allow obliquity variations. The $x$-axis is the semi-major axis of the orbit of the stellar binary $a_*$. The $y$-axis is the mass ratio of two stars $\frac{M_{*_2}}{M_{*_1}}$. For each choice of these two values as well as $p_2$'s mass log-uniformly sampled from $\left[ 10^{-3}, 10^{-6} \right] M_\odot$, we compute the interval of $p_2$'s semi-major values that can potentially place $p_1$ in spin-orbit resonance, and use color to represent the largest width of this interval (brighter means higher likelihood of larger variation), maximized over all enumerated $p_2$ mass values. Red dots correspond to observed circumbinary systems.}
\label{fig:survey2}
\end{figure}

\section{Discussion} 

Stellar binaries are common in the Solar neighborhood, and thus to understand the habitability of their planets, if any, is as important as that for single star systems. We adopt a dynamical approach, in which we show that the planetary spin evolution can be very different when a second host star is present. For planets orbiting around a stellar binary, the existing obliquity theory designed for single star systems becomes inapplicable. This is because torques from both stars act on the spin-axis of its planets, and planetary orbits deviate more significantly from being Keplerian. We investigated both analytically and numerically and discovered that the spin angle variations of circumbinary planets are typically much less pronounced than their single star analogue, because the stellar binary leads to much faster orbital nodal precession, which weakens the spin-orbit coupling and detunes the system from secular spin-orbit resonances. Thus, their obliquity (spin-orbit misalignment) variations are mostly low for planets residing near their stellar binary orbital plane (note that for planets with higher orbital inclination relative to the stellar binary, orbital precession around the stellar binary can lead to large obliquity variations). Since obliquity determines the stellar insolation on the planets, milder variations tend to correspond to more regular seasonal variations \citep{Deitrick18}. 

To meet the objective of this study, which is to understand the spin dynamics of Earth-like circumbinary planets, we first conduct a heuristic calculation to illustrate the intuition, then construct an analytical secular theory, from which more accurate and detailed results are obtained, and finally validate the results by high-fidelity numerical simulations. More precisely, the analysis is based on the fact that the spin-axis dynamics admits different behaviors over multiple timescales: the fast timescales correspond to planetary spins, stellar binary orbital rotation, and the planetary orbital rotations, and the slower timescales correspond to spin-axis and orbital variations. We average out the fast timescales to obtain the secular theory. We start with the single planet case for simplicity, where a secular theory is constructed and verified by state-of-the-art full rigid-body numerical simulations. The results are then generalized to multi-planet systems: we first obtain an analytical approach for multi-planetary systems by combining the Lagrange-Laplace method with our circumbinary secular theory, and then conduct a systematic numerical investigation. 

Our study has not considered satellites. Note that large moons ($\ge$ mass of our own Moon) close to the planets (slightly outside its Laplace radius  $\sim 15$ Earth-radius, where the moons orbits precess around the ecliptic) could significantly increase the precession rate \citep{Li14prelate}, and this can lead to enhanced spin-axis variations. Thus, contrary to the Moon of Earth, moons of circumbinary planets could lead to large obliquity variations.  

In our secular theory, we neglected mean motion resonances between the stellar binary and the planetary orbit. This simplification was made because lower order mean motion resonances (within 3:1) reside inside the region of orbital instability \citep{Quarles18}, and higher order mean motion resonances are weak. We also assumed no resonance between the planet's rotation and its orbital mean motion, which could also lead to interesting dynamics \citep{Correia15}. This assumption is reasonable too, because circumbinary planets typically orbit wide stellar binaries beyond orbital period of 7 days, with planet orbital period $\ge 30$ days in the orbital stable region \citep{Armstrong14}. Thus, planet orbital periods tend to be much longer than Earth-like rotation periods, and low order resonances between planet spin and orbital mean motion are thus absent. In addition, over long (~Gyr) timescales, tidal interactions could align the planetary orbit with that of the stellar binary and alter obliquity variation patterns \citep{Correia16}. As we focus on how planet spin affects insolation and hence habitability, the behaviors of which are already important at shorter timescales, our study neglected tidal interactions.   

Our assumption on the interested planet being Earth-like (with Earth's mass, inertia tensor and rotation rate) is not required by either our theory or numerical simulation, but only a choice for concretizing the investigation. Recently, observational techniques of planetary obliquity and oblateness have been proposed \citep{Gaidos04,Carter10,Schwartz16}, and the first constraint on obliquities of planets outside of the Solar System has been made \citep{Bryan20}. In addition, the spin-rate of planets have been measured and it reveals to us important clues on planetary formation \citep{Bryan18}. Such progress in observation will help us better understand planetary spin properties to constrain obliquity variations. However, so far, these observational techniques could only be applied to massive super Jupiter-sized planets, and the spin measurement of terrestrial planets is still beyond the limit of observational sensitivity. For planets whose obliquity observation is available, our work also opens up a possibility of probing otherwise-hard-to-observe planet properties from its obliquity dynamics, as one can apply our tools in an inverse problem setup. 

Moreover, this study is relevant to understanding the habitability of circumbinary planets, because obliquity plays an important role in determining the climate of a planet. Detailed analysis has been conducted for particular exoplanets residing in the habitable zone and with one host star \citep[e.g.,][]{Shan18,Quarles19,Saillenfest19}. However, this is the first time that the more complicated dynamics of obliquities of circumbinary planets, as well as its implications, are investigated. The low obliquity variations of near-coplanar circumbinary planets could indicate a more favorable condition for the emergence of life comparing to their single star analogues.  



\section*{Acknowledgements}
We are grateful for the partial supports by NSF grant DMS-1847802 (RC and MT) and NASA grants
80NSSC20K0641 (GL) and 80NSSC20K0522 (GL). 
We thank David Charbonneau, Alexandre Correia, Matthew Holman, Smadar Naoz, Rafael de la Llave, Avi Loeb, and Billy Quarles for valuable comments.

\section*{Data Availability}
The data underlying this article are available in \texttt{GitHub}, and can be accessed with link: \url{https://github.com/GRIT-RBSim/GRIT}.

\bibliographystyle{mnras}
\bibliography{reference}

\appendix
\onecolumn
\section{Notations}\label{SI:notations}
\begin{tabular}{c c}
    $M_{*_i}$ & The mass of the $i$-th Star ($i=1,2$) \\
    $\delta$ & $\frac{M_{*_2}}{M_{*_1}+M_{*_2}}$ \\
    $m_p$ & The mass of the planet \\
    $\mu_p$ & $\frac{m_p (M_{*_1}+M_{*_2})}{M_{*_1}+M_{*_2}+m_p}$ \\
    $\mu_*$ & $\frac{M_{*_1} M_{*_2}}{M_{*_1}+M_{*_2}}$ \\
    $\bm{I}_p = \begin{bmatrix}
        I_p^{(1)} & 0 & 0 \\
        0 & I_p^{(2)} & 0 \\
        0 & 0 & I_p^{(3)} \\
    \end{bmatrix}$ & The (standard) moment of inertia tensor of the planet \\
    $\mathcal{R}$ & The radius of the equator of the planet \\
    $\mathcal{D}_* = \left\{ l_*, g_*, h_*, L_*, G_*, H_* \right\} $ & The Delaunay variables of the inner orbit \\
    $\mathcal{D}=\left\{ l_d, g_d, h_d, L_d, G_d, H_d \right\} $     & The Delaunay variables of the outer orbit \\
    $\mathcal{A}=\left\{ g_a, h_a, l_a, G_a, H_a, L_a \right\}$      & The Andoyer variables of the planet       \\
\end{tabular}

\section{Perturbative Analysis for Combining Lagrange-Laplace Theory with Our Circumbinary Secular Theory}\label{SI:proof}
\begin{theorem} Given $A=\begin{bmatrix} \lambda_1 & 0 & 0 \\ 0 & \ddots & 0 \\ 0 & 0 & \lambda_n \end{bmatrix}$
    and $B=\begin{bmatrix} b_{11} & \cdots & b_{1n} \\ \vdots & \ddots & \vdots \\ b_{n1} & \cdots & b_{nn} \end{bmatrix}$,
    the eigenvalues of $A+\epsilon B$ are $\lambda_i+\epsilon b_{ii}+\mathcal{O}(\epsilon^2)$ for $i=1,\ldots,n$.
\end{theorem}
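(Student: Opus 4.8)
The plan is to treat this as a standard simple-eigenvalue perturbation problem and to extract the first-order coefficient directly. The cleanest route I would take runs through the characteristic polynomial together with the implicit function theorem, under the (generically satisfied, and here necessary) hypothesis that the unperturbed eigenvalues $\lambda_1,\ldots,\lambda_n$ are \emph{distinct}. Define the two-variable polynomial $p(\mu,\epsilon)=\det(A+\epsilon B-\mu I)$. At $\epsilon=0$ this factors as $p(\mu,0)=\prod_{j}(\lambda_j-\mu)$, whose roots are exactly the $\lambda_i$, and distinctness makes each a simple root, so that $\partial_\mu p(\lambda_i,0)=-\prod_{j\neq i}(\lambda_j-\lambda_i)\neq 0$.

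First I would invoke the implicit function theorem at each $(\lambda_i,0)$: because $p$ is a polynomial and $\partial_\mu p(\lambda_i,0)\neq0$, there is a unique branch $\mu_i(\epsilon)$, analytic near $\epsilon=0$, with $\mu_i(0)=\lambda_i$ and $\mu_i'(0)=-\partial_\epsilon p(\lambda_i,0)/\partial_\mu p(\lambda_i,0)$. To evaluate the numerator I would use Jacobi's formula $\partial_\epsilon\det(A-\mu I+\epsilon B)\big|_{\epsilon=0}=\trace\!\big(\mathrm{adj}(A-\mu I)\,B\big)$, valid even where $A-\mu I$ is singular since it is a polynomial identity. The adjugate of the diagonal matrix $A-\lambda_i I$ vanishes except in its $(i,i)$ entry, which equals $\prod_{j\neq i}(\lambda_j-\lambda_i)$; hence the trace picks out precisely $b_{ii}\prod_{j\neq i}(\lambda_j-\lambda_i)$. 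Dividing by $\partial_\mu p(\lambda_i,0)=-\prod_{j\neq i}(\lambda_j-\lambda_i)$ gives $\mu_i'(0)=b_{ii}$, and Taylor expansion of the analytic branch yields $\mu_i(\epsilon)=\lambda_i+\epsilon b_{ii}+\mathcal O(\epsilon^2)$, as claimed. A quick consistency check comes from the eigenvector expansion $\mu_i(\epsilon)=\lambda_i+\epsilon\mu_i^{(1)}+\cdots$, $v_i(\epsilon)=e_i+\epsilon v_i^{(1)}+\cdots$: substituting into $(A+\epsilon B)v_i=\mu_i v_i$ and pairing the order-$\epsilon$ equation with $e_i^{\trans}$ causes the $A v_i^{(1)}$ term to cancel (since $A$ is diagonal, $e_i^{\trans}A=\lambda_i e_i^{\trans}$), leaving $\mu_i^{(1)}=e_i^{\trans}Be_i=b_{ii}$.

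The main obstacle is not computational but the degeneracy hypothesis: the stated conclusion genuinely requires the $\lambda_i$ to be distinct. If two coincide, the first-order correction near that eigenvalue is controlled by the eigenvalues of the corresponding diagonal block of $B$ rather than by its diagonal entries --- for instance $A=0$ with $B=\left(\begin{smallmatrix}0&1\\1&0\end{smallmatrix}\right)$ has eigenvalues $\pm\epsilon$ while every $b_{ii}=0$ --- so the clean formula $\lambda_i+\epsilon b_{ii}$ fails. In the physical setting this is harmless, since the unperturbed values are the binary-induced nodal precession rates $D_i$, which are generically distinct across planets; I would simply record distinctness of the $\lambda_i$ as an explicit standing assumption. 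Everything else, namely analyticity of each branch and uniformity of the $\mathcal O(\epsilon^2)$ remainder, is handled automatically by the polynomial nature of $p$ and the implicit function theorem.
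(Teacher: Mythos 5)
Your proof is correct, and your primary route is genuinely different from the paper's. The paper argues by the formal first-order ansatz: it writes $(A+\epsilon B)(v_i+\epsilon\,\delta v_i)=(\lambda_i+\epsilon\,\delta\lambda_i)(v_i+\epsilon\,\delta v_i)$ with $v_i$ the $i$-th standard basis vector, matches the $\mathcal{O}(\epsilon)$ terms, and left-multiplies by $v_i^T$ so that $A\,\delta v_i$ cancels against $\lambda_i\,\delta v_i$, leaving $\delta\lambda_i=b_{ii}$ --- exactly the computation you relegate to a consistency check at the end of your second paragraph. Your main argument via the characteristic polynomial, the implicit function theorem, and Jacobi's formula buys something the paper's does not: it proves that an analytic eigenvalue branch through each $\lambda_i$ actually exists rather than positing a power series and matching coefficients, and it therefore justifies the uniform $\mathcal{O}(\epsilon^2)$ remainder for free. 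The paper's calculation is shorter and additionally delivers the eigenvector correction $v_j^T\delta v_i=b_{ji}/(\lambda_j-\lambda_i)$, which the surrounding text needs because the solution of the secular system is expanded in the eigenvectors of $\bm{B}$. Your explicit treatment of the degeneracy hypothesis is a genuine improvement: the theorem as printed omits the assumption that the $\lambda_i$ are distinct, your $2\times 2$ counterexample shows it cannot be dropped, and the paper only gestures at the issue in its closing parenthetical ``we need $\frac{1}{\lambda_j-\lambda_i}=O(1)$,'' which concerns the eigenvector correction rather than the eigenvalue formula itself. In the application the $\lambda_i$ are the binary-induced precession rates $D_i$, generically distinct across planets, so the hypothesis is harmless --- but, as you say, it should be recorded.
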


\begin{proof}
  Let $v_i$ be the $i$-th unit vector. Obviously $A v_i = \lambda_i v_i$ and $v_j^T A = \lambda_j v_j^T$. Matching $\mathcal{O}(\epsilon)$ terms in
  \[
    (A+\epsilon B)(v_i+\epsilon \delta v_i) = (\lambda_i + \epsilon \delta \lambda_i)(v_i+\epsilon \delta v_i)
  \]
  gives
  \[
    B v_i+A \delta v_i=\delta\lambda_i v_i+\lambda_i \delta v_i.
  \]
  Multiplying $v_i^T$ from the left gives
  \[
    b_{ii} + v_i^T \lambda_i \delta v_i = \delta\lambda_i + \lambda_i v_i^T \delta v_i.
  \]
  Therefore, $\delta\lambda_i = b_{ii}$.
  P.S. One can also find about entries $\delta v_i$ by left multiplication by $v_j^T$.
  Multiplying $v_j^T$ from the left gives
  \[
    b_{ji} + v_j^T \lambda_j \delta v_i = \lambda_i v_j^T \delta v_i \Longrightarrow
    v_j^T \delta v_i = \frac{b_{ji}}{\lambda_j-\lambda_i} \quad (\text{we need } \frac{1}{\lambda_j-\lambda_i}=O(1)).
  \]
\end{proof}

\section{Canonical Variables for Spin-Orbit Dynamics}\label{SI:variables}

Building on the canonical Delaunay variables and the Andoyer variables~\citep{serret1866memoire,tisserand1891traite,andoyer1923cours,gurfil2007},
our secular theory of one planet rotating around binary stars not only reduces the dimensions of a system by averaging out the separated fast variables,
but also provides insights of the physical behaviors of the system as both sets of variables have clear physical meanings.
Specifically, the orbital dynamics are characterized by the Delaunay variables, whereas the spin dynamics are characterized by Andoyer variables.
In the following, we will introduce the procedure for properly constructing these variables in a circumbinary system, where two stars are unaffected by the planet and thus modeled as point masses, and the planet is modeled as a moving and rotating rigid body.
Firstly, three frames (the \textit{reference frame}, the \textit{body frame} and the \textit{angular momentum frame}) involved in our discussion will be introduced.
Secondly, the Delaunay variables will be introduced as a canonical change of coordinates from the spatial positions and momenta of all three bodies' centers of mass.
In the end, the orientation and the spin of the rigid planet will be characterized using Andoyer variables.

\subsection{Three Frames}

The \textit{reference frame} is a fixed frame in $\R^3$ with orthogonal basis $\left( \bm{E}_1, \bm{E}_2, \bm{E}_3 \right)$.
Under the \text{reference frame}, the spatial position and the translational speed of a body can be expressed by vectors in $\R^3$.
In our setup, as the inner orbit's oscillation is relative small compared with the planet's orbit, we make the assumption that the inner orbit is near Keplerian.
Thus, we may choose the reference frame such that
$\bm E_1$-axis matches the semi-major axis of the initial inner orbit;
$\bm E_2$-axis matches the semi-minor axis of the initial inner orbit;
$\bm E_3$-axis matches the normal vector of the initial inner orbit.

On the other hand, the \textit{body frame} (\cref{fig:body_frame}) is the moving frame attached to the rotating body (i.e., the planet), giving each particle of the body fixed coordinates.
As can be seen from \cref{fig:body_frame}, orthogonal bases $\left( \bm e_1, \bm e_2, \bm e_3 \right)$ formed a body frame of this rigid body.
As the body moving along the dashed trajectory following arrows as well as self rotating from time $t_0$ to time $t_1$,
coordinates of points $P, Q$ under the body frame stay the same, despite of the motion of the rigid body.
Since we have assumed that each planet is a spheroid, for the body frame (see \cref{fig:body_frame}),
we fix the body plane (spanned by $\bm e_1$ and $\bm e_2$) the plane of equator and fix the origin the center of the rigid body.

\begin{figure}
\captionsetup{type=figure}
\includegraphics[width=0.5\linewidth]{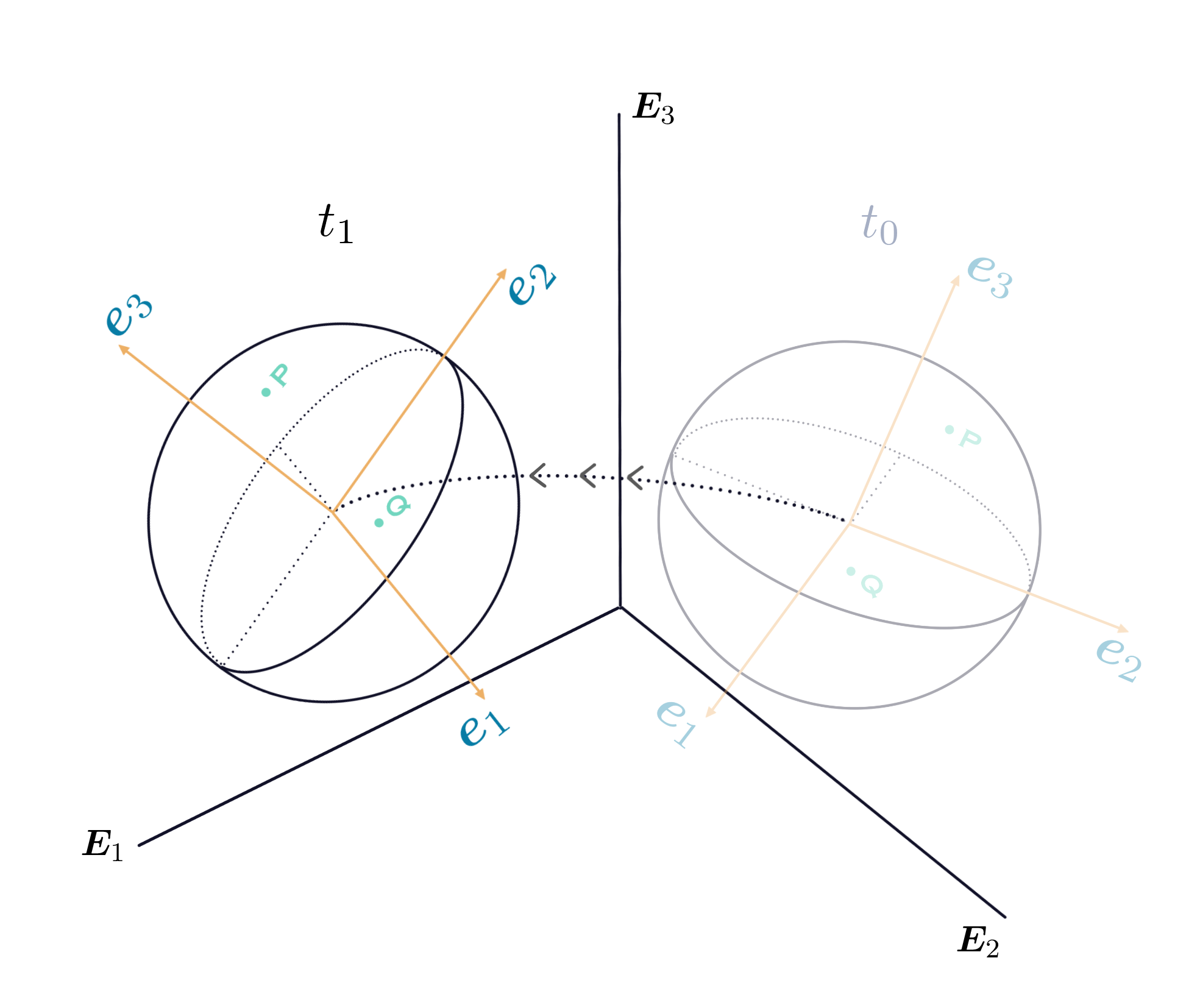}
\centering
\caption{The body frame}
\label{fig:body_frame}
\end{figure}

In addition to the \textit{body frame} and the fixed \textit{reference frame } (inertial frame),
The \textit{angular momentum frame} is any frame that the rotational angular momentum of the rigid body matches the $z$-direction.

\begin{table}\centering
\begin{tabular}{m{5cm}m{2cm}m{2cm}m{2cm}}
\toprule
    & $x$-axis  & $y$-axis & $z$-axis \\
\midrule
\textit{Reference Frame} & $\bm{E}_1$           & $\bm{E}_2$           & $\bm{E}_3$           \\ \midrule
\textit{Body Frame} & $\bm{e}_1$           & $\bm{e}_2$           & $\bm{e}_3$           \\ \midrule
\textit{Angular Momentum Frame} & $-$           & $-$           & $\overrightarrow{\bm G_a}$ \\
\bottomrule
\end{tabular}
\captionsetup{justification=centering}
\caption{$x,y,z$-axis of three frames}\label{table:frames}
\end{table}

In \cref{fig:andoyer}, three frames of a spinning rigid body are shown together with symbols explained in \cref{table:frames}.
For these three frames, the \textit{reference frame} is fixed; the \textit{body frame} is moving with respect to the orientation of the body; while the \textit{angular momentum frame} is moving with respect to the spinning direction of the body.

\begin{figure}
\centering
\includegraphics[width=0.5\linewidth]{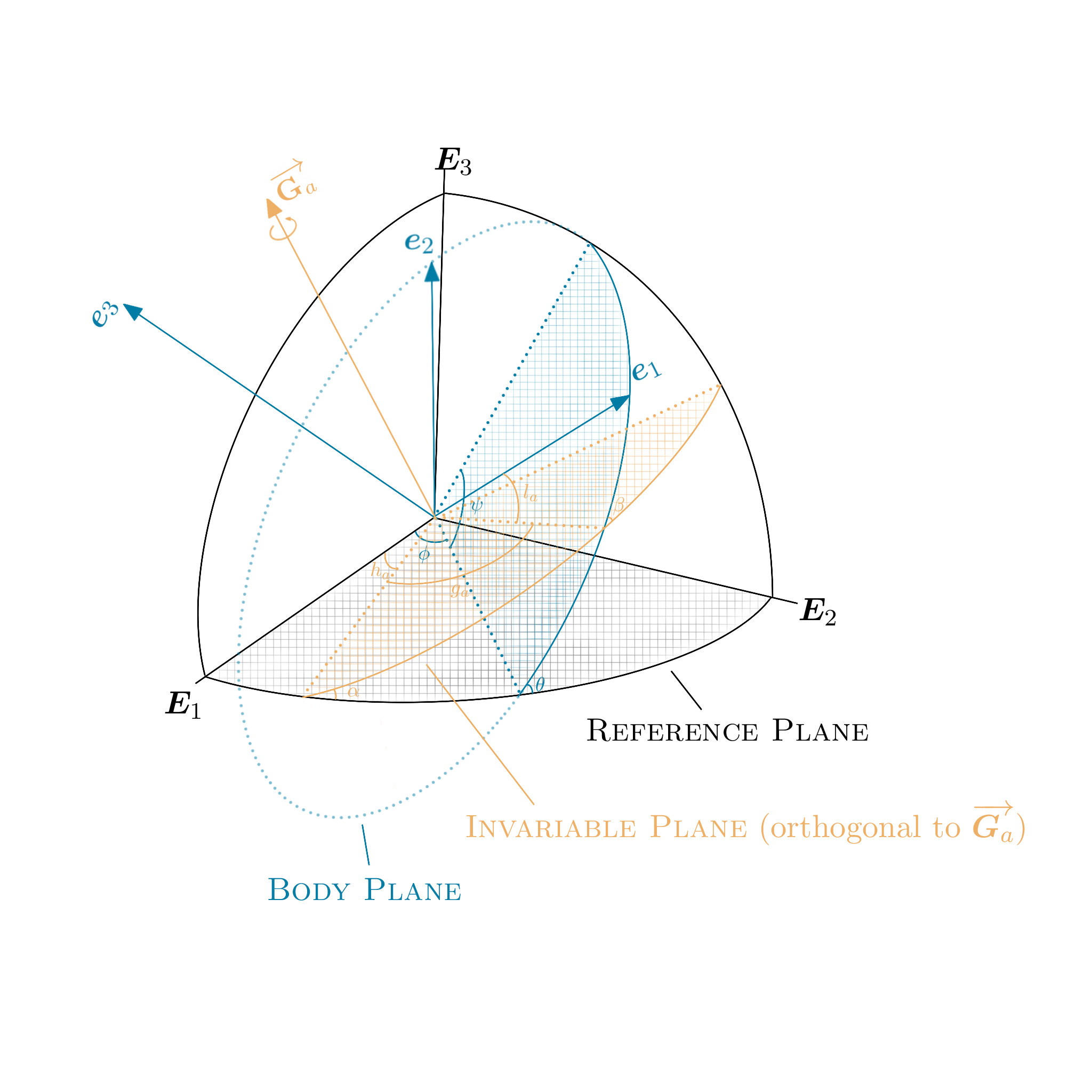}
\caption{Three frames with Andoyer variables}\label{fig:andoyer}
\end{figure}

Explanations:
$\overrightarrow{\bm G_a}$ is the angular momentum vector of the rigid body;
$\alpha$ is the angle between the $z$-axis of the \textit{angular momentum frame} and the $z$-axis of the \textit{reference frame};
$\beta$ is the angle between the $z$-axis of the \textit{angular momentum frame} and the $z$-axis of \textit{body frame}.

\subsection{Converting Cartesian Spatial Positions and Momenta to the Delaunay Variables}

In the system of one planet rotating around binary stars, there are two orbits:
the inner orbit of two stars and the outer orbit of the planet rotating around the center of mass of the binary stars.
As the inner orbit's oscillation is relatively small compared with the planet's orbit, we make the assumption that the inner orbit is near Keplerian,
while the outer orbit has its osculating orbital elements being oscillatory (except for the anomaly variable).
We denote the orbital elements~\citep{morbidelli2002modern} of the inner orbit and the outer orbit as
$\left( a_*, e_*, i_*, \omega_*, \Omega_*, \nu_* \right)$,  
$\left( a_p, e_p, i_p, \omega_p, \Omega_p, \nu_p \right)$ respectively.
Here, for the inner orbit (relative orbit of the $2$nd star around the $1$st star),
$a_*$ is the length of semi-major axis;
$e_*$ is the eccentricity;
$i_*$ is the inclination;
$\omega_*$ is the argument of periapsis;
$\Omega_*$ is the longitude of ascending node;
and we set $\nu_*$ to be the true anomaly.
For the outer orbit of the planet around the center of mass of two stars, $a_p$ is the length of semi-major axis;
$e_p$ is the eccentricity;
$i_p$ is the inclination;
$\omega_p$ is the argument of periapsis;
$\Omega_p$ is the longitude of ascending node;
and $\nu_p$ the true anomaly of the planet.
Expressing the positions of the three bodies using orbital elements, we have
\begin{align}
\left\{
\begin{aligned}
    \bm q_{*_1} & = \delta \cdot \frac{a_* \left( 1-e_*^2 \right)}{1+e_* \cos \left( \nu_* \right)} \cdot R_x\left( \Omega_* \right) \cdot R_z(i_*) \cdot R_x(\omega_*) \cdot
    \begin{bmatrix}
        \cos \nu_* \\
        \sin \nu_* \\
        0 \\
    \end{bmatrix}, \\
    \bm q_{*_2} & = - \frac{M_{*_1}}{M_{*_2}} \bm q_{*_1}, \\
    \bm q_p & = \frac{a_p \left( 1-e_p^2 \right)}{1+e_p \cos \nu_p} \cdot R_x\left( \Omega_p \right) \cdot R_z(i_p) \cdot R_x(\omega_p) \cdot
    \begin{bmatrix}
        \cos \nu_p \\
        \sin \nu_p \\
        0 \\
    \end{bmatrix},\\
\end{aligned}
\right.
    \label{eq:positions}
\end{align}
with $\bm q_{*_i}$, $i=1,2$ the positions of two stars, $\bm q_p$ the position of the planet and $R_x(\cdot)$, $R_y(\cdot)$, $R_z(\cdot)$ are defined in \cref{eq:Rx,eq:Ry,eq:Rz}.

\begin{align}
    R_x(\varphi) := \begin{bmatrix}
        1 & 0             & 0              \\
        0 & \cos{\varphi} & -\sin{\varphi} \\
        0 & \sin{\varphi} & \cos{\varphi}  \\
    \end{bmatrix},
    \label{eq:Rx}
\end{align}
\begin{align}
    R_y(\varphi) := \begin{bmatrix}
        \cos{\varphi}  & 0 & \sin{\varphi} \\
        0              & 1 & 0             \\
        -\sin{\varphi} & 0 & \cos{\varphi} \\
    \end{bmatrix},
    \label{eq:Ry}
\end{align}
\begin{align}
    R_z(\varphi) := \begin{bmatrix}
        \cos{\varphi} & -\sin{\varphi} & 0 \\
        \sin{\varphi} & \cos{\varphi}  & 0 \\
        0             & 0              & 1 \\
    \end{bmatrix}.
    \label{eq:Rz}
\end{align}

Denote the corresponding Delaunay variables of the inner orbit as
\begin{align}\label{eq:delaunay_variables_appendix_star}
    \mathcal{D}_* = & \left\{ l_*, g_*, h_*, L_*, G_*, H_* \right\}
= \bigg\{ M_*, \omega_*, \Omega_*, \mu_* \sqrt{\Gc (M_{*_1}+M_{*_2}) a_*}, L_* \sqrt{1-e_*^2}, G_* \cos i_* \bigg\},
\end{align}
with $\mu_*=\frac{M_{*_1} M_{*_2}}{M_{*_1} + M_{*_2}}$ and $M_*$ the mean anomaly of the inner orbit.
Similarly, the Delaunay variables of the outer orbit is denoted as
\begin{align}\label{eq:delaunay_variables_appendix_planet}
    \mathcal{D} = & \left\{ l_d, g_d, h_d, L_d, G_d, H_d \right\}
    = \bigg\{ M_p, \omega_p, \Omega_p, \mu_p \sqrt{\Gc \left( M_{*_1} + M_{*_2} + m_p \right) a_p}, L_d \sqrt{1-e_p^2}, G_d \cos i_p \bigg\},
\end{align}
with $\mu_p=\frac{m_p (M_{*_1}+M_{*_2})}{M_{*_1} + M_{*_2} + m_p}$ and $M_p$ the mean anomaly of the outer orbit.
In \cref{eq:delaunay_variables_appendix_star,eq:delaunay_variables_appendix_planet}, $l_*,g_*,h_*,l_d,g_d,h_d$ are angle variables
with $L_*,G_*,H_*,L_d,G_d,L_d$ their conjugate momenta.
Mapping the orbital elements in \cref{eq:positions} to Delaunay variables defined above, the positions can be expressed as functions of Delaunay variables.
Noting that different anomalies are used in \cref{eq:positions} and \cref{eq:delaunay_variables_appendix_star}, \cref{eq:delaunay_variables_appendix_planet}, one need to solve the Kepler equation to perform the mapping from the orbital elements to the Delaunay variables.

\subsection{The Orientation and the Rotational Momenta of the Rigid Planet in the Andoyer Variables}
\label{SI:andoyer_variables}

As we are interested in the dynamics of the planet's spin, the planet is modeled as a rigid body to account for its finite size,
thus the orientation and the rotation of the rigid body are necessary to represent the rigid body.
As is well known, each rotation matrix $\bm{R} \in \mathsf{SO}(3)$ rotates 3-dimensional vectors in Euclidean space in a unique way (we will use the convention of column vectors and left multiplication).
Thus we may use a time dependent rotation matrix to represent
the transformation from the \textit{body frame} to the \textit{reference frame} in our dynamics (i.e.\ the orientation of the rigid body).
Angles extracted from the Andoyer variables are used to reflect the orientation by a sequence of \textit{standard rotations} (\cref{eq:Rx,eq:Ry,eq:Rz}).
Also the rotational angular momentum of the rigid body can be expressed with the momentum variables of the Andoyer variables, conjugating to the three angle variables.
Moreover, other than just representing the state of the rigid body, the Andoyer variables also separate the slow and fast scales of the dynamics.

In detail, similar to the Delaunay variables, the Andoyer variables are canonical coordinates. They consist of three angle variables $\left\{ g_a, h_a, l_a \right\}$,
and their conjugate momenta $\left\{ G_a, H_a, L_a \right\}$.
The angle variables are defined by the orientation of three frames (see \cref{fig:andoyer}),
while $G_a, H_a, L_a$ are defined as the following~\citet{kinoshita1972first, gurfil2007},
\begin{align}
    \left\{
        \begin{aligned}
            G_a & = \left| \overrightarrow{\bm G_a} \right|, \\
            H_a & = G_a \cos{\alpha} = \left| \overrightarrow{\bm G_a} \cdot \bm E_3 \right|, \\
            L_a & = G_a \cos{\beta} = \left| \overrightarrow{\bm G_a} \cdot \bm e_3 \right|, \\
        \end{aligned}
    \right.
    \label{GHL_andoyer}
\end{align}
with $\alpha$, $\beta$ described in \cref{fig:andoyer}.  In other words, $G_a$ is the magnitude of the angular momentum;
$H_a$ is the magnitude of $\overrightarrow{\bm G_a}$'s orthogonal projection to $\bm E_3$;
$L_a$ is the magnitude of $\overrightarrow{\bm G_a}$'s orthogonal projection to $\bm e_3$.

Setting the \textit{angular momentum frame} as an intermediate frame,
we may represent the rotation matrix $\bm R$ as a product of a sequence of \textit{standard rotations} using angles in \cref{fig:andoyer}~\citet{gurfil2007},
\begin{align}
    \bm R = R_z(h_a) \cdot R_x(\alpha) \cdot R_z(g_a) \cdot R_x(\beta) \cdot R_z(l_a).
    \label{eq:R_in_angles}
\end{align}

Since $\cos{\alpha} = \frac{H_a}{G_a}$ and $\cos{\beta} = \frac{L_a}{G_a}$, replacing $\alpha,\beta$ in \cref{eq:R_in_angles},
the rotation matrix can be represented purely using the Andoyer variables $\left\{ g_a, h_a, l_a, G_a, H_a, L_a \right\}$ (name it as $\bm R^\mathcal{A}$),
\begin{align}\label{eq:R_a}
    \bm R^\mathcal{A} = R_z\left( h_a \right) \cdot R_x\left( \arccos\left(\frac{H_a}{G_a}\right) \right) \cdot R_z\left( g_a \right) \cdot R_x\left( \arccos\left( \frac{L_a}{G_a} \right) \right) \cdot R_z\left( l_a \right).
\end{align}

\section{Dynamics of the Spin for an Earth-like Planet in Circumbinary Systems: two point mass stars + one rigid body planet}\label{SI:secular_theory}

We will firstly formulate the exact Hamiltonian in slow and fast variables in Section D.1. Then, in Section D.2, fast variables will be averaged in sequence under the "average assumptions" (\cref{eq:appendix_assumption}), resulting in an approximated spin dynamics \cref{eq:avg_H_appendix}.

\subsection{Hamiltonian Formulation}\label{SI:sec_hamiltonian_formulation}

In this section, we will express the Hamiltonian using the canonical Delaunay and Andoyer variables $\Dc,\Dc_*,\Ac$ introduced in Appendix C.
In short, $H=T_{*_1}+T_{*_2}+T_p + V_* + V_1 + V_2$ with $T_{*_i}$ kinetic energies of the $i$th star, $T_p$ the kinetic energy of the planet, $V_*$ potential energy between two stars and $V_i$ potential energies between $i$th star and the planet.We assume that the inner orbit is near Keplerian (as the perturbation of the planet is negligible) such that during the dynamics $T_{*_1}+T_{*_2} + V_* = -\frac{\Gc^2{(M_{*_1}+M_{*_2})}^2 \mu_*^3}{2 L_*^2} + \mathfrak{R}_* = H_0(L_*) + \mathfrak{R}_*$.
We will calculate $T_p$, $V_1$ and $V_2$ terms and conclude the Hamiltonian in action angle variables in the following.
As a result, the exact Hamiltonian \cref{eq:H_approximated_appendix} in slow / fast variables is written as a summation of a small-terms-combined $\tilde{\mathfrak{R}}$ (see \cref{eq:appendix_remainder}) and the effective Hamiltonian.
(note that in the section D.2, $\tilde{\mathfrak{R}}$ is dropped during averaging
 under the "averaging assumptions" (see \cref{eq:appendix_assumption}), but we still keep track of the order of small parameters in $\tilde{\mathfrak{R}}$ in this section to better quantify the effects of small $e_*$, $e_p$, etc.)

\subsubsection{The Kinetic Energy of the Planet}\label{subsubsec:T}

The total kinetic energy of the planet is the sum of its linear kinetic energy $T^{linear}\left( \Dc \right)$
and its rotational kinetic energy $T^{rot}\left(\Ac,L_*,l_*\right)$.
The former is $T^{linear} = \frac{\Gc^2 {(M_{*_1}+M_{*_2}+m_p)}^2 \mu_p^3}{2 L_d^2}$.
For the latter, by the definition of the Andoyer variables, we may express the rotational kinetic energy as
\begin{align}
    \begin{split}
            T^{rot} (\mathcal A)
        = \frac{G_a^2 \sin^2 \beta}{2 I_p^{(1)}} + \frac{G_a^2 \cos^2 \beta}{2 I_p^{(3)}}
        = \frac{H_a^2}{2 I_p^{(1)}} + \frac{G_a^2 - H_a^2}{2 I_p^{(3)}}.
    \end{split}
    \label{eq:T_in_Andoyer_variables_appendix}
\end{align}
with
\begin{align}
\bm I_p = \begin{bmatrix}
    I_p^{(1)} & 0 & 0 \\
    0 & I_p^{(2)} & 0 \\
    0 & 0 & I_p^{(3)} \\
\end{bmatrix}
\end{align}
the (standard) moment of inertia tensor of the planet.

\subsubsection{The Potential Energy of the Planet}\label{subsubsec:V}

The potential energy of the planet is influenced by both stars,
\begin{align}
    V(\mathcal A, \mathcal D, l_*, L_*) = V_1 + V_2.
    \label{V_in_Andoyer_variables}
\end{align}

Without loss of generality, we will only derive $V_1$ here.
Integrating the potential for each mass point of the rigid planet,
\begin{align}
    \begin{split}
        V_1 = & \int_{\mathcal{B}} - \frac{\mathcal{G} M_{*_1} \rho(\bm{x})} {\norm{\left( \bm q_p + \bm R_p\bm{x} \right) - \bm q_{*_1}}} \, d\bm{x} \\
        = & \frac{\Gc M_{*_1} m_p}{\norm{\bm q_p-\bm q_{*_1}}}
        \bigg\{
            - 1
            - \frac{tr[\bm I_p]}{2 m_p \norm{\bm q_p-\bm q_{*_1}}^2}
            + \frac{3 {\left( \bm q_p-\bm q_{*_1} \right)}^T {\bm R_p} \bm I_p {\bm R_p}^T \left( \bm{q}_p-\bm{q}_{*_1} \right)}
            {2 m_p \norm{\bm{q}_p-\bm{q}_{*_1}}^4} 
            + \mathcal{O} \left( {\left(\frac{\mathcal{R}}{\norm{\bm{q}_p-\bm{q}_{*_1}}}\right)}^4 \right)
        \bigg\} \,,
    \end{split}
    \label{eq:true_V1}
\end{align}
with $\bm q_p, \bm q_{*_i}$ in the Delaunay variables $\Dc, \Dc_*$, $\bm R_p$ the rotation matrix in the Andoyer variables $\Ac$ (\cref{eq:R_a}) and $\bm I_p$ the moment of the inertia tensor of a symmetric planet ($I_p^{(1)} = I_p^{(2)} \neq I_p^{(3)}$).
Term by term, we will show how we express \cref{eq:true_V1} in the Delaunay variables and the Andoyer variables with higher order terms separated (the following derivation are simply some high order expansions, feel free to jump to the conclusion of this section).

In detail, to analyze the term in \cref{eq:true_V1} with $\bm R_p$, we assume the angle $\beta$ (the angle between the spin axis and the $\bm e_3$-axis of the planet) small as physically, the oblate shape of the planet is flattened by the spinning of the planet.
Thus we may rewrite $\bm R_p \bm I_p \bm R_p^T$ in \cref{eq:true_V1} by separating the higher order terms of $\Oc\left( \beta \right)$,
\begin{align}
    \begin{split}
          \bm R_p \bm I_p \bm R_p^T
        = &  R_z(h_a) \cdot R_x(\alpha) \cdot R_z(g_a) \cdot R_x(\beta) \cdot R_z(l_a) \cdot \bm I_p
            \cdot R_z^T(l_a) \cdot R_x^T(\beta) \cdot R_z^T(g_a) \cdot R_x^T(\alpha) \cdot R_z^T(h_a) \\
        = &  R_z(h_a) \cdot R_x(\alpha) \cdot R_z(g_a) \cdot R_x(\beta) \cdot \bm I_p
            \cdot R_x^T(\beta) \cdot R_z^T(g_a) \cdot R_x^T(\alpha) \cdot R_z^T(h_a) \\
        = &  R_z(h_a) \cdot R_x(\alpha) \cdot \bm I_p \cdot R_x^T(\alpha) \cdot R_z^T(h_a) + \mathcal{O}(\beta) \\
        = &  I_p^{(1)} \Bigg \{ \bm I_{3 \times 3} + \frac{I_p^{(3)}-I_p^{(1)}}{I_p^{(1)}} \cdot R_z(h_a) \cdot R_x(\alpha)
          \cdot \begin{bmatrix}
            0 & 0 & 0 \\ 0 & 0 & 0 \\ 0 & 0 & 1 \\
        \end{bmatrix} \cdot R_x^T(\alpha) \cdot R_z^T(h_a) + \mathcal{O}(\beta) \Bigg\}. \\
    \end{split}
    \label{eq:RIR}
\end{align}

For $\norm{\bm q_p - \bm q_{*_1}}$, using \cref{eq:positions}, we have
\begin{align}
        \norm{\bm{q}_{*_1}-\bm{q}_p}^2 = \norm{\bm q_p}^2
        \cdot \left( 1 + 2 \cdot \mathfrak{P} (\nu_*,\nu_p,i_p,\omega_p,\Omega_p) \cdot \frac{\norm{\bm q_{*_1}}}{\norm{\bm q_p}}
        + {\left( \frac{\norm{\bm q_{*_1}}}{\norm{\bm q_p}} \right)}^2 \right),
    \label{eq:dis_star_planet}
\end{align}
with
\begin{align}
    \begin{split}
        & \mathfrak{P} (\nu_*,\nu_p,i_p,\omega_p,\Omega_p) \\
    =   & \sin \left( \nu_p \right) \cos \left( \nu_* \right) \big[\cos \left( i_p \right) \cos \left( \omega_p \right) \sin \left( \Omega_p \right)
        +\sin \left( \omega_p \right) \cos \left( \Omega_p \right)\big]
        -\sin \left( \nu_* \right) \big[\cos \left(i_p\right) \cos \left( \Omega_p \right) \\
        & \cdot \sin \left( \nu_p + \omega_p \right) + \sin \left( \Omega_p \right) \cos \left( \nu_p + \omega_p \right)\big]
        +\cos \left( \nu_p \right) \cos \left( \nu_* \right) \big[\cos \left( i_p \right) \sin \left( \omega_p \right) \sin \left( \Omega_p \right)
        - \cos \left( \omega_p \right) \cos \left( \Omega_p \right)\big], \\
    \end{split}
    \label{eq:mathfrakP}
\end{align}
and
\begin{align}
    \frac{\norm{\bm q_{*_1}}}{{\norm{\bm q_p}}} = \delta \frac{a_*(1 + \mathcal O(e_*))}{a_p(1 + \mathcal O(e_p))} 
    = \delta \frac{a_*}{a_p} + \mathcal{O}(e_*) + \mathcal{O}(e_p).
    \label{eq:norm_ratio}
\end{align}

Applying Legendre polynomial for the expansion of $\frac{1}{\norm{\bm q_{*_1} - \bm q_p}}$ with $\norm{\bm q_{*_1} - \bm q_p}$ calculated in \cref{eq:dis_star_planet}
, we have
\begin{align}
    \begin{split}
          \frac{1}{\norm{\bm{q}_{*_1}-\bm{q}_p}}
        = & \frac{1}{\norm{\bm q_{p}}} \Bigg( 1
            - \frac{\norm{\bm q_{*_1}}}{\norm{\bm q_p}} \mathfrak{P} 
            + {\left( \frac{\norm{\bm q_{*_1}}}{\norm{\bm q_p}} \right)}^2 \left( \frac{3}{2} \mathfrak{P}^2 - \frac{1}{2} \right)
          + \mathcal O \left( {\left( \frac{\norm{\bm q_{*_1}}}{\norm{\bm q_p}} \right)}^3 \right)
          \Bigg) \\
          = & \frac{1}{a_p} \Bigg( 1 
              - \delta {\left( \frac{a_*}{a_p} \right)} \mathfrak{P} 
            + \delta^2 {\left( \frac{a_*}{a_p} \right)}^2 \left( \frac{3}{2} \mathfrak{P}^2 - \frac{1}{2} \right)
            + \mathcal O \left( {\left( \frac{a_*}{a_p} \right)}^3 \right)
      \Bigg) + \mathcal{O}(e_*) + \mathcal{O}(e_p). \\
    \end{split}
    \label{eq:legendre_poly}
\end{align}


Let
\begin{align}
    \mathfrak{Q}_{*_1} = \delta \cdot a_* \cdot 
    \begin{bmatrix}
        \cos \nu_* \\
        \sin \nu_* \\
        0 \\
    \end{bmatrix}
\text{ and }
    \mathfrak{Q}_p = a_p \cdot R_x\left( \Omega_p \right) \cdot R_z(i_p) \cdot R_x(\omega_p) \cdot
    \begin{bmatrix}
        \cos \nu_p \\
        \sin \nu_p \\
        0 \\
    \end{bmatrix}.
    \label{eq:mathfrakQ}
\end{align}
According to \cref{eq:positions}, we have
\begin{align}
    \bm q_{*_1} = \mathfrak{Q}_{*_1} + \Oc (e_*),
    \qquad
    \bm q_p = \mathfrak{Q}_p + \mathcal{O}(e_p).
    \label{eq:approximatedq}
\end{align}

Plugging \cref{eq:approximatedq,eq:legendre_poly,eq:RIR},
into \cref{eq:true_V1}
we have
\begin{align}
    \begin{split}
        V_1 = & \frac{\Gc M_{*_1} m_p}{a_p} \bigg\{ - 1 - \frac{\frac{tr[\bm{I}_p]}{2} - \frac{3}{2} I_p^{(1)}}{m_p a_p^2}
            + \frac{I_p^{(3)}-I_p^{(1)}}{m_p a_p^4} \cdot {\left( \mathfrak{Q}_p-\mathfrak{Q}_{*_1} \right)}^T \cdot R_z(h_a) \cdot R_x(\alpha)
         \cdot \begin{bmatrix}
            0 & 0 & 0 \\ 0 & 0 & 0 \\ 0 & 0 & 1 \\
        \end{bmatrix}
          \cdot R_x^T(\alpha) \cdot R_z^T(h_a) \left( \mathfrak{Q}_p-\mathfrak{Q}_{*_1} \right) \\
           &  + \frac{a_*}{a_p} \mathfrak{D}
            + \Oc \left( {\left(\frac{\mathcal{R}}{a_p}\right)}^2 \beta \right)
            + \Oc \left( {\left(\frac{\mathcal{R}}{a_p}\right)}^4 \right)
           + \Oc (e_*) + \Oc (e_p) + \Oc \left( {\left( \frac{a_*}{a_p} \right)}^3 \right) \bigg\} \\
    = & - \frac{\Gc M_{*_1} m_p}{a_p} + \frac{a_*}{a_p} \mathfrak{R}_1
    + \frac{\Gc M_{*_1} m_p}{a_p} \left\{
            \Oc \left( {\left(\frac{\mathcal{R}}{a_p}\right)}^2 \beta \right)
            + \Oc \left( {\left(\frac{\mathcal{R}}{a_p}\right)}^4 \right)
           + \Oc (e_*) + \Oc (e_p) + \Oc \left( {\left( \frac{a_*}{a_p} \right)}^3 \right) \right\} \\
    \end{split}
    \label{eq:V1_final}
\end{align}
where $\mathfrak{D}$ is an $\Oc(1)$ polynomial and
\begin{align}
    \begin{split}
        \mathfrak{R}_1 = & \frac{\Gc M_{*_1} m_p}{a_p} \bigg\{ - \mathfrak{D} - \frac{\frac{tr[\bm{I}_p]}{2} - \frac{3}{2} I_p^{(1)}}{m_p a_p} + \frac{I_p^{(3)}-I_p^{(1)}}{m_p a_p^4}
                \cdot {\left( \mathfrak{Q}_p-\mathfrak{Q}_{*_1} \right)}^T \cdot R_z(h_a) \cdot R_x(\alpha)
             \cdot \begin{bmatrix}
                0 & 0 & 0 \\ 0 & 0 & 0 \\ 0 & 0 & 1 \\
            \end{bmatrix} \cdot R_z^T(h_a) \cdot
            \left( \mathfrak{Q}_p-\mathfrak{Q}_{*_1} \right)
        \bigg\}. \\
        \label{eq:mathfrakR1}
    \end{split}
\end{align}
Similarly, $V_2 = -\frac{\Gc M_{*_2} m_p}{a_p} + \frac{a_*}{a_p} \mathfrak{R}_2 + \frac{\Gc M_{*_1} m_p}{a_p} \left\{
            \Oc \left( {\left(\frac{\mathcal{R}}{a_p}\right)}^2 \beta \right)
            + \Oc \left( {\left(\frac{\mathcal{R}}{a_p}\right)}^4 \right)
           + \Oc (e_*) + \Oc (e_p) + \Oc \left( {\left( \frac{a_*}{a_p} \right)}^3 \right) \right\}$.

Changing coordinates from $\left( a_p, e_p, i_p, \omega_p, \Omega_p, \nu_p \right)$ to $\mathcal D$, we have $V=V_1+V_2$ as a function of $\mathcal A, \mathcal D$, $l_*$ and $L_*$.
Note that $\nu_p$ is the true anomaly and $M_p$ (in $\Dc$) is the mean anomaly, and we expressed $M_p$ as $\nu_p + \Oc(e_p)$ instead of solving the Kepler equation exactly.

\subsubsection{The Hamiltonian}\label{subsec:Hamiltonian}

Summing the kinetic energy 
and the potential energy 
, the full Hamiltonian is:
\begin{align}\label{eq:H_org_appendix}
\begin{split}
    H \left( \mathcal A, \mathcal D, l_*, L_*  \right) & = T^{linear}\left( \mathcal D \right) + T^{rot}\left( \mathcal A \right)
    + V\left( \mathcal A, \mathcal D, l_*, L_* \right) \\
    & = -\frac{\Gc^2 {(M_{*_1}+M_{*_2})}^2 \mu_*^3}{2 L_*^2} + \frac{\Gc^2 {(M_{*_1}+M_{*_2}+m_p)}^2 \mu_p^3}{2 L_d^2} + T^{rot}\left( \mathcal A \right)
    + V\left( \mathcal A, \mathcal D, l_*, L_* \right). \\
\end{split}
\end{align}

Plugging in the potential $V_1$ and $V_2$ to \cref{eq:H_org_appendix}, one sees that the Hamiltonian only depends on

$\Sc = \left\{ l_d, h_d, g_d, L_d, H_d, G_d, l_a, g_a, h_a, G_a, H_a, L_a, l_*, L_*\right\}$,
\begin{align}
    \begin{split}
            H\left( \Sc \right)
        = & -\frac{\Gc^2 {(M_{*_1}+M_{*_2})}^2 \mu_*^3}{2 L_*^2} + \frac{\Gc^2 {(M_{*_1}+M_{*_2}+m_p)}^2 \mu_p^3}{2 L_d^2} + T^{rot} (\Sc) + V_1 + V_2, \\
        = & -\frac{\Gc^2 {(M_{*_1}+M_{*_2})}^2 \mu_*^3}{2 L_*^2} + \frac{\Gc^2 {(M_{*_1}+M_{*_2}+m_p)}^2 \mu_p^3}{2 L_d^2} + \frac{H_a^2}{2 I_p^{(1)}} + \frac{G_a^2 - H_a^2}{2 I_p^{(3)}}
           - \frac{\Gc^2 {(M_{*_1}+M_{*_2}+m_p)}^2 \mu_p^3}{L_d^2} + \frac{a_*(L_*)}{a_p(L_d)} \left( \mathfrak{R}_1 + \mathfrak{R}_2 \right) + \widetilde{\mathfrak{R}}, \\
        = & -\frac{\Gc^2 {(M_{*_1}+M_{*_2})}^2 \mu_*^3}{2 L_*^2} - \frac{\Gc^2 {(M_{*_1}+M_{*_2}+m_p)}^2 \mu_p^3}{2 L_d^2} + \frac{H_a^2}{2 I_p^{(1)}} + \frac{G_a^2 - H_a^2}{2 I_p^{(3)}}
            + \frac{a_*(L_*)}{a_p(L_d)} \left( \mathfrak{R}_1 + \mathfrak{R}_2 \right) + \widetilde{\mathfrak R}, \\
        = & -\frac{\Gc^2 {(M_{*_1}+M_{*_2})}^2 \mu_*^3}{2 L_*^2} - \frac{\Gc^2 {(M_{*_1}+M_{*_2}+m_p)}^2 \mu_p^3}{2 L_d^2} + \frac{H_a^2}{2 I_p^{(1)}} + \frac{G_a^2 - H_a^2}{2 I_p^{(3)}}
            + \frac{a_*(L_*)}{a_p(L_d)} \left( \mathfrak{R}_1 + \mathfrak{R}_2 \right) + \widetilde{\mathfrak R}, \\
        = & H_0(L_*) - \frac{\Gc^2 {(M_{*_1}+M_{*_2}+m_p)}^2 \mu_p^3}{2 L_d^2} + \frac{H_a^2}{2 I_p^{(1)}} + \frac{G_a^2 - H_a^2}{2 I_p^{(3)}}
            + \frac{a_*(L_*)}{a_p(L_d)} \left( \mathfrak{R}_1 + \mathfrak{R}_2 \right) + \widetilde{\mathfrak R}, \\
    \end{split}
    \label{eq:H_approximated_appendix}
\end{align}
with 
\begin{align}
\widetilde{\mathfrak{R}}
= H_0(L_*) \frac{a_*}{a_p} \left\{
            \Oc \left( {\left(\frac{\mathcal{R}}{a_p}\right)}^2 \beta \right)
            + \Oc \left( {\left(\frac{\mathcal{R}}{a_p}\right)}^4 \right)
           + \Oc (e_*) + \Oc (e_p) + \Oc \left( {\left( \frac{a_*}{a_p} \right)}^3 \right) \right\} + \mathfrak{R}_*,
    \label{eq:appendix_remainder}
\end{align}

\subsection{Using Averaging Theory to Construct an Approximated Dynamics}\label{SI:averaging}

The dynamics of the Hamiltonian system above (\cref{eq:H_approximated_appendix}) consists of the slow components $\Sc_{slow}=\{ h_d, g_d, L_d, H_d, G_d, h_a, G_a, H_a, L_* \}$, as well as the fast components $\Sc_{fast}=\{ l_*, l_d, g_a\}$, which correspond to binary's orbital phase, planet's orbital phase, and planet spin.
The fast and slow components are separated by scaling $\eta_1$ and $\eta_2$,
where $\eta_1 = \frac{a_*}{a_p}$ is a small parameter corresponding to the stellar binary are closer to each other than the planet, and $\eta_2 = \frac{I_p^{(1)}-I_p^{(3)}} {I_p^{(3)}}$ is another small parameter modeling the oblateness of the planet. In the observed circumbinary systems, $0.084 \leq \eta_1 \leq 0.23$~\citep{li2016uncovering};
and $\eta_2 \approx 0.00334$ for Earth-like planets.

Let us now compare the relative timescales of the variables, in order to determine and justify the order of a sequence of averaging approximations. Specifically, the slow variables change with rate either $\Oc(\eta_1)$ or $\Oc(\eta_2)$. 
On the other hand, $\dot{l}_*=\Oc\left( \eta_1^{-\nicefrac{3}{2}} \right)$, $\dot{l}_d=\Oc(1)$ and $\dot{g}_a$ equals the spin rate of the planet around the direction of the angular momentum.

Here, we assume that the frequencies of $l_*$, $l_d$ and $g_a$ are not commensurable.
Thus, we may average over $l_*$ and $l_d$ in sequence, assuming no resonance occurs between $l_*$, $l_d$, $g_a$, then we saw that $g_a$ is decoupled from the system.

\subsubsection{Average over binary's orbital phase \texorpdfstring{$l_*$}{}}
A closer look at $H\left( \Sc \right)$ reveals that
all terms dependent on $l_*$ are of order $\mathcal{O}(\eta_1^2)$.
Thus, $H\left( \Sc \right)$ can be written as 
\begin{align}
    \begin{split}
        H\left( \Sc \right) = H_0 \left( L_*  \right) + \eta_1 H_1 \left( \Sc\backslash\left\{ l_*,L_* \right\} \right) + \eta_1^2 H_2 \left( \Sc \right) + \eta_1^3 H_3 \left( \Sc \right)
          & + \Oc (\eta_1^4) H_{\text{remainder}},
    \end{split}
    \label{eq:H_sep_large_small_appendix}
\end{align}
under the assumption that $\widetilde{\mathfrak{R}}/H_0(L_*) = \Oc(\eta_1^4)$. Apply the assumption for each term in $\widetilde{\mathfrak{R}}$, we rewrite the assumption as the "averaging assumptions" below
\begin{equation}
    \label{eq:appendix_assumption}
            {\left(\frac{\mathcal{R}}{a_p}\right)}^2 \beta,
            {\left(\frac{\mathcal{R}}{a_p}\right)}^4, e_*, e_p = \Oc(\eta_1^3)
            \text{  and  }
            \mathfrak R_* / H_0(L_*) = \Oc(\eta_1^4).
\end{equation}

As $H(\Sc)$ is nearly-integrable,
canonical averaging theory (e.g.,~\citet{morbidelli2002modern}) guarantees that averaging $H(\Sc)$ over $l_*$ is equivalent to first-order averaging of the Hamiltonian dynamics.
After $l_*$ is averaged, $L_*$ becomes a conserved quantity. Therefore, excluding $H_0(L_*)$, we obtain the averaged Hamiltonian
\begin{align}
    \begin{split}
        & \overline{H} \left( \Sc \right) = H_1 \left( \Sc\backslash\left\{ l_*,L_* \right\} \right) + \frac{1}{2 \pi} \int_{0}^{2 \pi} \frac{\eta_1 H_2 \left( \Sc \right) + \eta_1^2 H_3 \left( \Sc \right)}{\dot{l}_*} \, d l_* \\
        = & H_1 \left( \Sc\backslash\left\{ l_*,L_* \right\} \right) + \eta_1 \cdot 0 + \eta_1^2 \overline{H}_3 \left( \Sc\backslash\left\{ l_*,L_* \right\} \right) 
        = H_1 \left( \Sc\backslash\left\{ l_*,L_* \right\} \right) + \eta_1^2 \overline{H}_3 \left( \Sc\backslash\left\{ l_*,L_* \right\} \right) 
    \end{split}.
    \label{eq:H_avg_appendix}
\end{align}
which generates an approximated dynamics (see \cref{eq:avg_t}).

\begin{equation} \label{eq:avg_t}
\left\{
\begin{aligned}
    \dot{l}_d & = \frac{\Gc^2 {(M_{*_1}+M_{*_2}+m_p)}^2 \mu_p^3}{L_d^3}-\eta_1^2 \frac{(\delta -1) \delta \Gc^2 {(M_{*_1}+M_{*_2}+m_p)}^2 \mu_p^3 \left(G_d^2 \left(3 \cos \left(2 \left(g_d+l_d\right)\right)-1\right)+6 H_d^2 \sin ^2\left(g_d+l_d\right)\right)}{4 G_d^2 L_d^3}, \\
    \dot{g}_d & = -\frac{3 (\delta -1) \delta  H_d^2 \Gc^2 {(M_{*_1}+M_{*_2}+m_p)}^2 \mu_p^3 \eta_1 ^2 \sin ^2\left(g_d+l_d\right)}{2 G_d^3 L_d^2}, \\
    \dot{h}_d & = \frac{3 (\delta -1) \delta  H_d \Gc^2 {(M_{*_1}+M_{*_2}+m_p)}^2 \mu_p^3 \eta_1 ^2 \sin ^2\left(g_d+l_d\right)}{2 G_d^2 L_d^2}, \\
    \dot{L}_d & = \frac{3 (\delta -1) \delta  \Gc^2 {(M_{*_1}+M_{*_2}+m_p)}^2 \mu_p^3 \eta_1 ^2 \left(G_d^2-H_d^2\right) \sin \left(2 \left(g_d+l_d\right)\right)}{4 G_d^2 L_d^2}, \\
    \dot{G}_d & = \frac{3 (\delta -1) \delta  \Gc^2 {(M_{*_1}+M_{*_2}+m_p)}^2 \mu_p^3 \eta_1 ^2 \left(G_d^2-H_d^2\right) \sin \left(2 \left(g_d+l_d\right)\right)}{4 G_d^2 L_d^2}, \\
    \dot{H}_d & = 0, \\
    \dot{g}_a & = \frac{G_a}{I_p^{(3)}} + \eta_2 \left( f^{(1)}_0 \left( \mathcal S \right) + \eta_1^2 \cdot f^{(1)}_1 \left( \mathcal{S} \right) \right), \\
    \dot{h}_a & = \eta_2 \cdot \left( f^{(2)}_0\left( \mathcal S \right)+ \eta_1^2 \cdot f^{(2)}_1\left( \mathcal S \right) \right), \\
    \dot{G}_a & = 0, \\
    \dot{H}_a & = \eta_2 \cdot \left( f^{(3)}_0\left( \mathcal S \right) + \eta_1^2 \cdot f^{(3)}_1\left( \mathcal S \right) \right). \\
\end{aligned}
\right.
\end{equation}

\subsubsection{Average Over Planet's Orbital Phase \texorpdfstring{$l_d$}{}}

From the averaged Hamiltonian in \cref{eq:H_avg_appendix}, we may derive the corresponding dynamics \cref{eq:avg_t}.
Then, we average the vector field over the fast angle $l_d$. Important to note is, $\dot{l}_d$ actually depends on $l_d$, and therefore one cannot just average $\dot{L}_d$ (for example) uniformly over $l_d$ from 0 to $2\pi$. A proper ergodic averaging can be done either via integration against time, or, essentially equivalently, a weighted average that reflects the non-uniform ergodic measure of $l_d$ on the torus (which can be rigorously derived using normal form).

For example, for $L_d$, the averaged dynamics is
\[
    \dot{\overline{L}}_d
  = \frac{\frac{1}{2\pi} \int_0^{2\pi} \frac{\dot{L}_d}{\dot{l}_d} \, dt}
         {\frac{1}{2\pi} \int_0^{2\pi} \frac{1}{\dot{l}_d} \, dt}.
\]
Performing the same operation for $G_d, H_d, g_d, h_d, G_a, H_a, g_a, h_a$, we obtain the averaged dynamics \cref{eq:avg_ld}.

\subsubsection{Resulting Approximated Dynamics}

After averaging over $l_*$ and $l_d$, another fast angle $g_a$ is decoupled from the system (see \cref{eq:avg_ld}).
\begin{equation} \label{eq:avg_ld}
\left\{
\begin{aligned}
    \dot{\overline{L}}_d =\, & 0, \\
    \dot{\overline{G}}_d =\, & 0, \\
    \dot{\overline{H}}_d =\, & 0, \\
    \dot{\overline{G}}_a =\, & 0, \\
    \dot{\overline{g}}_d =\, & \frac{3 (\delta -1) \delta  H_d^2 \Gc^2 {(M_{*_1}+M_{*_2}+m_p)}^2 \mu_p^3 \eta_1 ^2}{L_d^2 \left(G_d^3 \left((\delta -1) \delta  \eta_1 ^2-4\right)-3 (\delta -1) \delta  G_d H_d^2 \eta_1 ^2\right)}, \\
    \dot{\overline{h}}_d =\, & -\frac{3 (\delta -1) \delta  H_d \Gc^2 {(M_{*_1}+M_{*_2}+m_p)}^2 \mu_p^3 \eta_1 ^2}{L_d^2 \left(G_d^2 \left((\delta -1) \delta  \eta_1 ^2-4\right)-3 (\delta -1) \delta  H_d^2 \eta_1 ^2\right)}, \\ 
    \dot{\overline{h}}_a = \, &
    \frac{3 \eta_2 \Gc^4 I_p^{(3)} \mu_p^7 (m_p+M_{*_1}+M_{*_2})^4}{8 m_p G_d^2 L_d^6 L_a^3 \sqrt{1-\frac{H_a^2}{L_a^2}} \left(G_d^2 \left((\delta -1) \delta  \eta_1 ^2-4\right)-3 (\delta -1) \delta  H_d^2 \eta_1 ^2\right)} \\
    & 
    \Bigg(G_d^4 H_a L_a \sqrt{1-\frac{H_a^2}{L_a^2}} \left(-85 (\delta -1) \delta  \eta_1^2+\left(223 (\delta -1) \delta  \eta_1 ^2-16\right) \cos \left(2 \left(h_d-h_a\right)\right)-16\right) \\
    & + 2 G_d^3 H_d \left(125 (\delta -1) \delta  \eta_1 ^2-16\right) \sqrt{1-\frac{H_d^2}{G_d^2}} \cos \left(h_d-h_a\right) \left(2 H_a^2-L_a^2\right) \\
    & + 2 G_d^2 H_d^2 H_a L_a \sqrt{1-\frac{H_a^2}{L_a^2}} \left(-95 (\delta -1) \delta  \eta_1 ^2+\left(8-68 (\delta -1) \delta  \eta_1 ^2\right) \cos \left(2 \left(h_d-h_a\right)\right)+24\right) \\
    & + 174 (\delta -1) \delta  G_d H_d^3 \eta_1 ^2 \sqrt{1-\frac{H_d^2}{G_d^2}} \cos \left(h_d-h_a\right) \left(2 H_a^2-L_a^2\right)-87 (\delta -1) \delta  H_d^4 H_a L_a \eta_1 ^2 \left(\cos \left(2 \left(h_d-h_a\right)\right)+3\right) \sqrt{1-\frac{H_a^2}{L_a^2}}\Bigg) \\
    \dot{\overline{H}}_a =\, & -\frac{3 \eta_2  \Gc^4 I_p^{(3)} \mu_p^7 \sin \left(h_d-h_a\right) (m_p+M_{*_1}+M_{*_2})^4}{4 m_p G_d^2 L_d^6 L_a^2 \left(G_d^2 \left((\delta -1) \delta  \eta_1 ^2-4\right)-3 (\delta -1) \delta  H_d^2 \eta_1 ^2\right)} \Bigg(G_d^4 \left(223 (\delta -1) \delta  \eta_1 ^2-16\right) \cos \left(h_d-h_a\right) \left(H_a^2-L_a^2\right) \\
    & - 8 G_d^2 H_d^2 \left(17 (\delta -1) \delta  \eta_1 ^2-2\right) \cos \left(h_d-h_a\right) \left(H_a^2-L_a^2\right) + G_d^3 H_d H_a L_a \left(16-125 (\delta -1) \delta  \eta_1 ^2\right) \sqrt{1-\frac{H_d^2}{G_d^2}} \sqrt{1-\frac{H_a^2}{L_a^2}} \\
    & - 87 (\delta -1) \delta  G_d H_d^3 H_a L_a \eta_1 ^2 \sqrt{1-\frac{H_d^2}{G_d^2}} \sqrt{1-\frac{H_a^2}{L_a^2}}+87 (\delta -1) \delta  H_d^4 \eta_1 ^2 \cos \left(h_d-h_a\right) \left(L_a^2-H_a^2\right)\Bigg)  \\
\end{aligned}
\right.
\end{equation}
and $\overline{g}_a$ is decoupled from the system.

Changing coordinates to $X = H_a/G_a$ and $h=h_d-h_a$, from \cref{eq:avg_ld}, we have
\begin{align}
\left\{
    \begin{aligned}
        \dot{X} &= \sin(h) \left[ C_1 X \sqrt{1-X^2} + 4 C_2 \cos (h) (1-X^2) \right], \\
        \dot{h} &= \frac{C_1 \cos (h) \left( 1-2X^2 \right)}{\sqrt{1-X^2}} - 2 C_2 X \cos(2 h)  + C_3 + 2 C_4 X, \\
    \end{aligned}
\right.
    \label{eq:approximated_dynmaics_appendix}
\end{align}
with 
\begin{align} \label{eq:Cs}
  \left\{
    \begin{aligned}
      C_1 & = \frac{3 \eta_2 \Gc^4 I_p^{(3)} \mu_p^7 \sqrt{1-\frac{H_d^2}{G_d^2}} (m_p+M_{*_1}+M_{*_2})^4 \left(G_d^2 H_d \left(125 (\delta -1) \delta  \eta_1 ^2-16\right)+87 (\delta -1) \delta  H_d^3 \eta_1 ^2\right)}{4 m_p G_d L_d^6 L_a \left(G_d^2 \left((\delta -1) \delta  \eta_1 ^2-4\right)-3 (\delta -1) \delta  H_d^2 \eta_1 ^2\right)} \\
      C_2 & = \frac{3 \eta_2 \Gc^4 I_p^{(3)} \mu_p^7 \left(G_d^2-H_d^2\right) (m_p+M_{*_1}+M_{*_2})^4 \left(G_d^2 \left(223 (\delta -1) \delta  \eta_1^2-16\right)+87 (\delta -1) \delta  H_d^2 \eta_1^2\right)}{16 m_p G_d^2 L_d^6 L_a \left(G_d^2 \left((\delta -1) \delta  \eta_1^2-4\right)-3 (\delta -1) \delta  H_d^2 \eta_1^2\right)}, \\
      C_3 & = -\frac{3 (\delta -1) \delta  \Gc^2 H_d \mu_p^3 \eta_1^2 (m_p+M_{*_1}+M_{*_2})^2}{L_d^2 \left(G_d^2 \left((\delta -1) \delta \eta_1 ^2-4\right)-3 (\delta -1) \delta  H_d^2 \eta_1^2\right)}, \\
      C_4 & = \frac{3 \eta_2 \Gc^4 I_p^{(3)} \mu_p^7 (m_p+M_{*_1}+M_{*_2})^4 \left(G_d^4 \left(85 (\delta -1) \delta \eta_1 ^2+16\right)+2 G_d^2 H_d^2 \left(95 (\delta -1) \delta  \eta_1^2-24\right)+261 (\delta -1) \delta  H_d^4 \eta_1^2\right)}{16 m_p G_d^2 L_d^6 L_a \left(G_d^2 \left((\delta -1) \delta  \eta_1^2-4\right) - 3 (\delta-1) \delta H_d^2 \eta_1^2\right)} \\
    \end{aligned}
  \right.
\end{align}
By definition, we know that $H_a$ is the angular momentum's orthogonal projection to $\bm E_3$ axis
and $h_d-h_a$ is the phase difference between the two precessions (the precession of the planet's orbit and the precession of the angular momentum).
Thus $X, h$ together give the approximated dynamics of the spin.

The Hamiltonian corresponding to the dynamics above is \cref{eq:approximated_dynmaics_appendix}
\begin{align} \label{eq:avg_H_appendix}
    H(X, h) = & C_1 \sqrt{1-X^2} X \cos (h) + C_2 \left( 1-X^2 \right) \cos(2h) +C_3 X+C_4 X^2.
\end{align}

The averaged dynamics has an $\mathcal O \left( \max(\eta_1,\sqrt{\eta_2}) \right)$ error at least till time $\mathcal O\left( \frac{1}{\max(\eta_1,\sqrt{\eta_2})} \right)$, and numerically, we observed that the approximation remains accurate over an even longer time span.

\subsection{Properties of the Dynamical System Corresponding to the Secular Theory}\label{SI:secular_analyses}

Note that the bifurcations of a physical system happen in the vicinity of $i_p=90^\circ$, one has to zoom in the $i_p$ to capture the bifurcation. In addition, the topology of the phase portraits across the bifurcations are twisted.
Therefore, as a complementary, we explore bifurcation of parameter $C_4$ by plotting the phase portraits and the bifurcation diagram of $X$, $h$ under a set of unphysical $C_1$, $C_2$, $C_3$ in \cref{fig:bifurcations_fixC1C2C3,fig:bifurcations_phase_portraits}.
We fix $C_1=C_2=C_3=-1$ in \cref{eq:approximated_dynmaics_appendix} and varying $C_4$.
It can be seen from \cref{fig:bifurcations_fixC1C2C3,fig:bifurcations_phase_portraits} that bifurcations happen at $C_4=-1,\,\frac{7\pm\sqrt{17}}{8}$ and the Hamiltonian pitchfork bifurcation at $C_4=\frac{7+\sqrt{17}}{8}$ have similar topological change of the phase portraits (shown from \cref{phase:7} to \cref{phase:8}) with the Hamiltonian bifurcation in the main article.

\begin{figure}
    \centering
    \includegraphics[width=0.9\linewidth]{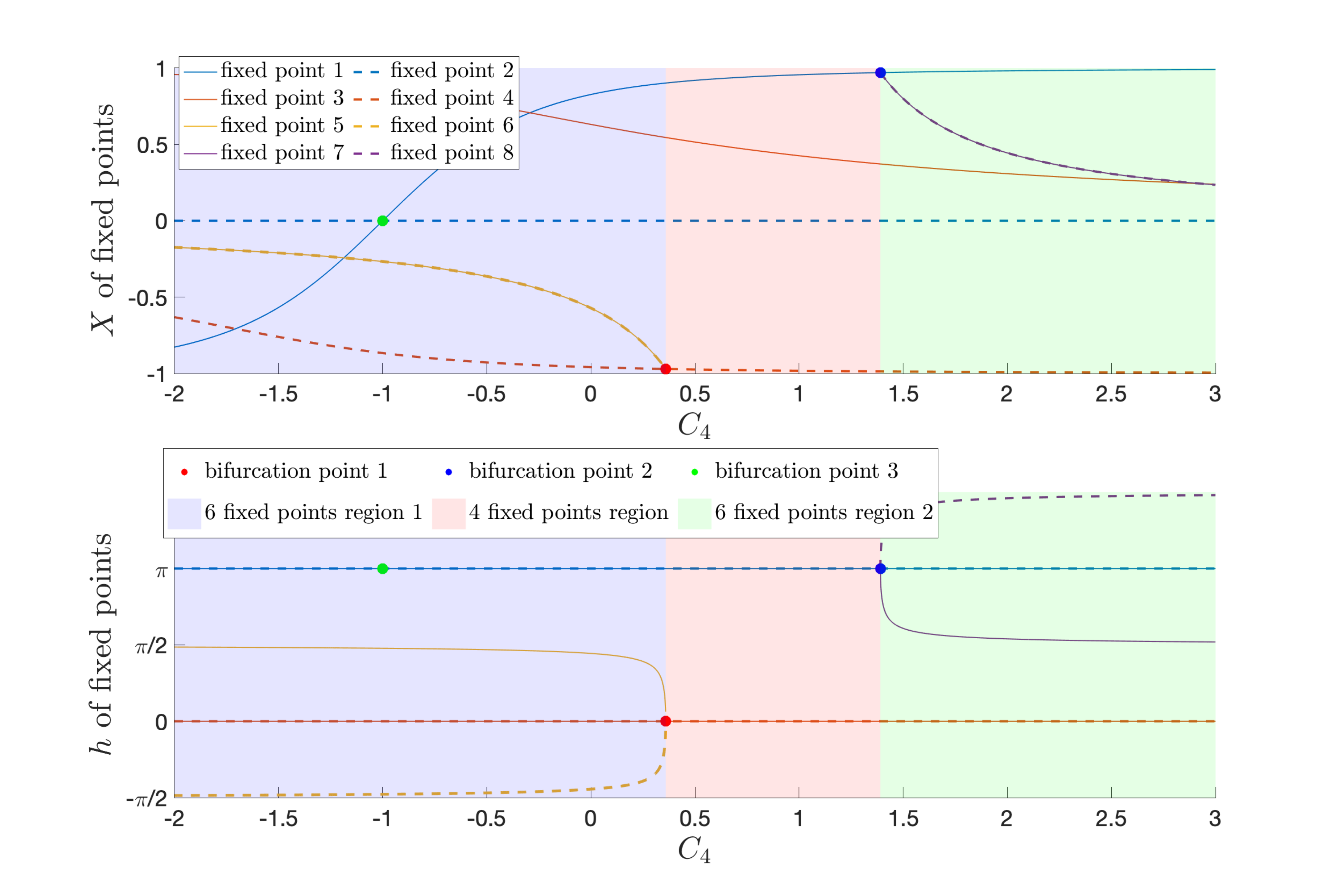}
    \caption{Bifurcation diagram with the varying parameter being $C_4$. Families of fixed points are denoted by different colors. Dots indicate bifurcation locations.
    Fixed point 2,3,5,6,7,8 are centers; fixed point 1 changes from center to saddle at bifurcation point 1; fixed point 4 changes from saddle to center at bifurcation point 2.
    \textbf{Bifurcation point 1} (Hamiltonian pitchfork bifurcation): for $C_4>\nicefrac{7-\sqrt{17}}{8}$, fixed point 4 is a center; for $C_4<\nicefrac{7-\sqrt{17}}{8}$, fixed point 4 is a saddle, there are two more centers (fixed point 5,6).
    \textbf{Bifurcation point 2} (Hamiltonian pitchfork bifurcation): for $C_4<\nicefrac{7+\sqrt{17}}{8}$, fixed point 1 is a center; for $C_4>\nicefrac{7+\sqrt{17}}{8}$, fixed point 1 becomes a saddle, there are two more centers (fixed point 7,8).
    \textbf{Bifurcation point 3} is a saddle-node bifurcation.}
    \label{fig:bifurcations_fixC1C2C3}
\end{figure}

\begin{figure}
\begin{subfigure}{0.3\textwidth}
    \includegraphics[width=\linewidth]{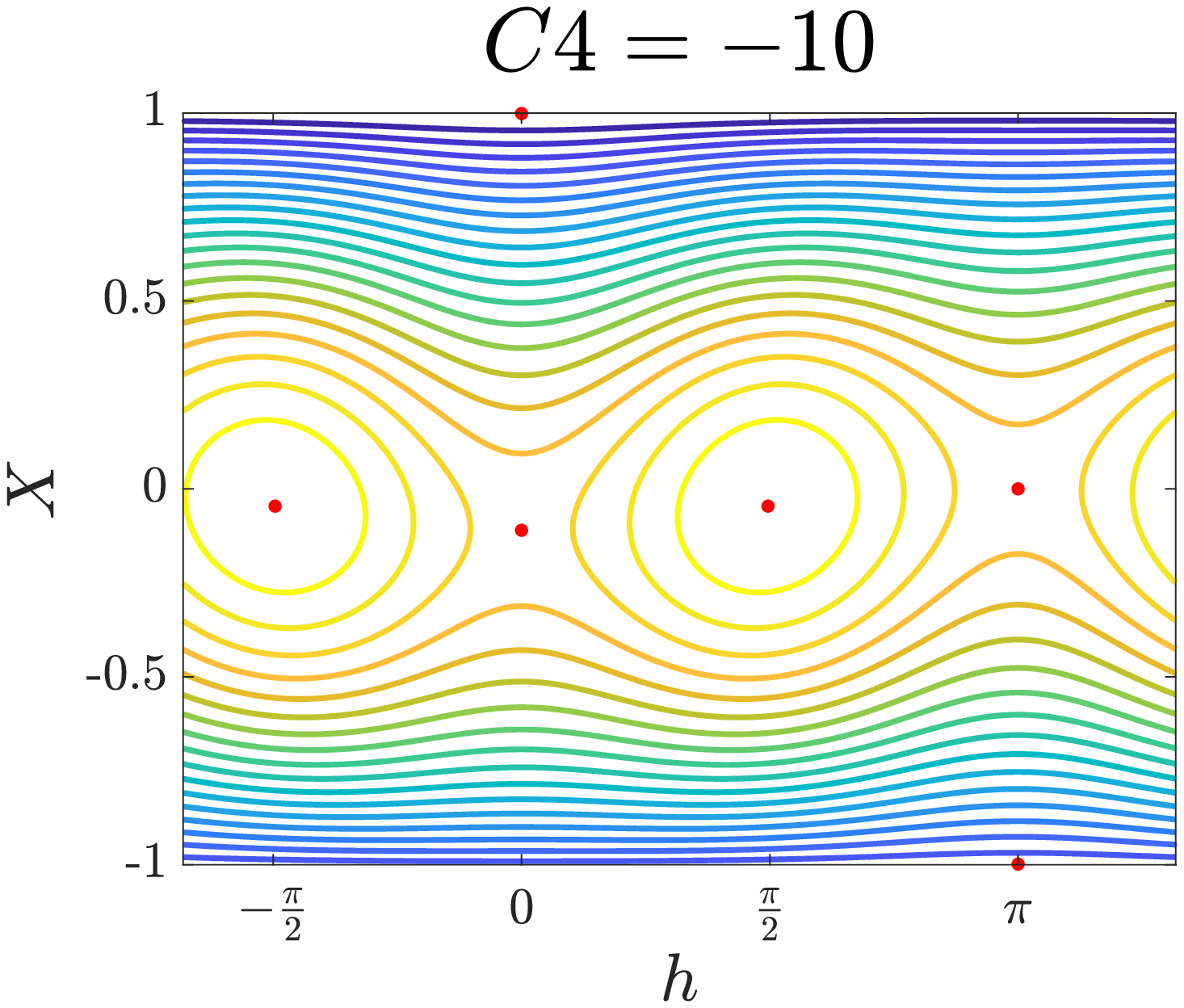}
    \caption{}\label{phase:1}
\end{subfigure}\hspace*{\fill}
\begin{subfigure}{0.3\textwidth}
    \includegraphics[width=\linewidth]{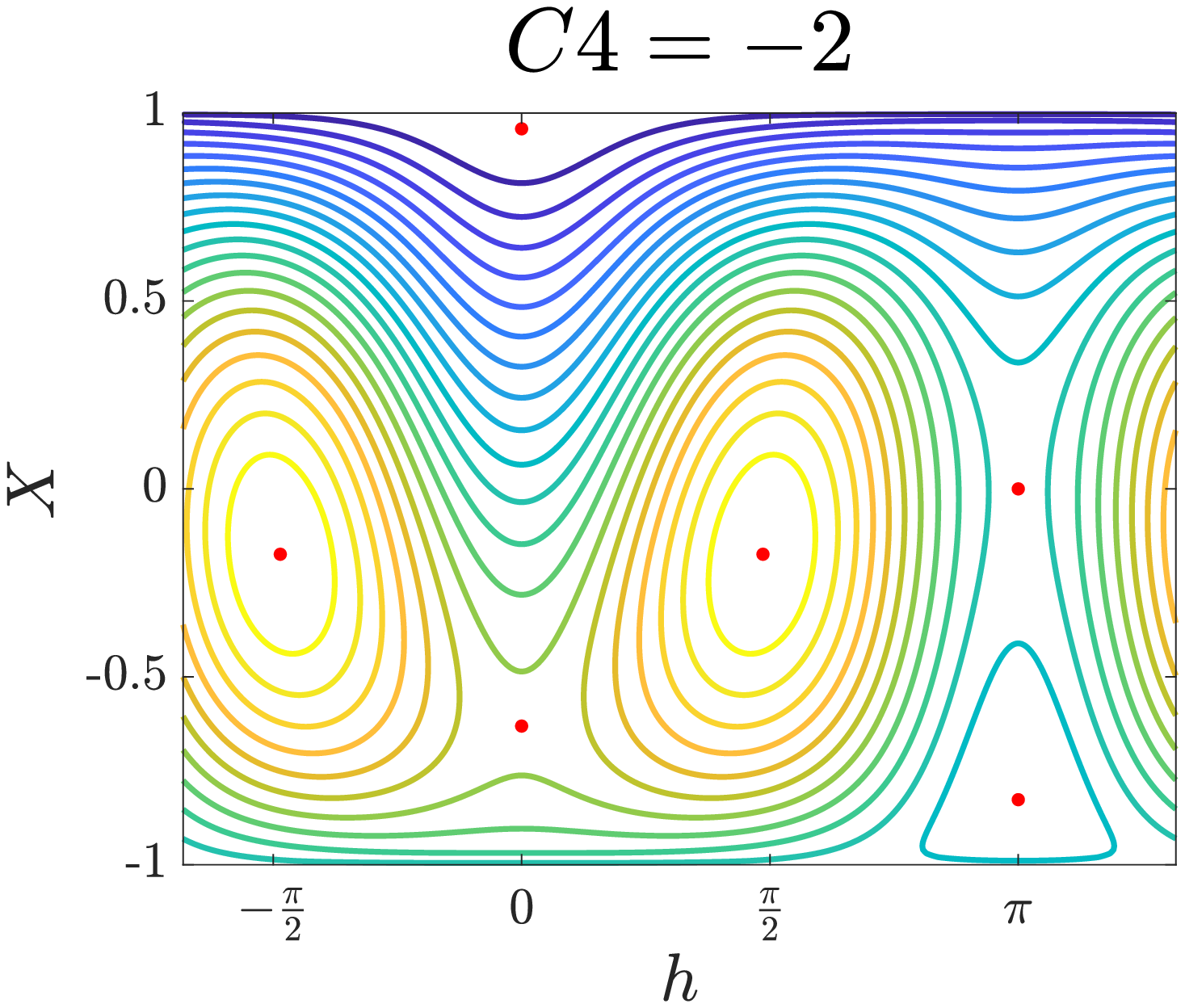}
    \caption{}\label{phase:2}
\end{subfigure}\hspace*{\fill}
\begin{subfigure}{0.3\textwidth}
    \includegraphics[width=\linewidth]{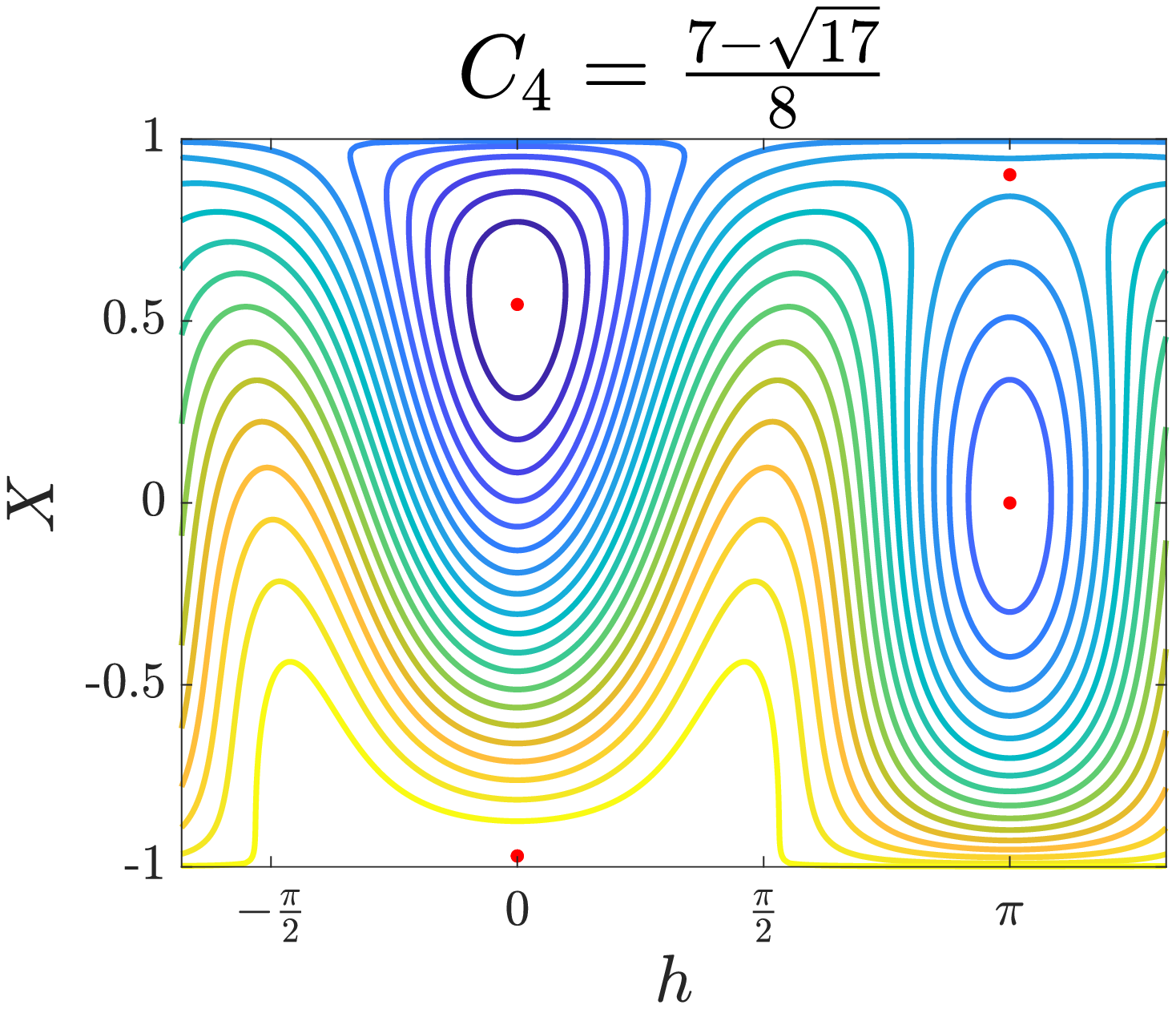}
    \caption{}\label{phase:3}
\end{subfigure}
\medskip
\begin{subfigure}{0.3\textwidth}
    \includegraphics[width=\linewidth]{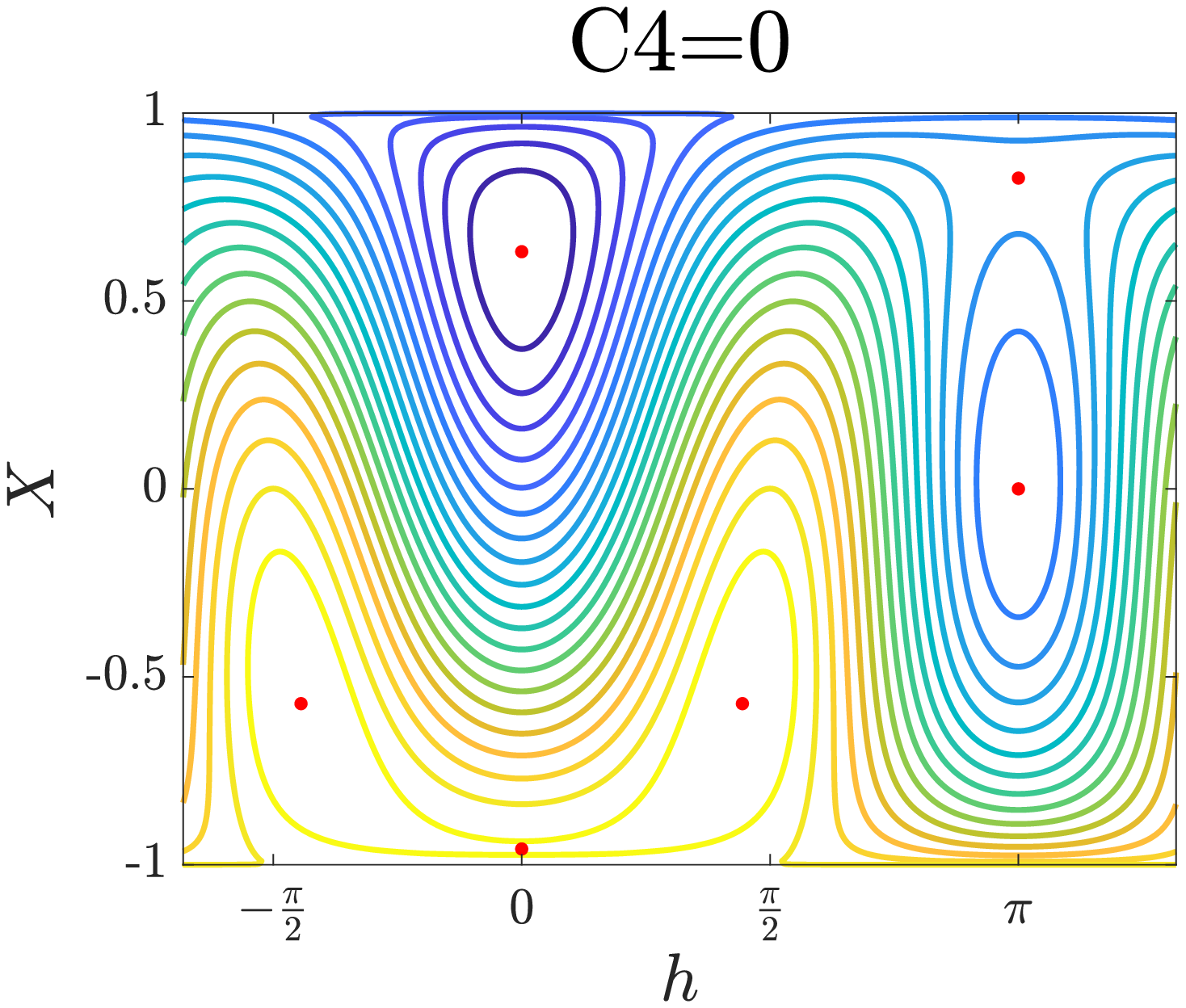}
    \caption{}\label{phase:4}
\end{subfigure}\hspace*{\fill}
\begin{subfigure}{0.3\textwidth}
    \includegraphics[width=\linewidth]{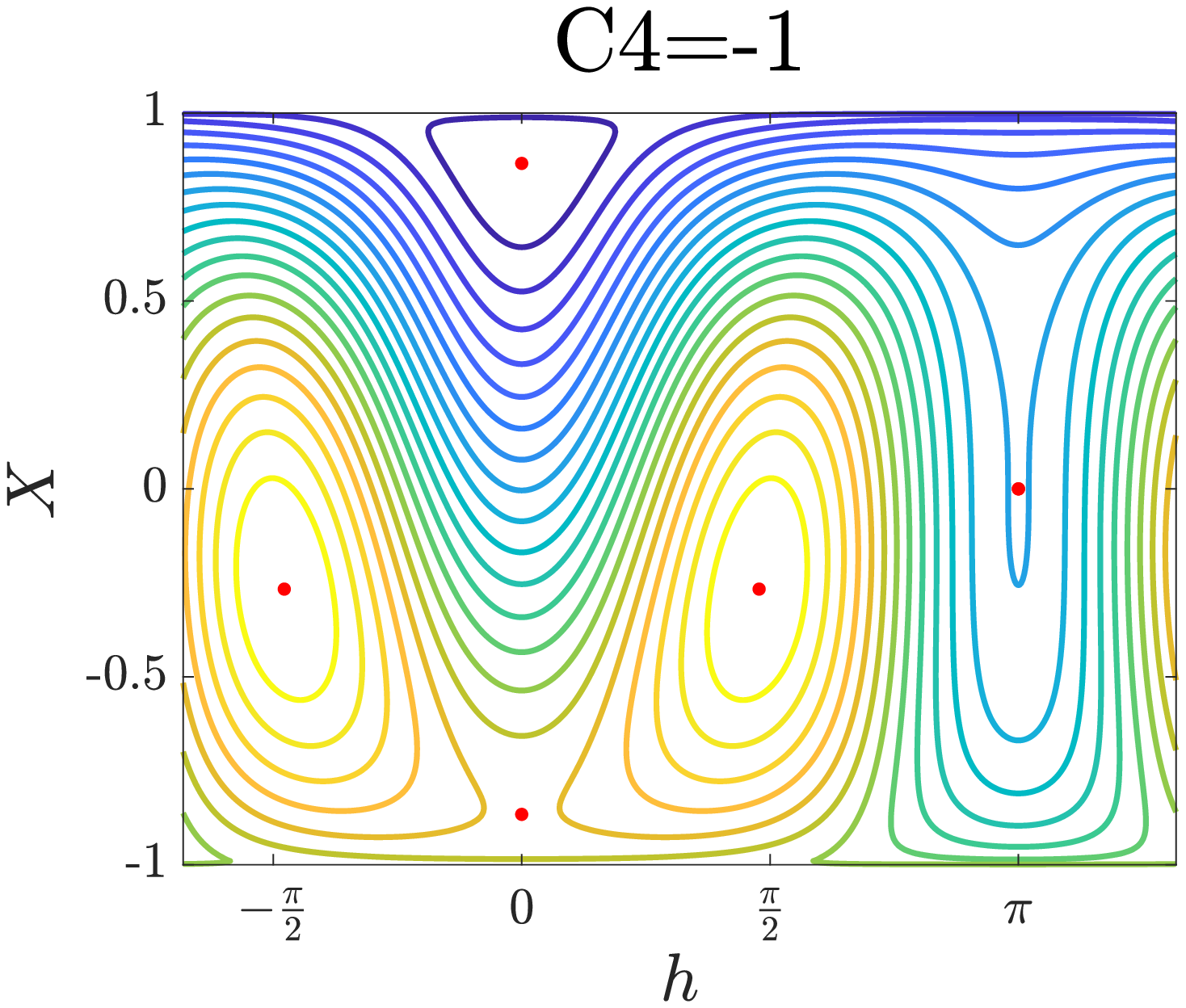}
    \caption{}\label{phase:5}
\end{subfigure}\hspace*{\fill}
\begin{subfigure}{0.3\textwidth}
    \includegraphics[width=\linewidth]{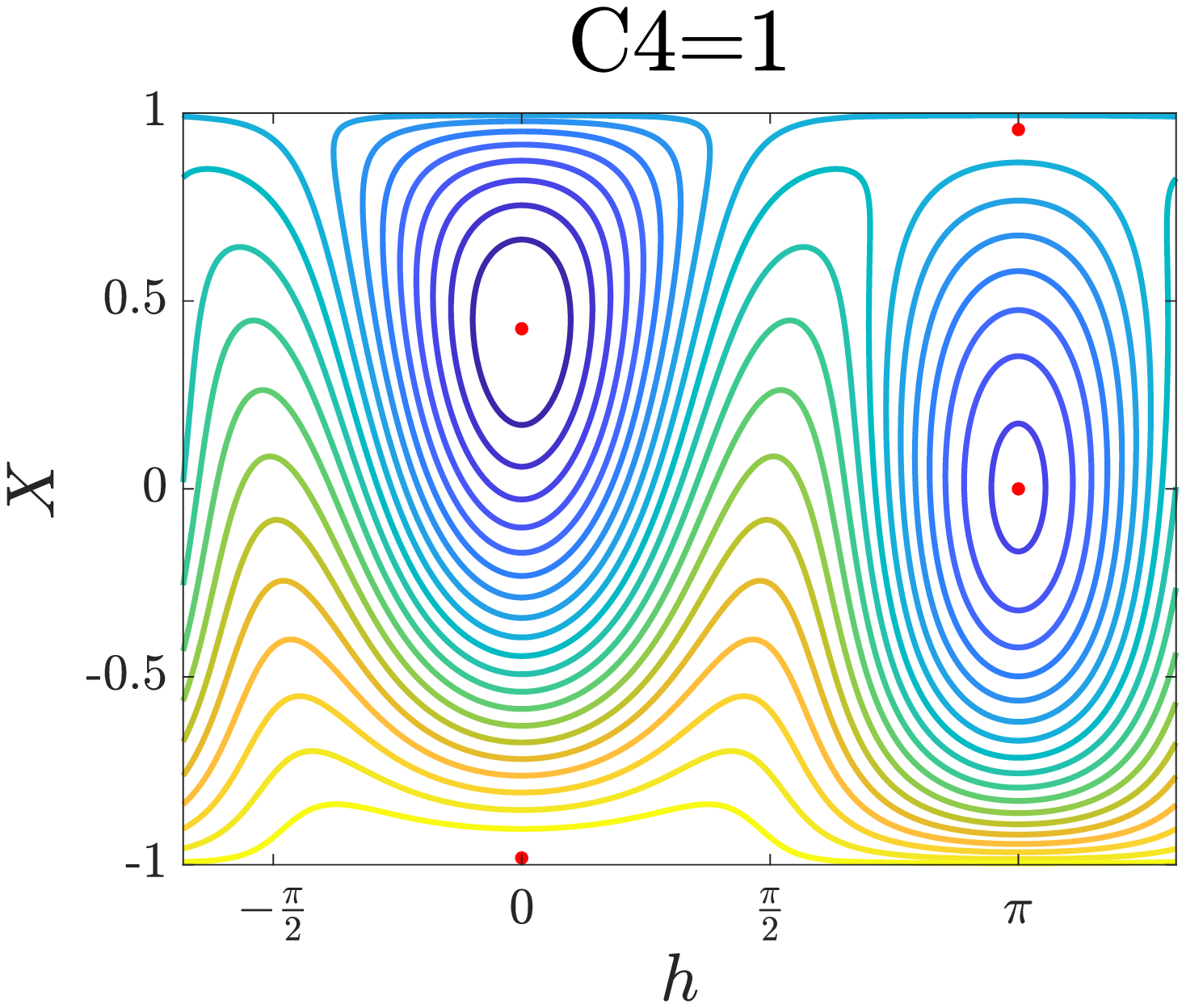}
    \caption{}\label{phase:6}
\end{subfigure}
\medskip
\begin{subfigure}{0.3\textwidth}
    \includegraphics[width=\linewidth]{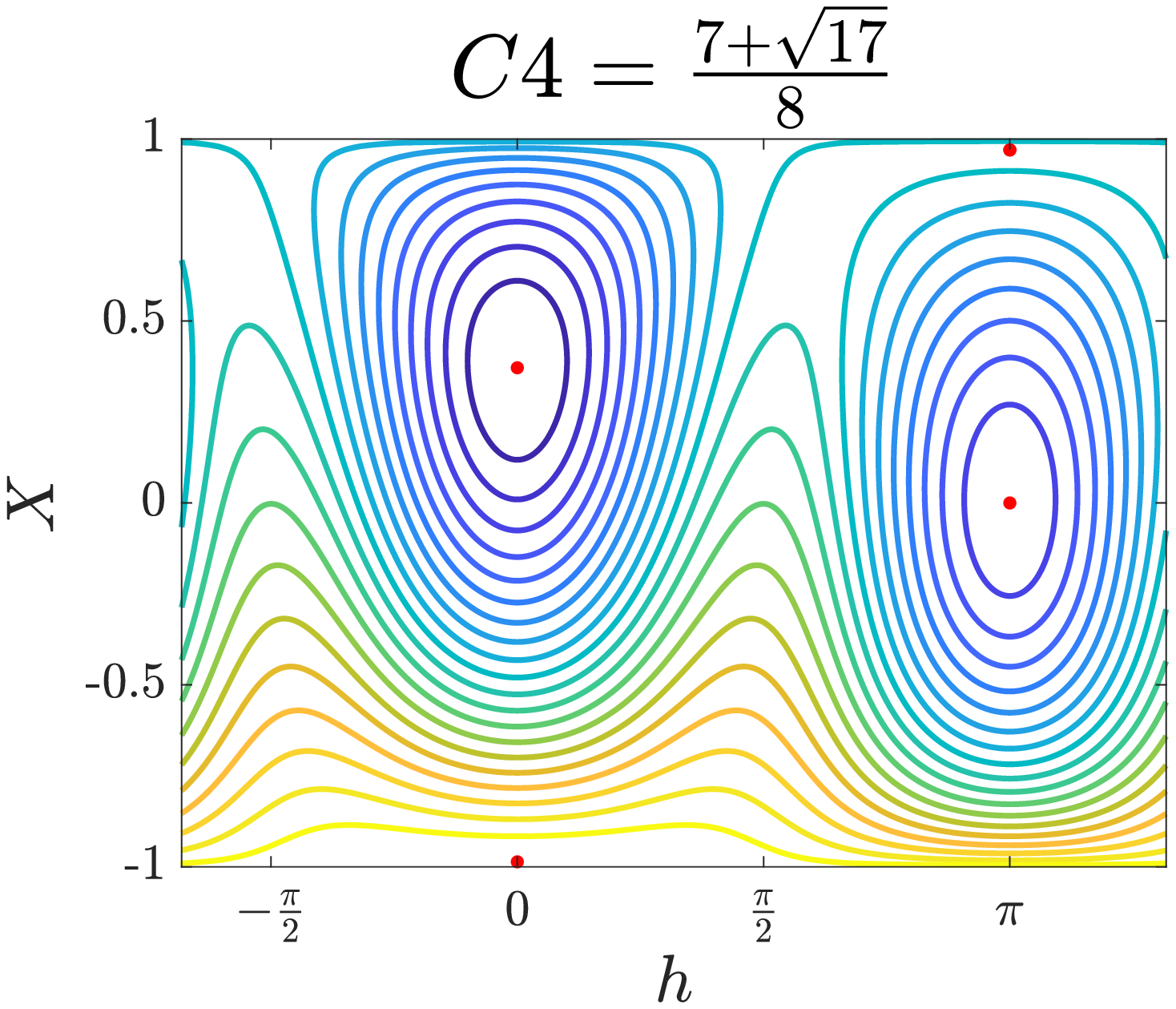}
    \caption{}\label{phase:7}
\end{subfigure}\hspace*{\fill}
\begin{subfigure}{0.3\textwidth}
    \includegraphics[width=\linewidth]{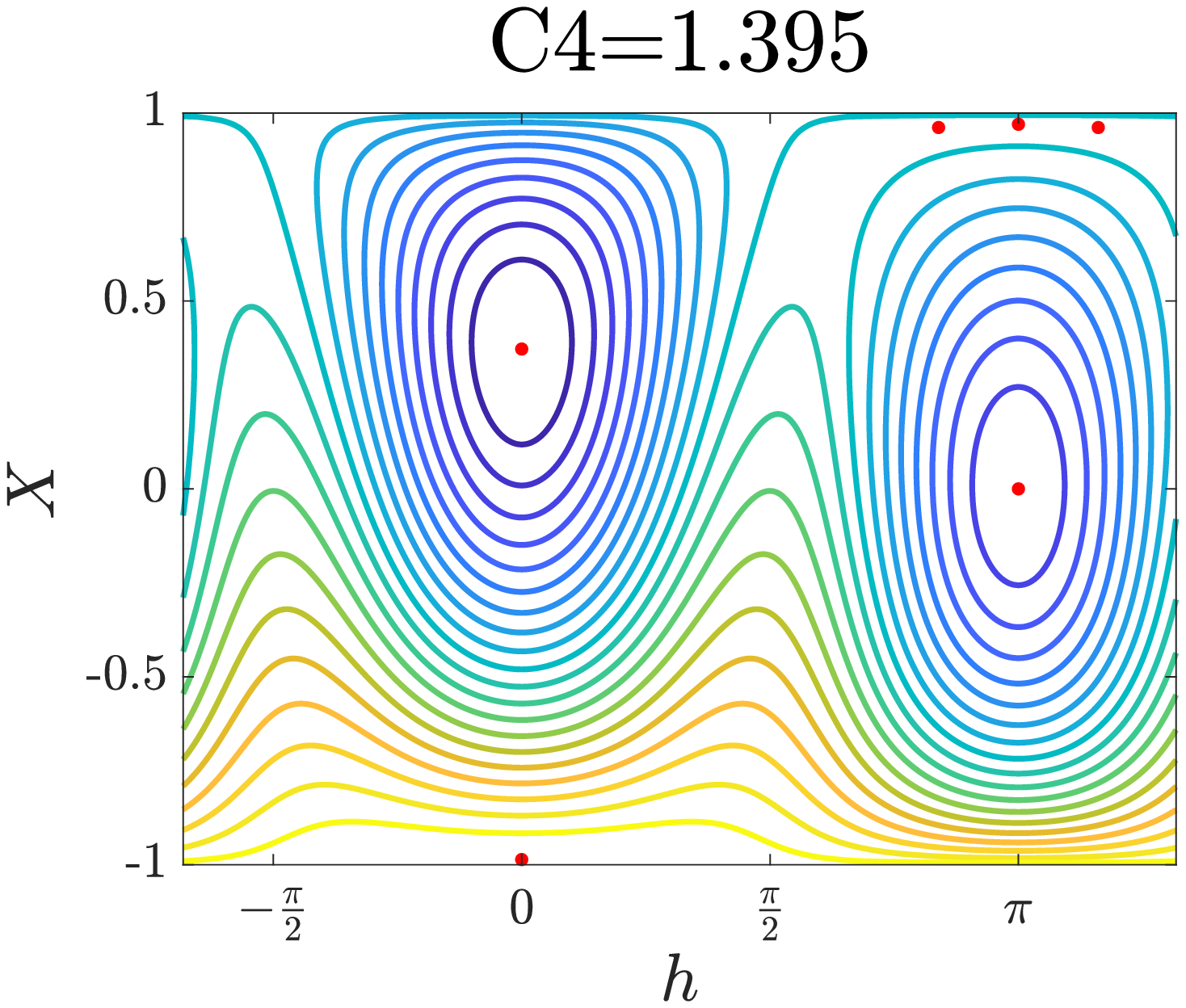}
    \caption{}\label{phase:8}
\end{subfigure}\hspace*{\fill}
\begin{subfigure}{0.3\textwidth}
    \includegraphics[width=\linewidth]{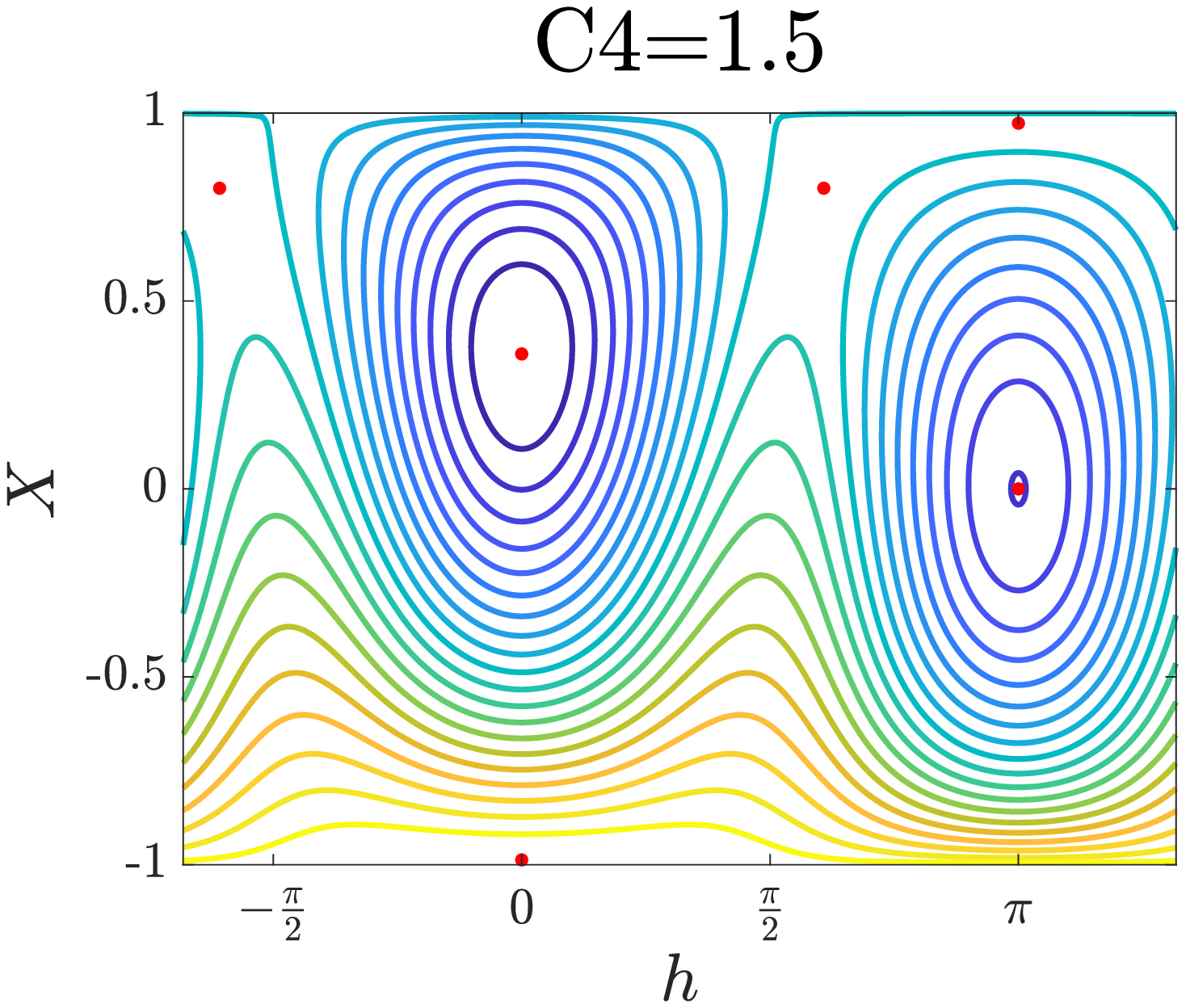}
    \caption{}\label{phase:9}
\end{subfigure}
\medskip
\begin{subfigure}{0.3\textwidth}
    \includegraphics[width=\linewidth]{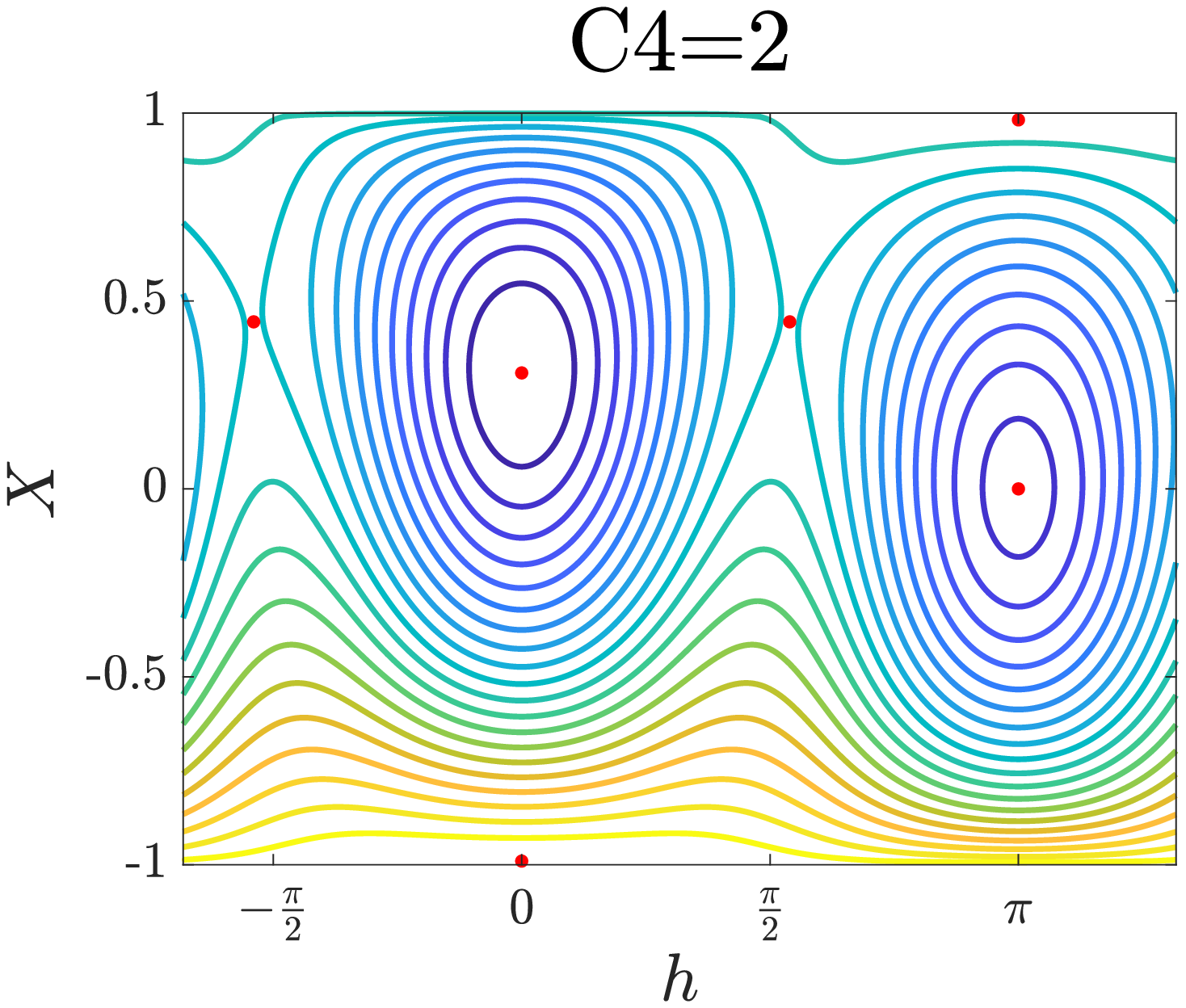}
    \caption{}\label{phase:10}
\end{subfigure}\hspace*{\fill}
\begin{subfigure}{0.3\textwidth}
    \includegraphics[width=\linewidth]{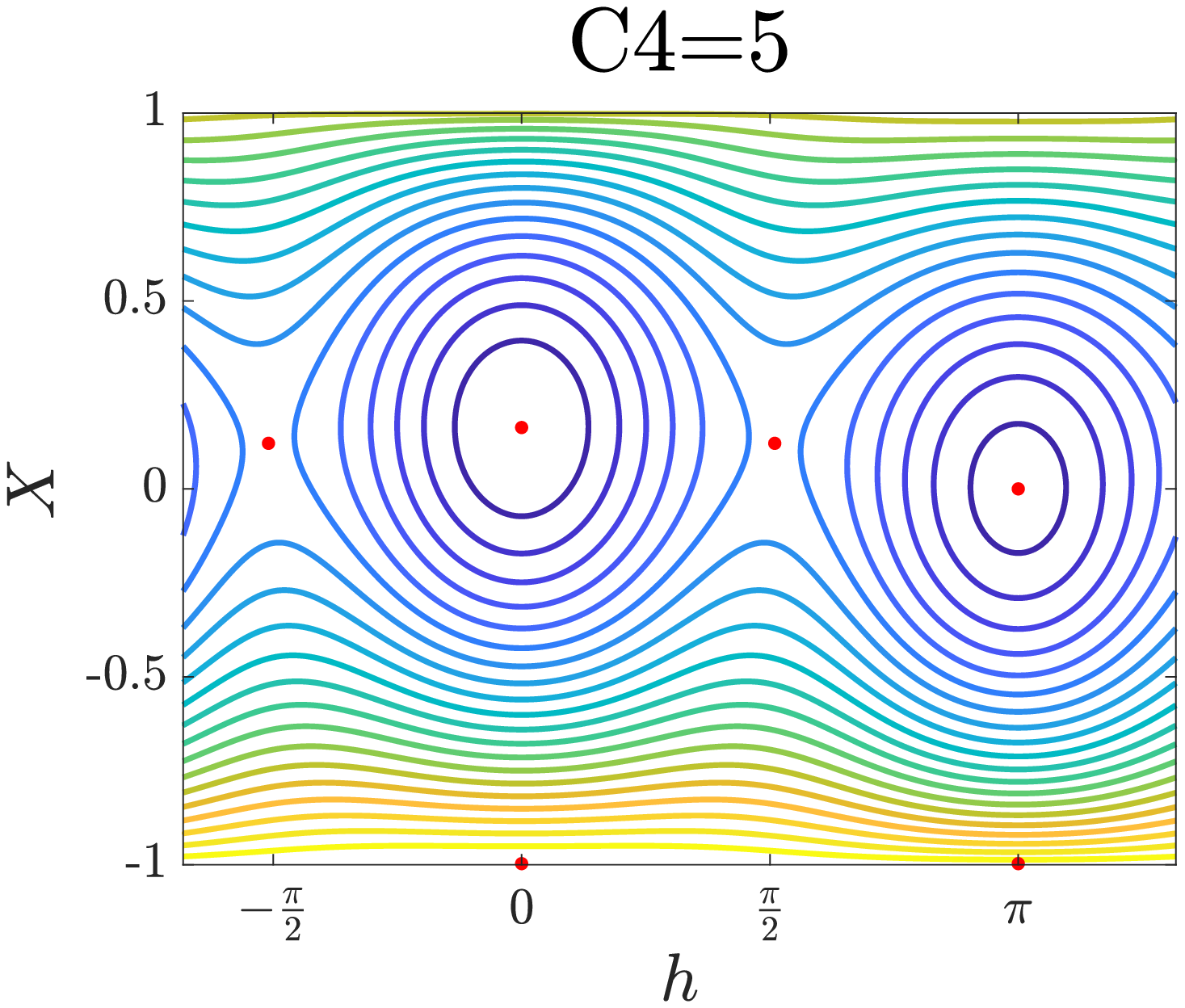}
    \caption{}\label{phase:11}
\end{subfigure}\hspace*{\fill}
\begin{subfigure}{0.3\textwidth}
    \includegraphics[width=\linewidth]{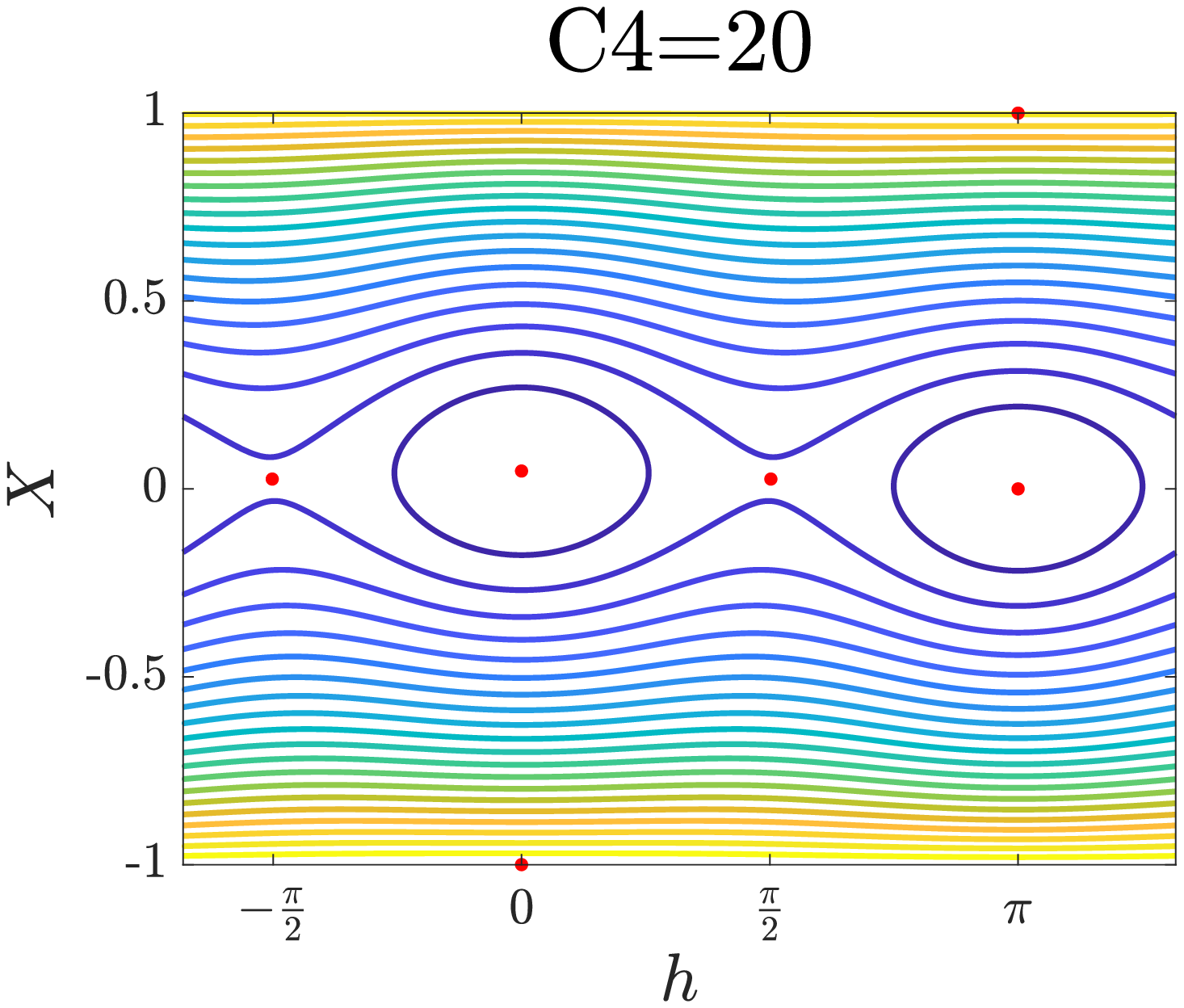}
    \caption{}\label{phase:12}
\end{subfigure}
\medskip
\caption{\cref{phase:1,phase:2,phase:4,phase:8,phase:9,phase:10,phase:11,phase:12} have $6$ fixed points; \cref{phase:5} has $5$ fixed points; \cref{phase:3,phase:5,phase:7} have $4$ fixed points. are phase portraits where bifurcations take place.}\label{fig:bifurcations_phase_portraits}
\end{figure}






\bsp	
\label{lastpage}

\end{document}